\renewcommand{\tocsection}[3]{%
  \indentlabel{\@ifnotempty{#2}{\bfseries\ignorespaces#1 #2\quad}}\bfseries#3}
\renewcommand{\tocsubsection}[3]{%
  \indentlabel{\@ifnotempty{#2}{\ignorespaces#1 #2\quad}}#3}
\def\@tocline#1#2#3#4#5#6#7{\relax
  \ifnum #1>\c@tocdepth 
  \else
    \par \addpenalty\@secpenalty\addvspace{#2}%
    \begingroup \hyphenpenalty\@M
    \@ifempty{#4}{%
      \@tempdima\csname r@tocindent\number#1\endcsname\relax
    }{%
      \@tempdima#4\relax
    }%
    \parindent\z@ \leftskip#3\relax \advance\leftskip\@tempdima\relax
    \rightskip\@pnumwidth plus1em \parfillskip-\@pnumwidth
    #5\leavevmode\hskip-\@tempdima{#6}\nobreak
    \leaders\hbox{$\m@th\mkern \@dotsep mu\hbox{.}\mkern \@dotsep mu$}\hfill
    \nobreak
    \hbox to\@pnumwidth{\@tocpagenum{\ifnum#1=1\bfseries\fi#7}}\par
    \nobreak
    \endgroup
  \fi}
\renewcommand\csname r@tocindent0\endcsname{0pt}
\def\l@subsection{\@tocline{2}{0pt}{2.5pc}{5pc}{}}
\patchcmd{\@setaddresses}{\indent}{\noindent}{}{}
\patchcmd{\@setaddresses}{\indent}{\noindent}{}{}
\patchcmd{\@setaddresses}{\indent}{\noindent}{}{}
\patchcmd{\@setaddresses}{\indent}{\noindent}{}{}
\DeclareMathOperator{\C}{\mathcal{C}}
\newcommand{\dsrk}{\mathrm{d}_{\mathrm{srk}}}
\DeclareMathOperator{\Aut}{Aut}
\DeclareMathOperator{\End}{End}
\DeclareMathOperator{\Gal}{Gal}
\DeclareMathOperator{\rk}{rk}
\DeclareMathOperator{\dd}{d}
\DeclareMathOperator{\GL}{GL}
\theoremstyle{definition}
\newtheorem{theorem}{Theorem}[section]
\newtheorem{lemma}[theorem]{Lemma}
\newtheorem{corollary}[theorem]{Corollary}
\newtheorem{definition}[theorem]{Definition}
\newtheorem{proposition}[theorem]{Proposition}
\newtheorem{example}[theorem]{Example}
\newtheorem{remark}[theorem]{Remark}
\newcommand{\mat}[1]{\mathcal M_{#1}}
\newcommand{\fqn}{\mathbb{F}_{q^n}}
\newcommand{\F}{{\mathbb F}}
\newcommand{\D}{{\mathbb D}}
\newcommand{\NN}{{\mathbb N}}
\newcommand{\KK}{{\mathbb K}}
\newcommand{\gcrd}{\mathrm{gcrd}}
\newcommand{\wt}{\mathrm{wt}}
\newcommand{\LL}{{\mathbb L}}
\newcommand{\fq}{{\mathbb F}_{q}}
\newcommand{\Fq}{{\mathbb F}_{q}}
\newcommand{\K}{{\mathbb K}}
\newcommand{\N}{\mathrm{N}}
\newcommand{\lid}{\mathcal{I}_{\ell}}
\newcommand{\rid}{\mathcal{I}_{r}}
\newcommand{\FF}{\mathbf{F}}
\newcommand{\df}{\mathrm{d}_{\FF}}
\newcommand{\wf}{\mathrm{wt}_{\FF}}
\newcommand{\RRH}{R/RH_{\FF}(x^n)}
\newcommand{\dF}{\mathrm{d}_{\FF}}
\newcommand{\Hf}{H_{\FF}}
\newcommand{\lclm}{\mathrm{lclm}}
\newcommand{\Cen}{\mathrm{Cen}}
\title{Skew polynomial representations of matrix algebras and applications to coding theory}
\date{}
\author[A. Neri]{Alessandro Neri}
\address{Alessandro Neri,  \textnormal{Department of Mathematics and Applications ``R. Caccioppoli'', University of Naples Federico II, Via Cinta, Monte Sant'Angelo, 80126 Naples, Italy}}
\email{alessandro.neri@unina.it}
\author[P. Santonastaso]{Paolo Santonastaso}
\address{Paolo Santonastaso, \textnormal{Dipartimento di Matematica e Fisica,
Universit\`a degli Studi della Campania ``Luigi Vanvitelli'',
I--\,81100 Caserta, Italy\newline
Dipartimento di Meccanica, Matematica e Management, 
Politecnico di Bari, 
70125 Bari, Italy \\}}
\email{paolo.santonastaso@poliba.it}
\subjclass[2020]{16S36, 16S50, 11T71, 94B05} 
\keywords{Skew polynomial ring, Matrix algebra, Sum-rank metric code, MDS code}
\begin{document}

\begin{abstract}
 We extend the existing skew polynomial representations of matrix algebras which are direct sum of matrix spaces over division rings. In this representation, the sum-rank distance between two tuples of matrices is captured by a weight function on their associated skew polynomials, defined through degrees and greatest common right divisors with the polynomial that defines the representation. We exploit this representation to construct new families of maximum sum-rank distance (MSRD) codes over finite and infinite fields, and over division rings. These constructions generalize many of the known existing constructions of MSRD codes as well as of optimal codes in the rank and in the Hamming metric. As a byproduct, in the case of finite fields we obtain new families of MDS codes which are linear over a subfield and whose length is close to the field size.
\end{abstract}

\maketitle

\tableofcontents

\section{Introduction}

\bigskip

\noindent \textbf{Context.} Since its rise, coding theory has always benefited from algebraic and geometric tools, which influenced its development for what concerns explicit code constructions, encoding and decoding algorithms, and theoretical insights into the properties of codes.
In the classical framework of codes endowed with the Hamming metric, the most important example is given by Reed-Solomon codes \cite{reed1960polynomial}. They are defined as evaluation of polynomials of bounded degree on pairwise distinct elements. This construction represents, on the one hand, the most prominent family of maximum distance separable (MDS) codes, that is, optimal codes with respect to the Singleton bound, and, on the other hand, their algebraic structure naturally allowed to develop efficient decoding algorithms. Since then, there have been only few sporadic examples of MDS codes, until the recent construction of twisted Reed-Solomon codes given by Beelen, Puchinger and Rosenkilde \cite{beelen2022twisted}. This construction was inspired by a family of optimal codes in the rank metric - also known as \emph{maximum rank distance (MRD) codes} - proposed by Sheekey \cite{sheekey2016new}. Rank metric codes are defined as subsets of the $m\times m$ matrix space $M_m(\F)$ over a field $\F$, equipped with the \emph{rank distance}, given by
$$\dd_{\rk}(A,B)=\rk(A-B), \qquad \mbox{ for  } A,B \in M_m(\F).$$
The increase of interest in codes with the rank metric was due to their application in random network coding \cite{silva2008rank}, although they have originally been introduced in the late 70's by Delsarte \cite{delsarte1978bilinear}, and then independently by Gabidulin \cite{gabidulin1985theory}, who both provided the first family of MRD codes. These are now known as Delsarte-Gabidulin codes, and they can be viewed as spaces of all the matrices corresponding to skew polynomials of bounded degree. Also Sheekey's construction can be seen as the space of skew polynomials of bounded degree with a relation between the leading and the last coefficient. A similar idea was used by Trombetti and Zhou, who gave a novel construction of MRD codes \cite{trombetti2018new}.

\medskip 

Skew polynomials are  polynomials endowed with a noncommutative multiplication, in which an automorphism of the field acts on the coefficients and have the property that the degree of the product of two polynomials equals the sum of their respective degrees. They were used explicitly in coding theory for the first time in the context of convolutional codes in \cite{gluesing2004cyclic}, although they were implicitly used already in the works of Piret \cite{piret2003structure} and Roos \cite{roos2003structure}. A few years later, their use in the construction of codes with the Hamming metric \cite{boucher2007skew} raised the popularity of skew polynomial rings, opening a new avenue of research in algebraic coding theory.

 \medskip 
Skew polynomials have been shown to naturally encode also other metric spaces in coding theory. In \cite{nobrega2010multishot} the \emph{sum-rank metric} has been introduced for modeling multishot network coding. This metric is defined over the space of $t$-tuples of $m\times m$ matrices $\bigoplus_{i=1}^tM_m(\F)$ over a field $\F$, as
$$\dsrk((X_1,\ldots,X_t),(Y_1,\ldots,Y_t))=\sum_{i=1}^t\dd_{\rk}(X_i,Y_i), \qquad \mbox{for } (X_1,\ldots,X_t),(Y_1,\ldots,Y_t)\in (M_m(\F))^t,$$
and it generalizes simultaneously the rank and the Hamming metric. Subspaces of the metric space $((M_m(\F))^t,\dsrk)$ are called \emph{sum-rank metric codes}, and optimal codes are known as \emph{maximum sum-rank distance (MSRD) codes}, where optimality is considered with respect to a Singleton-like bound on the parameters of a sum-rank metric code. With the purpose of constructing MSRD codes, Mart{\'\i}nez-Pe{\~n}as exploited skew polynomial rings \cite{Martinez2018skew}, relying on results by Lam and Leroy \cite{lam1988vandermonde}. His construction, known as \emph{linearized Reed-Solomon codes}, consists of the space of all the skew polynomials of bounded degree in a suitable quotient ring. This framework was developed in \cite{neri2021twisted}, where  generalizations of Sheekey's and Trombetti-Zhou's constructions were proposed, leading to new MSRD code families. 

\medskip 
The natural approach for representing matrix algebras and for constructing MRD codes was generalized by Sheekey in \cite{sheekey2020new}, who considered a wide class of quotient ideals generated by an irreducible polynomial, leading to new families of MRD codes. This idea has been further generalized in \cite{lobillo2025quotients}. In both these works, the results also provide new  semifield's constructions. Semifields are finite not necessarily associative division algebras, which have been shown to be in one-to-one correspondence with $\F_q$-linear MRD codes in $M_m(\F_q)$, whose nonzero matrices have all rank $m$; see  e.g. \cite{de2015algebraic}.
 
\bigskip

\noindent \textbf{Our contribution.} In this paper, we develop a more general skew polynomial framework for studying the matrix algebra  $\bigoplus_{i=1}^t M_m(\D_i)$ of direct sum of $t$ matrix spaces over some division algebras $\D_1,\ldots,\D_t$. This is done working in the ring of skew polynomials $R=\LL[x;\sigma]$, where $\sigma$ is the generator of the Galois group of a certain cyclic Galois field extension $\LL/\K$. In this framework, we develop the general concept of $(s,m)$-admissible tuple of polynomials. These are tuples $\FF=(F_1,\ldots,F_t)$ of irreducible polynomials $F_i(y)\in\K[y]$ of the same degree $s$, such that the number of irreducible factors of each $F_i(x^n)$ in $R$ is exactly $m$, which is a divisor of $n$. Using the notion of admissible tuples, we can derive an algebra isomorphism  
\begin{equation}\label{eq:isomorphism_intro}\bigoplus_{i=1}^t M_m(\D_i)\cong R/RH_\FF(x^n),\end{equation} where
$H_\FF(x^n)=F_1(x^n)\cdots F_t(x^n),$
and each $\D_i$ is a division algebra over the splitting field of $F_i(y)$ over $\K$; see Theorem \ref{th:multiplepolynomialiso}. The advantage of representing the matrix algebra $\bigoplus_{i=1}^t M_m(\D_i)$ via the skew polynomial ring $R/RH_\FF(x^n)$ is that one can read the sum-rank distance between two $t$-tuples of matrices directly from their polynomial representation via the \emph{$\FF$-weight} of their difference: If $a,b\in R/RH_\FF(x^n)$, then
$$\dF(a,b)=\wf(a-b):=\frac{1}{s}(\deg(H_F(x^n))-\deg(\gcrd(a-b,H_{\FF}(x^n)))).\footnote{Here, by $\gcrd(a-b, H_{\FF}(x^n))$ we mean the \emph{greatest common right divisor} 
between the skew polynomial $H_{\FF}(x^n)$ and any skew polynomial in the equivalence class of $a-b$ modulo $H_{\FF}(x^n)$.
}$$
In particular, in Theorem \ref{thm:main_isometry} we show that the isomorphism in \eqref{eq:isomorphism_intro} induces an isometry between the metric spaces
$$ \left(\bigoplus_{i=1}^t M_m(\D_i),\dsrk\right) \mbox{ and } \left( R/RH_\FF(x^n),\dF\right).$$
Exploiting this isometry, we can construct two new infinite families of MSRD codes, generalizing simultaneously linearized Reed-Solomon codes introduced in \cite{Martinez2018skew}, twisted linearized Reed-Solomon codes \cite{neri2021twisted}, and their counterparts in the rank metric \cite{delsarte1978bilinear,gabidulin1985theory,sheekey2016new,sheekey2020new,trombetti2018new} and in the Hamming metric \cite{reed1960polynomial,beelen2022twisted,neri2021twisted}, putting all of them under the same  general umbrella; see Theorem \ref{th:newMRDl>1} and Theorem \ref{th:extendtrombmrd}.

We then focus on the case of finite fields, which is of particular interest for the practical applications in coding theory. The second construction of MSRD codes can be slightly extended in this case; see Theorem \ref{th:finiteextendtrombmrd}. However, the limitation of MSRD constructions over finite fields concerns the maximum number of matrix blocks $t$ that a code can have. For this reason, we explicitly compute the number of blocks that our two constructions can reach; see Theorem \ref{thm:length_S} and Theorem \ref{thm:length_D}.

Of great importance is the  specialization to MSRD construction of codes whose matrix blocks have size $1$, which coincides with MDS codes constructions in the Hamming metric case. Due to high relevance of this case, we dedicate to it a  subsection, explicitly deriving two new families of MDS codes over a finite field $\F$ that are linear over a subfield $\K$ and whose length is significantly large, of the order of
$$\mathcal O(|\F|/[\F:\K]);$$
see Theorem \ref{thm:twisted_MDS} and Theorem \ref{thm:twistedTZ_MDS} for the precise values.

We conclude by studying the equivalence classes of the MSRD codes we construct, using tools introduced in \cite{santonastaso2025invariants} concerning the nuclear parameters of a codes, which include idealizers, center and centralizer of a sum-rank metric code. We show that our constructions are inequivalent to all the previously known constructions, for infinite sets of codes parameters; see Theorem \ref{th:inequivalenceold}.

\bigskip

\noindent \textbf{Outline.} The paper is structured as follows. Section \ref{sec:prel} contains the preliminaries on matrix algebras, sum-rank metric codes, and their skew polynomial representations. In Section \ref{sec:matrixalgebra_repr} we extend the known representations to a wider framework, using skew polynomial rings and the notion of admissible tuples. We then exploit this representation to construct two new families of maximum sum-rank distance (MSRD) codes in Section \ref{sec:MSRDconstr}. Section \ref{sec:finite_fields} is dedicated to specializing our results over finite fields, where we also improve one of our results. We also focus on the case of diagonal matrix algebras, yielding two new families of additive MDS codes, whose length is very competitive with the few known general constructions. Moreover, we show that our codes are inequivalent to the previously known codes for infinitely many parameters.  

\bigskip

\section{Preliminaries}\label{sec:prel}

In this section, we recall the notions and results that we will use throughout the paper. We will recap the basics of sum-rank metric codes and skew polynomials, in particular for what concerns matrix algebra representations via skew polynomial quotient rings.

We fix now the notation that we will use throughout the rest of the paper. For us $q$ is a prime power and $\Fq$ is the finite field with $q$ elements. We let $t$ be a positive integer. If $\D_1,\ldots,\D_t$ are division rings, we consider the direct sum of matrix spaces
$$\bigoplus_{i=1}^t M_{m}(\D_i),$$
where $M_{m}(\D_i)$ denotes the ring of square matrices of order $m$ having coefficients in $\D_i$.

\subsection{Sum-rank metric codes}
We start by considering the notion of sum-rank metric codes as subsets in $\bigoplus_{i=1}^t M_{m}(\D_i)$, that is in the context of tuples of matrices having entries over a division ring (and as a particular case, over a field). 

Let $\D$ be a division ring. The \textbf{rank} of a matrix $A \in M_m(\D)$ is the dimension of the right $\D$-module generated by the columns of $A$ and it is denoted by $\rk(A)$.
We endow the space $\bigoplus_{i=1}^t M_{m}(\D_i)$ with a distance function, called the \textbf{sum-rank distance},
\[
\dsrk : \bigoplus_{i=1}^t M_{m}(\D_i) \times \bigoplus_{i=1}^t M_{m}(\D_i) \longrightarrow \mathbb{N}  
\]
defined by
\[
\dsrk(X,Y) \coloneqq \sum_{i=1}^t \rk(X_i-Y_i),
\]
for every $X=(X_1,\ldots,X_t),Y=(Y_1,\ldots,Y_t) \in \bigoplus_{i=1}^t M_{m}(\D_i)$.

\begin{definition}
 A \textbf{sum-rank metric code} $\mathcal{C}$ is a subset of $\bigoplus_{i=1}^t M_{m}(\D)$ endowed with the sum-rank distance.
The \textbf{minimum sum-rank distance} of a sum-rank code $\mathcal{C}$ is defined as usual via $$\dsrk(\mathcal{C})\coloneqq\min\{\dsrk(X,Y)\colon X,Y \in \C,  X\neq Y\}.$$ 
\end{definition}
If $\K$ is a subfield of $\bigcap\limits_{i=1}^t\D_i$, a code $\C$ is said to be \textbf{$\K$-linear} if it is a $\K$-subspace of $\bigoplus_{i=1}^t M_{m}(\D_i)$. A sum-rank metric code $\C$ of $\bigoplus_{i=1}^t M_{m}(\D_i)$ must satisfy the following \emph{Singleton-like bound}.

 \begin{theorem}[see e.g. {\cite[Proposition 34]{Martinez2018skew}}]\label{thm:sing_bound}
     Let $\C$ be a $\K$-linear sum-rank metric code in $\bigoplus_{i=1}^tM_m(\D_i)$ having minimum distance $d$. Assume that $[\D_i:\K]=b$, for every $i$. Then
\begin{equation}\label{eq:singletonbound}\dim_{\K}(\C)\le bm(tm-d+1).
\end{equation}
 \end{theorem}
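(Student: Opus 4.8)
The plan is to mimic the standard Singleton-bound argument for rank-metric and sum-rank-metric codes: a code with large minimum distance $d$ must, after "puncturing" or restricting to a subset of coordinates, remain injective on a suitable projection, and counting dimensions of the image gives the bound. Concretely, I would proceed as follows.

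First, reduce to the case where the code achieves its minimum distance against the zero matrix tuple, which is immediate from $\K$-linearity: for a $\K$-linear code every nonzero codeword $X=(X_1,\dots,X_t)$ satisfies $\sum_{i=1}^t \rk(X_i)\ge d$. The goal is to show $\dim_\K(\C)\le bm(tm-d+1)$. The idea is to build a $\K$-linear map from $\C$ to a space of dimension exactly $bm(tm-d+1)$ and prove it is injective.

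Second, construct the puncturing map. Write $d-1 = q_0 m + r_0$ with $0\le r_0 < m$; we will "delete" $q_0$ full blocks $M_m(\D_i)$ and, in one further block, restrict the columns to only $m-r_0$ of them. Formally, consider the projection
\[
\pi:\ \bigoplus_{i=1}^t M_m(\D_i)\ \longrightarrow\ \bigoplus_{i=q_0+2}^{t} M_m(\D_i)\ \oplus\ M_{m,\,m-r_0}(\D_{q_0+1}),
\]
where $M_{m,m-r_0}(\D_{q_0+1})$ denotes $m\times(m-r_0)$ matrices over $\D_{q_0+1}$, obtained by forgetting the first $q_0$ blocks entirely and keeping only the last $m-r_0$ columns of block $q_0+1$. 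The key claim is that $\pi|_\C$ is injective: if $X\in\C$ lies in $\ker(\pi|_\C)$, then $X_i=0$ for $i>q_0+1$, $\rk(X_{q_0+1})\le r_0$ (its nonzero columns are among the first $r_0$), and each of the remaining $q_0$ blocks has $\rk(X_i)\le m$, so $\dsrk(X,0)=\sum_i\rk(X_i)\le q_0 m + r_0 = d-1 < d$, forcing $X=0$. Hence $\dim_\K(\C) \le \dim_\K(\mathrm{image\ space}) = (t-q_0-1)\cdot bm^2 + bm(m-r_0)$, using $[\D_i:\K]=b$ and $\dim_\K M_m(\D_i)=bm^2$, $\dim_\K M_{m,m-r_0}(\D_i)=bm(m-r_0)$.

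Third, tidy up the arithmetic: $(t-q_0-1)bm^2 + bm(m-r_0) = bm\big((t-q_0-1)m + m - r_0\big) = bm\big(tm - q_0 m - r_0\big) = bm(tm-(d-1)) = bm(tm-d+1)$, which is exactly the claimed bound. One should also handle the edge cases $r_0=0$ (no partial block, delete $q_0$ full blocks and the injectivity/count go through verbatim) and $d-1\ge tm$ (bound is trivial or the code is zero-dimensional).

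**The main obstacle** I anticipate is not conceptual but bookkeeping over division rings: one must be careful that "rank" is the right-module column rank as defined in the excerpt, that $\dim_\K M_m(\D_i)=bm^2$ uses $[\D_i:\K]=b$ correctly (left vs. right dimension coincide here since $\K$ is central, or at least contained in each $\D_i$ as a subfield), and that the column-puncturing genuinely can only decrease rank — i.e. deleting columns of a matrix over $\D$ reduces the dimension of the right-module column span, which is elementary but worth a line. Everything else is the classical Singleton argument transported to the sum-rank setting, and since the statement is only quoted from \cite[Proposition 34]{Martinez2018skew}, I would in fact just cite it; the above is the self-contained proof one would write if needed.
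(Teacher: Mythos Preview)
The paper does not prove this statement at all; it simply cites it from \cite[Proposition~34]{Martinez2018skew}, exactly as you noted at the end of your proposal. Your self-contained puncturing argument is the standard one and is correct: the division with remainder $d-1=q_0m+r_0$, the column-deletion projection, the injectivity check via the minimum-distance hypothesis, and the arithmetic all go through as written, and your caveats about column rank over a division ring and the edge case $r_0=0$ are the right things to flag.
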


 \begin{definition}
    A sum-rank metric code in $\bigoplus_{i=1}^tM_m(\D_i)$, with $[\D_i:\K]=b$, for every $i$ attaining the bound in Eq. \eqref{eq:singletonbound} is said to be a \textbf{Maximum Sum-Rank Distance code}, or \textbf{MSRD} code in $\bigoplus_{i=1}^tM_m(\D_i)$. 
\end{definition}

\subsection{Skew polynomial rings} 
In this section we give a brief overview of the main features of skew polynomial rings, which includes the crucial tools we will need for deriving our main results. 
For further background and facts on skew polynomial rings the reader is referred to  \cite{jacobson2009finite,goodearl2004introduction,gomez2019computing}. 

Let $\LL/\K$ be a field extension of degree $n$  which is Galois, whose Galois group is cyclic, and let $\sigma\in \Gal(\LL/\K)$ be a generator of $\Gal(\LL/\K)$.  We consider the skew polynomial ring $$R=\LL[x;\sigma]:=\left\{a_rx^r+a_{r-1}x^{r-1}+\ldots+a_0 \,:\, r \in \mathbb N, a_0,\ldots, a_r \in \LL\right\},$$
that is the set of ordinary polynomials whose coefficients are over $\LL$, equipped with the two operations of addition and multiplication, defined as follows. The sum of two skew polynomials is the usual sum, given by
$$\left(\sum_{i=0}^ra_ix^i\right)+\left(\sum_{i=0}^rb_ix^i\right)=\sum_{i=0}^r(a_i+b_i)x^i,$$
while the multiplication is defined for monomials via the simple rule
$$(a_ix^i)\cdot(b_jx^j)=a_i\sigma^i(b_j)x^{i+j},$$
and then extended by distributivity to arbitrary skew polynomials. These rings are also referred to as Ore extensions of $\LL$, named after O. Ore, who was the first to systematically study the general case \cite{ore1933theory}.
Clearly, there is a well-defined notion of \textbf{degree} $\deg(f)$ for a nonzero element $f\in R$, which is defined as for classical  polynomials, and possess the same properties: for every pair of nonzero $f,g\in R$, one has $\deg(fg)=\deg(f)+\deg(g)$ and $\deg(f+g)\leq \max\{\deg(f),\deg(g)\}$. 

The ring $R$ is a left Euclidean domain, that is, for every pair of nonzero $f,g \in R$, there exist unique $q,r \in R$ such that
$$f=qg+r,$$
with $\deg(r)<\deg(g)$ or $r=0$. 
{When $r=0$, we say that 
$g$ \textbf{right-divides} $f$ -- and this is denoted by $g \mid_r f$ -- and that $f$ is a \textbf{left-multiple} of $g$. Therefore, there is a well-defined notion of \textbf{greatest common right divisor} and \textbf{least common left multiple} between two non zero elements $f_1,f_2 \in R$, which are denoted, respectively, by $\gcrd(f_1,f_2)$ and $\lclm(f_1,f_2)$. Note that, if $a=\gcrd(f_1,f_2)$ and $b=\lclm(f_1,f_2)$, then $Rf_1 +Rf_2=Ra$ and $Rf_1 \cap Rf_2=Rb$, for every two nonzero elements $f_1,f_2 \in R$. Here, $Rf_i$ denotes the left ideal generated by $f_i$ in $R$. The associativity of the sum and intersection of left ideals allows the definition of greatest common right divisors and least common left multiples to be extended to any finite set of polynomials in $R$. Specifically, let $f_1,\ldots,f_r \in R$, a greatest common right divisor $\gcrd(f_1,\ldots,f_r)$ and a least common left multiple $\lclm(f_1,\ldots,f_r)$ are defined as generators of the left ideals $Rf_1+\cdots+Rf_r$ and $Rf_1 \cap \cdots \cap Rf_r$, respectively.} As in the commutative case, an element $f \in R$ is said to be \textbf{reducible} in $R$ if 
it can be written in $R$ as a product $f = gh$, with $g, h \in R$ of positive 
degree; otherwise, it is said to be \textbf{irreducible} in $R$. Moreover, the ring $R$ is clearly noncommutative {unless $n=1$, and its \textbf{center} is $$Z(R)=\K[x^{n}].$$

\begin{theorem} [see e.g. \textnormal{\cite[Theorem 1.2.9]{jacobson2009finite}}]
Every polynomial $f \in R$ of positive degree factorizes as $f=f_1\cdots f_h$, where $f_i$ is irreducible in $R$, for every $i \in \{1,\ldots,h\}$. Also, if $f$ has factorizations $f=f_1\cdots f_h=g_1\cdots g_k$ into irreducible elements, then $h=k$ and there is a permutation $\pi$ of $\{1,\ldots,h\}$ such that $R/Rf_{\pi(i)} \cong R/Rg_i$ as $R$-modules, for all $i$. In particular, $\deg(f_{\pi(i)}) = \deg(g_i)$, for every $i\in\{1,\ldots,h\}$.
\end{theorem}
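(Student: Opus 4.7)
The plan is to treat this as an instance of the Jordan--H\"older theorem applied to the left $R$-module $R/Rf$, leveraging the fact that $R$ is a left Euclidean domain, hence a left principal ideal domain.

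For existence of a factorization into irreducibles, I would argue by strong induction on $\deg(f)$. If $f$ is irreducible, there is nothing to do. Otherwise $f = gh$ with $g,h \in R$ of positive degree strictly less than $\deg(f)$, so by induction each of $g,h$ factorizes into irreducibles, and concatenating yields a factorization of $f$.

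For the uniqueness part, given a factorization $f = f_1 f_2 \cdots f_h$ with each $f_i$ irreducible, I would build the chain of left ideals
\[
Rf = Rf_1 f_2 \cdots f_h \subseteq Rf_2 \cdots f_h \subseteq \cdots \subseteq Rf_h \subseteq R,
\]
and identify the successive quotients. Setting $g_i = f_{i+1} \cdots f_h$, the map $R \to Rg_i$ sending $r \mapsto r g_i$ is an $R$-module isomorphism (injectivity uses that $R$ is a domain and $g_i \neq 0$), and under this isomorphism the submodule $R f_i g_i$ corresponds to $R f_i$. Hence
\[
Rg_i / Rf_i g_i \;\cong\; R/Rf_i
\]
as left $R$-modules. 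Next I would check that $R/Rf_i$ is a \emph{simple} left $R$-module: any submodule corresponds to a left ideal $I$ with $Rf_i \subseteq I \subseteq R$, and since $R$ is a left PID, $I = Rh$ for some $h$ right-dividing $f_i$; irreducibility of $f_i$ forces $h$ to be either a unit or an associate of $f_i$. Consequently, the chain above is a composition series for $R/Rf$ of length $h$ with composition factors $R/Rf_1,\ldots,R/Rf_h$.

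Doing the same for a second factorization $f = g_1 \cdots g_k$ yields a composition series for the same module $R/Rf$ of length $k$ with factors $R/Rg_1,\ldots,R/Rg_k$. Applying the Jordan--H\"older theorem for modules then gives $h = k$ and a permutation $\pi$ of $\{1,\ldots,h\}$ with $R/Rf_{\pi(i)} \cong R/Rg_i$ as $R$-modules. Finally, the degree equality $\deg(f_{\pi(i)}) = \deg(g_i)$ follows by observing that $R/Rf_i$ is also a left $\LL$-vector space, whose dimension equals $\deg(f_i)$: the left Euclidean division by $f_i$ shows that $\{1,x,\ldots,x^{\deg(f_i)-1}\}$ is an $\LL$-basis, and isomorphic $R$-modules certainly have the same $\LL$-dimension. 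The main technical point to handle carefully is the identification of the composition factors with $R/Rf_i$, since one must verify that the left multiplication-by-$g_i$ isomorphism really carries $Rf_i$ onto $Rf_i g_i$; everything else is then an application of standard module-theoretic results.
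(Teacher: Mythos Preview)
The paper does not give its own proof of this statement; it is quoted as a background result with a reference to \cite[Theorem 1.2.9]{jacobson2009finite}. Your argument is correct and is precisely the standard route (and the one taken in Jacobson): translate a factorization into a chain of left ideals, identify the successive quotients with the $R/Rf_i$ via right multiplication by the tail, use that $R$ is a left PID to see these are simple, and invoke Jordan--H\"older; the degree statement then follows from the $\LL$-dimension of $R/Rf_i$.
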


For an element $f \in R$, we define the \textbf{right idealizer} of $f$, as $I(f)=\{g \in R \colon fg \in Rf \}$. The ring $I(f)$ turns out to be the largest subring of $R$ in which $Rf$ is a two-sided ideal. The quotient ring 
\[
\mathcal{E}(f):=\frac{I(f)}{Rf}=\{g+Rf \colon g \in R, \ \deg(g)<\deg(f) \mbox{ and }fg \in Rf\},
\]
is called the \textbf{eigenring} of $f$. For further details on the eigenring of a skew polynomial $f$, together with the study of its algebraic properties and relationships with other algebraic structures, the reader is referred to \cite{gomez2019computing,owenright,owen2023eigenspaces,pumplun2018algebras}. For an irreducible polynomial $F(y) \in \K[y]$ having degree $s\geq 1$, with $\gcd(F(y),y)=1$, we define the quotient ring
\[ R_F:=\frac{R}{RF(x^n)}=
\left\{ a_0+a_1x+\cdots +a_{ns-1}x^{sn-1} + RF(x^n) \colon a_0,\ldots,a_{ns-1} \in \LL 
\right\}. 
\]

Throughout the paper, we write $\overline{a} \in R_F$ for an element of the quotient ring $R_F$, 
and we implicitly refer to its canonical representative
\[
\overline{a} = a + R F(x^n),
\]
where $a \in R$ is the unique skew polynomial of degree strictly less than $ns$ belonging to the class $\overline{a}$. 
In particular, we set $\deg(\overline{a}) := \deg(a)$.

From \cite[Lemma 4.2]{gomez2019computing}, we derive that the center of $R_F$ is 
\[
E_F:= Z\left(\frac{R}{RF(x^n)}\right) \cong \frac{\K[y]}{(F(y))},
\]
where $(F(y))$ denotes the (two-sided) ideal generated by $F(y)$ in $\K[y]$, and any element in $E_F$ is of the form $a+RF(x^n)$, for some $a \in Z(R)$. 

\begin{lemma} [see e.g. \textnormal{\cite[Remark 4.3.]{gomez2019computing}}]\label{lm:irreduciblemaximal}
Let $F(y) \in \K[y]$ having degree $s\geq 1$, with $F(y) \neq y$. Then $F(y)$ is irreducible if and only if the two-sided ideal $RF(x^n)$ of $R$ is maximal.
\end{lemma}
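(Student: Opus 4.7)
My plan is to treat the two implications of the lemma separately, with essentially all the work concentrated on the forward direction.

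For the reverse implication (reducible $\Rightarrow$ not maximal), I would suppose a nontrivial factorization $F(y) = G(y)H(y)$ in $\K[y]$ with $\deg G, \deg H \geq 1$. Since $G(x^n) \in \K[x^n] = Z(R)$, the left ideal $RG(x^n)$ is automatically two-sided, and the chain $RF(x^n) \subsetneq RG(x^n) \subsetneq R$ is strict by elementary degree comparisons: the first inclusion is strict because $\deg H \geq 1$ forces $G(x^n) \notin RF(x^n)$, and the second because $\deg G \geq 1$ means $G(x^n)$ is a non-unit in $R$. Hence $RF(x^n)$ is not maximal as a two-sided ideal.

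For the forward implication, I would assume $F(y)$ irreducible and show equivalently that $R_F := R/RF(x^n)$ is a simple ring. The starting point is the identification $Z(R_F) \cong E_F := \K[y]/(F(y))$ already recalled in the excerpt (from \cite[Lemma 4.2]{gomez2019computing}); under our hypothesis, $E_F$ is a field. My plan is then to realize $R_F$ as a central simple algebra over $E_F$. The cleanest route is to exhibit $R_F$ as a crossed-product (cyclic) algebra over $E_F$: the cyclic Galois extension $\LL/\K$, the action of $\sigma$, and the unit class of $x^n$ in $R_F$---a unit precisely because the hypothesis $F(y) \neq y$ forces $\gcd(F(y), y) = 1$---assemble into such a presentation, and such algebras are central simple by classical structure theory. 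Alternatively, I would argue by hand: any two-sided ideal $I \supsetneq RF(x^n)$ must intersect $Z(R)$ in an ideal strictly larger than $(F(x^n))$, which is maximal in $Z(R) \cong \K[y]$ by irreducibility of $F$; this forces $1 \in I$ and hence $I = R$.

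The main obstacle will be the auxiliary technical ingredient powering the forward direction: either (a) verifying the non-degeneracy of the crossed-product presentation of $R_F$ over $E_F$, which in the case $\deg F > 1$ requires a base change to $\LL \otimes_\K E_F$ and a careful tracking of the induced $\sigma$-action, or (b) justifying the ``center-controlled'' principle that every proper enlargement of $RF(x^n)$ as a two-sided ideal must enlarge its intersection with $Z(R) = \K[x^n]$. Both are standard for skew polynomial rings attached to cyclic Galois extensions, which explains the ``see e.g.'' citation. In either approach, the hypothesis $F(y) \neq y$ enters as the guarantee that $x$ remains invertible modulo $F(x^n)$, which is exactly what the classical simplicity arguments need.
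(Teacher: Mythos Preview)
The paper does not supply its own proof of this lemma; it simply records the statement with a citation to \cite{gomez2019computing}. Your plan is correct and is in fact the standard argument behind that citation.

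A few remarks on your approach (b), which is the cleanest route. The ``center-controlled'' principle you flag is precisely the structural fact, already quoted in the paper, that every two-sided element of $R$ has the form $\alpha\,c(x^n)\,x^i$ with $\alpha\in\LL$, $c\in\K[y]$, $i\ge 0$. Since $R$ is both left and right Euclidean, any two-sided ideal $I$ is principal and generated by such an element. If $RF(x^n)\subseteq I$, then this generator right-divides $F(x^n)$; the hypothesis $F(y)\neq y$ (so, for irreducible $F$, $F(0)\neq 0$) forces $i=0$, whence $I=Rc(x^n)$ with $c(y)\mid F(y)$ in $\K[y]$. Irreducibility of $F$ then leaves only $I=RF(x^n)$ or $I=R$. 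This is exactly your alternative (b), with the missing ingredient made explicit, and it avoids the base-change subtleties you correctly anticipate in approach~(a).

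Your reverse implication is fine as written; note only that the hypothesis $F(y)\neq y$ is not needed there---it is needed solely in the forward direction to rule out $x^i$ as a two-sided divisor.
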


We have that $E_F$ is a field such that $[E_F:\K]=\deg(F)=s$ and $R_F$ is a central simple algebra of dimension $n^2$ over $E_F$ and $R_F$ has dimension $n^2s$ over $\K$, see e.g. \cite{gomez2019computing,owenright}. Therefore, by Artin-Wedderbun's Theorem, $R_F$ is isomorphic to $M_m(\D)$, for a certain positive integer $m$ and a central $E_F$-division algebra $\D$. More precisely, $m$ is the \textbf{number of irreducible factors} of $F(x^n)$ in $R$, that is, if
\[
F(x^n) = f_1 \cdots f_m
\]
is a factorization into irreducible elements $f_i \in R$, then $m$ is the length of this decomposition. 
Moreover, $\D$ is isomorphic to $\mathcal{E}(f)$ for any irreducible factor $f$ of $F(x^n)$ in $R$. For future reference in the paper, we collect this result in the following theorem.  For further details, see \cite[Theorem 1.2.19]{jacobson2009finite} and \cite[Theorem 20]{owen2023eigenspaces}.

\begin{theorem} [see \textnormal{\cite[Theorem 20]{owen2023eigenspaces}}] \label{th:isomorphismtheoremeigen} Let $F(y) \in \K[y]$ be a monic irreducible polynomial having degree $s \geq 1$, with $(F(y),y)=1$ and let $f \in R$ be an irreducible divisor of $F(x^n)$ in $R$. Let $m$ be the number of irreducible factors of $F(x^n)$ in $R$. Then $m$ divides $n$ and $\mathcal{E}(f)$ is a central division algebra over $E_F$ having degree $n/m$. Moreover, the following $E_F$-algebra isomorphism holds:
\begin{equation} \label{eq:artin}
    \frac{R}{RF(x^n)} \cong \End_{\mathcal{E}(f)}(R/Rf) \cong M_m(\mathcal{E}(f)). 
\end{equation}
\end{theorem}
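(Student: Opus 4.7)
My approach is to combine the Artin--Wedderburn theorem with a composition length argument and a Schur-type identification of $\mathcal{E}(f)$. First, I would observe that $A:=R/RF(x^n)$ is a simple Artinian ring: by Lemma~\ref{lm:irreduciblemaximal} the irreducibility of $F(y)$ together with $(F(y),y)=1$ guarantees that $RF(x^n)$ is a maximal two-sided ideal of $R$, and $A$ is finite-dimensional over $\K$ (of $\K$-dimension $ns$). Its center is $E_F\cong \K[y]/(F(y))$, so $A$ is a central simple $E_F$-algebra of $E_F$-dimension $n^2$. By Artin--Wedderburn, $A\cong M_k(\D')$ for a unique positive integer $k$ and a central $E_F$-division algebra $\D'$ (unique up to isomorphism). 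Since the dimension of a central division algebra over its center is always a perfect square, we deduce $[\D':E_F]=(n/k)^2$ and hence $k$ divides $n$.

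Next, I would identify $k$ with $m$ via a composition length argument. Given any factorization $F(x^n)=f_1\cdots f_m$ into irreducibles in $R$, the descending chain $R\supset Rf_1\supset Rf_1f_2\supset\cdots\supset Rf_1\cdots f_m=RF(x^n)$ becomes, after quotienting by $RF(x^n)$, a composition series of $A$ as a left $A$-module: the successive quotients are $Rf_1\cdots f_{i-1}/Rf_1\cdots f_i\cong R/Rf_i$ (via right multiplication by $f_1\cdots f_{i-1}$, which is an isomorphism of left $R$-modules because $R$ is a domain), and these are simple because each $f_i$ is irreducible. Hence the composition length of $A$ as a left $A$-module equals $m$. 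On the other hand, for $A\cong M_k(\D')$ this length is $k$ (the regular module decomposes into $k$ copies of the unique simple module), so $k=m$, recovering en route the well-definedness of the number of irreducible factors of $F(x^n)$.

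Finally, I would identify $\D'$ with $\mathcal{E}(f)$. Since $f$ right-divides $F(x^n)$, the left ideal $Rf$ contains $RF(x^n)$, and $S:=R/Rf$ is a simple left $A$-module (simplicity coming from maximality of $Rf$, which follows from irreducibility of $f$). An $R$-linear endomorphism $\varphi$ of $S$ is determined by $\varphi(1+Rf)=g+Rf$ for some $g\in R$, and well-definedness of $\varphi$ amounts to $fg\in Rf$, i.e.\ $g\in I(f)$; this yields a natural ring isomorphism between $\mathcal{E}(f)=I(f)/Rf$ and $\End_R(S)$ up to the opposite-ring convention dictated by composition order. Applying the Artin--Wedderburn identifications to any simple module $S$ over $A\cong M_m(\D')$ then gives $\D'\cong \mathcal{E}(f)$ and the double-centralizer isomorphism $A\cong \End_{\mathcal{E}(f)}(R/Rf)\cong M_m(\mathcal{E}(f))$, which is automatically $E_F$-linear since $E_F=Z(A)$ acts by scalars on $S$. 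The main obstacle I expect is pinning down the left/right conventions in this last step so that the double-centralizer identification is stated without opposite-ring confusion; everything else reduces to a routine application of the Wedderburn structure theory.
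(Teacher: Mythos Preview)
Your overall approach is correct and matches what the paper does: the paper does not give a self-contained proof of this theorem but cites \cite{jacobson2009finite,owen2023eigenspaces} and sketches, in the paragraph preceding the statement, exactly the Artin--Wedderburn plus composition-length plus Schur identification that you outline.

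There is, however, one concrete slip in your composition-series step. For a factorization $F(x^n)=f_1\cdots f_m$ in $R$, the chain $R\supset Rf_1\supset Rf_1f_2\supset\cdots$ is \emph{not} a chain of left ideals in general: $Rf_1f_2\subseteq Rf_1$ would require $f_1\mid_r f_1f_2$, i.e.\ $f_1f_2=qf_1$ for some $q\in R$, which fails in the noncommutative ring $R$. Likewise, right multiplication by $g:=f_1\cdots f_{i-1}$ sends $Rf_i$ to $Rf_ig$, not to $Rgf_i$. The fix is simply to build the chain from the right end of the factorization:
\[
R\;\supset\; Rf_m\;\supset\; Rf_{m-1}f_m\;\supset\;\cdots\;\supset\; Rf_1\cdots f_m=RF(x^n),
\]
so that each inclusion is the genuine containment $R(f_i\cdots f_m)\subset R(f_{i+1}\cdots f_m)$, and right multiplication by $g:=f_{i+1}\cdots f_m$ gives the left $R$-module isomorphism $R/Rf_i\cong Rg/Rf_ig$. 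With this correction your argument is complete, and the opposite-ring bookkeeping you flag does indeed cancel (both $\End_R(R/Rf)\cong\mathcal{E}(f)^{\mathrm{op}}$ and $\End_{M_m(\D')}(\D'^{\,m})\cong(\D')^{\mathrm{op}}$, so $\D'\cong\mathcal{E}(f)$).
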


From now on, we will denote by $\mat{F}$ any isomorphism realizing \eqref{eq:artin} of the form 
$$\mat{F}: \frac{R}{RF(x^n)}\longrightarrow M_m(\mathcal{E}(f)).$$

In finite fields case $\LL=\fqn$, $\K=\fq$, by Wedderburn's Theorem, $\mathcal{E}(f)$ is a field and  
\begin{equation} \label{eq:artinfinite}
    \mathcal{E}(f) \cong E_F \cong \F_{q^s} \ \ \ \mbox{ and } \ \ \ 
\frac{R}{RF(x^n)} \cong M_n(\F_{q^s}),
\end{equation}
as $\F_{q^s}$-algebras. 
Note also that in finite fields case $m=n$. 

We also recall the following result, which allows us to determine the rank of an element of $R_F$ as a matrix in terms of its polynomial form, via the isomorphism of Theorem \ref{th:isomorphismtheoremeigen}. A proof can be found in \cite[Proposition 7]{sheekey2020new} for finite fields and in \cite[Theorem 6]{thompson2023division} for infinite fields.

\begin{theorem}\label{th:rankpolynomial} 
Let $F(y)$ be an irreducible polynomial of $\K[y]$ having degree $s$ and let $m$ be the number of irreducible factors of $F(x^n)$ in $R$. Then for a non zero element $\overline{a}=a+RF(x^n) \in R_F$, it holds 
\[
\rk(\overline{a})=m - \frac{m}{sn}\deg(\mathrm{gcrd}(a,F(x^n)))=\frac{m}{sn}(\deg(F(x^n))-\deg(\mathrm{gcrd}(a,F(x^n)))).
\]
In particular, if $\K$ is finite, then $n=m$ and  
\[
\rk(\overline{a})=n - \frac{1}{s}\deg(\mathrm{gcrd}(a,F(x^n)))=\frac{1}{s}(\deg(F(x^n))-\deg(\mathrm{gcrd}(a,F(x^n)))).
\]
\end{theorem}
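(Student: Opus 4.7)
The plan is to compute the $\K$-dimension of the principal left ideal $R_F \cdot \overline{a}$ in two different ways: once on the skew polynomial side, in terms of the gcrd, and once on the matrix side $M_m(\mathcal{E}(f))$, in terms of the rank $r$. The $\K$-algebra isomorphism $\mat{F}$ from Theorem \ref{th:isomorphismtheoremeigen} then forces the two expressions to agree, yielding the claimed formula.

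On the polynomial side, I would invoke the fact that $R$ is left Euclidean, so the sum of left ideals satisfies $Ra + RF(x^n) = Rg$ with $g := \gcrd(a, F(x^n))$. Since $g$ right-divides $F(x^n)$, one has $RF(x^n) \subseteq Rg$, and hence $R_F \cdot \overline{a} = Rg/RF(x^n)$. The polynomials $g, xg, \ldots, x^{sn - \deg(g) - 1}g$ form an $\LL$-basis of this quotient, giving $\LL$-dimension $sn - \deg(g)$ and $\K$-dimension $n(sn - \deg(g))$.

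On the matrix side, $\mat{F}$ maps $R_F \overline{a}$ isomorphically onto the principal left ideal $M_m(\mathcal{E}(f)) \cdot A$, where $A := \mat{F}(\overline{a})$. For any matrix $A$ of rank $r$ over a division ring $\D$, invertible row and column operations reduce $A$ to $P E_r Q$, where $E_r \in M_m(\D)$ has ones in the first $r$ diagonal positions and zeros elsewhere. Then $M_m(\D) A = M_m(\D) E_r Q$ has the same $\D$-dimension as $M_m(\D) E_r$, namely the space of matrices supported on the first $r$ columns, of $\D$-dimension $mr$. Using $[\mathcal{E}(f):\K] = [\mathcal{E}(f):E_F] \cdot [E_F:\K] = s(n/m)^2$ from Theorem \ref{th:isomorphismtheoremeigen}, this yields $\dim_\K M_m(\mathcal{E}(f)) A = mr \cdot s(n/m)^2 = rsn^2/m$.

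Equating the two expressions, $n(sn - \deg(g)) = rsn^2/m$ simplifies at once to
\[
r = \frac{m}{sn}\bigl(\deg(F(x^n)) - \deg(g)\bigr) = m - \frac{m}{sn}\deg(g),
\]
proving the general formula. The finite-field specialization is then immediate: Wedderburn forces $\mathcal{E}(f) \cong E_F$, hence $n = m$, collapsing the coefficient $m/(sn)$ to $1/s$. The main technical point is the availability of the Smith-style normal form $A = P E_r Q$ over a possibly noncommutative division ring, which is classical; once granted, all remaining steps are routine dimension counts.
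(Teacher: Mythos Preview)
Your proof is correct. The paper does not actually supply its own proof of this statement; it merely cites \cite[Proposition 7]{sheekey2020new} and \cite[Theorem 6]{thompson2023division}, so there is no in-paper argument to compare against. Your dimension-counting strategy---computing $\dim_{\K}(R_F\,\overline{a})$ once via $Rg/RF(x^n)$ with $g=\gcrd(a,F(x^n))$ and once via the left ideal $M_m(\mathcal{E}(f))\cdot A$---is precisely the natural route and is in the same spirit as the cited sources. All the steps check out: the identification $R_F\,\overline{a}=Rg/RF(x^n)$ follows from $Ra+RF(x^n)=Rg$; the $\LL$-basis $\{g,xg,\ldots,x^{sn-\deg(g)-1}g\}$ is valid by the left division algorithm; the Smith-type reduction $A=PE_rQ$ over a division ring is classical (Gaussian elimination works since every nonzero element is a unit); and the dimension identity $[\mathcal{E}(f):\K]=s(n/m)^2$ is read off directly from Theorem~\ref{th:isomorphismtheoremeigen}. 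The only cosmetic remark is that when you write ``$\D$-dimension $mr$'' it would be cleaner to say you are computing the $\K$-dimension throughout, since $\K$ is central and avoids any left/right ambiguity over $\mathcal{E}(f)$; the final count $\dim_{\K}\bigl(M_m(\mathcal{E}(f))A\bigr)=mr\cdot s(n/m)^2$ is unaffected.
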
 

 As an illustrative example, we present an explicit isomorphism over finite fields for the case $n=s=3$, which realizes the isomorphism between $R / R F(x^n)$ and $M_n(\mathbb{F}_{q^s})$.

\begin{example} \label{ex:isomorphismn=s=3}
We consider the case $n=s=3$. Let $\xi \in \F_{q^3} \setminus \F_q$ and consider the monic irreducible polynomial 
\[
F(y) = (y-\xi)(y-\sigma(\xi))(y-\sigma^2(\xi)) \in \F_q[y].
\] 
In \cite[Section~4.2]{sheekey2020new} it is proved that one can define the $\F_{q^3}$-algebra isomorphism
\[
\mat{F} : \frac{R}{RF(x^3)} \longrightarrow M_3(\F_{q^3}),
\]
given by
\[
\mat{F}(\alpha + RF(x^3)) \;=\; 
\begin{pmatrix}
\alpha & 0 & 0 \\
0 & \sigma^2(\alpha) & 0 \\
0 & 0 & \sigma(\alpha)
\end{pmatrix}, 
\quad \text{for all } \alpha \in \F_{q^3},
\]
and
\[
\mat{F}(x + RF(x^3)) \;=\;
\begin{pmatrix}
0 & 0 & \xi \\
1 & 0 & 0 \\
0 & 1 & 0
\end{pmatrix}.
\]
Therefore, for every 
\[
\overline{a} \;=\; \sum_{i=0}^{8} a_i x^i + RF(x^n),
\]
we obtain

{\footnotesize
\begin{equation} \label{eq:matrixexplicitn=s=3}
\mat{F}(\overline{a}) \;=\;
\begin{pmatrix}
a_0+a_3\xi+a_6\xi^2 
& a_2\xi+a_5\xi^2+a_8\xi^3 
& a_1\xi+a_4\xi^2+a_7\xi^3 \\
\sigma^2(a_1)+\sigma^2(a_4)\xi+\sigma^2(a_7)\xi^2 
& \sigma^2(a_0)+\sigma^2(a_3)\xi+\sigma^2(a_6)\xi^2 
& \sigma^2(a_2)\xi+\sigma^2(a_5)\xi^2+\sigma^2(a_8)\xi^3 \\
\sigma(a_2)+\sigma(a_5)\xi+\sigma(a_8)\xi^2 
& \sigma(a_1)+\sigma(a_4)\xi+\sigma(a_7)\xi^2 
& \sigma(a_0)+\sigma(a_3)\xi+\sigma(a_6)\xi^2
\end{pmatrix}
\in M_3(\F_{q^3}).
\end{equation}
}

Moreover, the subfield $E_{F} \cong \F_{q^3}$ consists of the matrices

\[
\begin{array}{rl}
E_F = & \langle 1, x^3, x^6 \rangle_{\F_q}  \\
\cong &
{\footnotesize \left\{ 
\begin{pmatrix}
a_0+a_3\xi+a_6\xi^2 & 0 & 0 \\
0 & \sigma^2(a_0)+\sigma^2(a_3)\xi+\sigma^2(a_6)\xi^2 & 0 \\
0 & 0 & \sigma(a_0)+\sigma(a_3)\xi+\sigma(a_6)\xi^2
\end{pmatrix}
: a_0,a_3,a_6 \in \F_q \right\}.}
\end{array}
\]

    $\hfill \lozenge$
\end{example}

An element $f \in R$ is said to be \textbf{two-sided} if $Rf=fR$. Every two-sided element of $R$ is of the form $\alpha c x^i$, for some $\alpha \in \LL,c \in Z(R)$ and $i \geq 0$; see e.g. \cite[Theorem 1.1.22]{jacobson2009finite}.

\begin{definition}
The \textbf{bound} of a non zero $f \in R$, is a two-sided polynomial $f^* \in R$ such that 
\[Rf^*=\mathrm{Ann}_R(R/Rf)=\{g \in R \colon (ga+Rf)=0+Rf, \mbox{ for any }a+Rf \in R/Rf\},\] where $\mathrm{Ann}_R(R/Rf)$ denotes the (left) annihilator of $R/Rf$ in $R$. 
\end{definition}
Note that $Rf^*=f^*R$ turns out to be the largest two-sided ideal contained in $Rf$. If $f^* \neq 0$, the polynomial $f$ is said to be \textbf{bounded}. Since $\sigma$ is assumed to be an automorphism of $\LL$, and $R$ has finite dimension $n^2$ over its center $Z(R) = \K[x^n]$, by \cite[Theorem 2.9]{gomez2019computing} we know that all nonzero $f \in R$ are bounded, and
\begin{equation} \label{eq:boundbound}
    \deg(f^*) \leq n \deg(f).
\end{equation}

For a nonconstant polynomial $f$ with a nonzero constant coefficient, any bound $f^*$ of $f$ can be written as $dF(x^n)$ for some $d \in \LL$ and a monic polynomial $F(y) \in \K[y]$, where the constant coefficient of $F(y)$ is nonzero; see \cite[Lemma 2.11]{gomez2019computing}. In this case, we refer to the bound of the polynomial $f$ as the unique monic central polynomial given by $f^* = F(x^n)$.

\begin{remark}
In the literature, the polynomial $F(y)$ is sometimes also referred as the {\em minimal central left multiple} of $f\in R$. Indeed, it is the unique monic polynomial $F(y)$ of minimal degree in $Z(R)$ such that $F(x^n) = gf$ for some $g\in R$, see e.g. \cite{giesbrecht1998factoring,sheekey2020new,thompson2023division}, cf \cite[Remark 2.3]{lobillo2025quotients}.
\end{remark}

The following relation on the degree of and irreducible skew polynomials and an its bound holds.

\begin{proposition} [see \textnormal{\cite[Theorem 2]{sheekey2020new}}]
\label{th:basicskewpolynomial} If $f$ is an irreducible element of $R$, with $\gcrd(f,x)=1$, then $f^*=F(x^n)$ is such that $F(y)$ is an irreducible element of $\K[y]$. Moreover, if $m$ is the number of irreducible factors of $F(x^n)$ in $R$, then $F(y)$ has degree $\deg(f)\frac{m}{n}$.
\end{proposition}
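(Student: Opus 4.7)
The plan is to establish the two claims separately. For irreducibility of $F(y)$, the approach is by contradiction, using Schur's lemma on the simple module $R/Rf$ together with the minimality inherent in the notion of the bound. For the degree formula, the idea is a bookkeeping of $\K$-dimensions, applying Theorem \ref{th:isomorphismtheoremeigen} once irreducibility of $F(y)$ is in hand.

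First I would suppose, for contradiction, that $F(y)=G(y)H(y)$ in $\K[y]$ with both $G,H$ monic of positive degree. Then $G(x^n)$ and $H(x^n)$ lie in the center $Z(R)=\K[x^n]$, so left multiplication by each of them defines an $R$-module endomorphism of $R/Rf$; this module is simple because $f$ is irreducible, hence $Rf$ is a maximal left ideal. Schur's lemma forces each endomorphism to be zero or an isomorphism. If $G(x^n)$ acted as zero, then $G(x^n)\in Rf$; since $G(x^n)$ is central, the ideal $RG(x^n)$ is two-sided and sits inside $Rf$, hence inside the bound $Rf^{*}=RF(x^n)$. This would force $F(x^n)\mid_r G(x^n)$, and comparing degrees in $R$ would give $n\deg F\le n\deg G$, contradicting $\deg H>0$. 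Therefore both $G(x^n)$ and $H(x^n)$ act as isomorphisms on $R/Rf$, and so does their product $F(x^n)$. On the other hand, $F(x^n)=f^{*}$ annihilates $R/Rf$ by definition of the bound, giving the desired contradiction.

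Once $F(y)$ is known to be irreducible, Theorem \ref{th:isomorphismtheoremeigen} applies: $R/RF(x^n)\cong M_m(\mathcal{E}(f))$, with $\mathcal{E}(f)$ a central $E_F$-division algebra of degree $n/m$ and $[E_F:\K]=\deg F$. Hence $R/RF(x^n)$ has $\K$-dimension $m^2(n/m)^2\deg F = n^2\deg F$, which is also its dimension computed directly as a free left $\LL$-module on the basis $1,x,\ldots,x^{n\deg F -1}$ modulo the bound. Because $f$ right-divides $F(x^n)$, the simple left $R$-module $R/Rf$ is a simple module over $R/RF(x^n)\cong M_m(\mathcal{E}(f))$, and up to isomorphism it must be the unique simple module $\mathcal{E}(f)^m$ over the matrix algebra. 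Computing its $\K$-dimension in two ways --- namely $m\cdot(n/m)^2\deg F = n^2\deg F /m$ from the matrix-algebra side, and $n\deg f$ from the free left $\LL$-module structure of $R/Rf$ --- I obtain $n\deg f = n^2\deg F /m$, i.e.\ $\deg F = \tfrac{m}{n}\deg f$, as required.

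The main obstacle is avoiding circularity in the first half: one cannot invoke Theorem \ref{th:isomorphismtheoremeigen} to infer that $RF(x^n)$ is a maximal two-sided ideal, because that theorem assumes $F(y)$ already irreducible. Applying Schur's lemma to the action of central elements on $R/Rf$, together with the defining minimality of the bound, sidesteps the issue cleanly. After that, the degree formula reduces to a routine double dimension count.
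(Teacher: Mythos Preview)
The paper does not supply its own proof of this proposition; it is quoted directly from \cite[Theorem~2]{sheekey2020new}. Your argument is correct and stands on its own within the framework laid out in Section~\ref{sec:prel}.

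A small remark on the first half: an alternative, slightly more structural, route is to note that $Rf^{*}=\mathrm{Ann}_R(R/Rf)$ is the annihilator of a simple module, so $R/Rf^{*}$ is a primitive ring; being finite-dimensional over $\K$ (as $f$ is bounded), it is therefore simple, which makes $Rf^{*}=RF(x^n)$ a maximal two-sided ideal, and then Lemma~\ref{lm:irreduciblemaximal} yields the irreducibility of $F(y)$ directly. Your Schur's-lemma argument accomplishes the same thing by hand and is perfectly valid; it also has the virtue of not appealing to any structure theory beyond what is strictly needed. For the degree formula, your double dimension count via Theorem~\ref{th:isomorphismtheoremeigen} is exactly the natural argument once irreducibility is in hand.
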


\begin{lemma} \label{lem:divisionbound}
Let $g$ be an irreducible {element} of $R$ with $\gcrd(g,x)=1$. Let $H(y) \in \K[y]$. If $g \mid_r H(x^n)$ in $R$, then $G(y) \mid H(y)$ in $\K[y]$, where $G(x^n)=g^*$.   
\end{lemma}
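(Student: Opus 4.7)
The plan is to use the defining property of the bound of $g$, namely that $Rg^{*}=RG(x^{n})$ is the largest two-sided ideal of $R$ contained in $Rg$, together with the fact that $H(x^{n})$ is central in $R$, so that the principal left ideal $RH(x^{n})$ is automatically two-sided.

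Concretely, from $g\mid_r H(x^n)$ we get $H(x^n)\in Rg$, so $RH(x^n)\subseteq Rg$. Since $H(x^n)\in Z(R)=\K[x^n]$, the left ideal $RH(x^n)$ is also a right ideal, hence two-sided. By the maximality property recalled in the definition of the bound, this forces
\[
RH(x^n)\subseteq Rg^{*}=RG(x^n),
\]
and in particular $H(x^n)\in RG(x^n)$.

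It remains to upgrade this membership to a divisibility statement in the commutative ring $\K[y]$. The key observation is that $RG(x^n)\cap Z(R)=Z(R)\,G(x^n)=\K[x^n]\,G(x^n)$. The inclusion $\supseteq$ is clear; for $\subseteq$, if $p(x^n)=aG(x^n)$ for some $a\in R$, then for every $b\in R$ one has
\[
(ab-ba)G(x^n)=aG(x^n)b-bG(x^n)a=p(x^n)b-bp(x^n)=0,
\]
where I used that $G(x^n)$ and $p(x^n)$ are central. Because $R=\LL[x;\sigma]$ is a domain (degrees are additive) and $G(x^n)\neq 0$, cancellation gives $ab=ba$ for all $b$, so $a\in Z(R)=\K[x^n]$.

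Applying this to $H(x^n)$, we obtain $H(x^n)=Q(x^n)G(x^n)$ for some $Q(y)\in\K[y]$, which translates into $H(y)=Q(y)G(y)$ in $\K[y]$, i.e.\ $G(y)\mid H(y)$. The only subtle point is the centralizing argument at the end, which relies crucially on $R$ being a domain; everything else is a direct application of the definition of the bound and the centrality of $H(x^n)$.
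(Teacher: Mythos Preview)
Your proof is correct and follows essentially the same approach as the paper: both arguments observe that $RH(x^n)$ is a two-sided ideal contained in $Rg$, invoke the maximality property of the bound to conclude $RH(x^n)\subseteq RG(x^n)$, and then pass from divisibility in $R$ to divisibility in $\K[y]$. The paper states the last step as ``easy to check'' with a reference to Jacobson, whereas you supply an explicit centralizing argument (using that $R$ is a domain) to show $RG(x^n)\cap Z(R)=\K[x^n]\,G(x^n)$; this is a nice self-contained justification of that final passage.
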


\begin{proof}
   By hypotheses, we have that $g \mid_r H(x^n)$, then $RH(x^n)$ is a two-sided ideal of $R$ contained in $Rg$. As a consequence, $RH(x^n) \subseteq \mathrm{Ann}_R(R/Rg)=RG(x^n)$. So $G(x^n) \mid H(x^n)$ in $R$. Finally, it is easy to check that $G(y) \mid H(y)$ in $\K[y]$, cf. \cite[pag. 12]{jacobson1943theory}. 
\end{proof}

\section{Skew polynomial framework for matrix algebras}\label{sec:matrixalgebra_repr}

This section is dedicated to the representation of matrix algebras as a quotient of skew polynomial rings. This quotient will be defined by means of a tuple of irreducible polynomials, called \emph{admissible tuple}. The obtained representation also carries an important feature about the sum-rank metric, which can be intrinsically defined via the degree of the greatest common right divisor of
the skew polynomial representation and the polynomial defining the quotient. This will be illustrated in Theorem \ref{thm:main_isometry}. We will conclude the section by showing how to construct admissible tuples.

\subsection{The main isometry}

\begin{definition}
A tuple $\FF=(F_1,\ldots,F_t)$ where $F_i(y) \in \K[y]$ is called \textbf{$(s,m)$-admissible} in $\K[y]$, for some positive integer $s,t$, if the following two conditions are satisfied:
\begin{enumerate}
\item $F_1(y),\ldots,F_t(y) \neq y$ are distinct monic and irreducible elements of $\K[y]$ having degree $s \geq 1$;
    \item the number of irreducible factors of $F_i(x^n)$ in $R$ is  $m$, for every $i \in \{1,\ldots,t\}$.
\end{enumerate}
Moreover, for an $(s,m)$-admissible tuple $\FF$, define 
\[
H_{\FF}(y):=F_1(y) \cdots F_t(y) \in \K[y].
\]
When $n=m$, we simply write $s$-admissible tuple to indicate an $(s,n)$-admissible tuple. 
\end{definition}

In the classical polynomial ring $\K[y]$, when we have coprime polynomials $F_1(y), \ldots, F_t(y)$, it is clear that the least common multiple of these polynomials is their product. In the following, we prove that this result can be extended to the skew polynomial ring $R$, provided we are working with central polynomials. To establish this, we begin with a preliminary lemma and then proceed to prove this result.
\begin{lemma} \label{lem:coprimepolynomials}
Let $F_1(y),F_2(y)$ be nonzero polynomials in $\K[y]$ with non zero constant coefficient. Assume that $\gcd(F_1(y),F_2(y))=1$ in $\K[y]$. Then $\gcrd(F_1(x^n),F_2(x^n))=1$ in $R$.
\end{lemma}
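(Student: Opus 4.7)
The plan is to lift a Bézout identity from $\K[y]$ to the skew polynomial ring $R$, exploiting the fact that the substitution $y \mapsto x^n$ lands inside the center $Z(R) = \K[x^n]$.

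First, since $F_1(y)$ and $F_2(y)$ are coprime in the commutative PID $\K[y]$, Bézout's identity furnishes polynomials $u(y), v(y) \in \K[y]$ such that
\[
u(y)\,F_1(y) + v(y)\,F_2(y) = 1.
\]
Substituting $y = x^n$ yields the identity
\[
u(x^n)\,F_1(x^n) + v(x^n)\,F_2(x^n) = 1
\]
in $\K[x^n] = Z(R) \subseteq R$. In particular, $u(x^n), v(x^n) \in R$, so $1$ belongs to the left ideal $R F_1(x^n) + R F_2(x^n)$.

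Now I invoke the description of $\gcrd$ recalled in the preliminaries: a greatest common right divisor of $F_1(x^n)$ and $F_2(x^n)$ is precisely a generator of the left ideal $R F_1(x^n) + R F_2(x^n)$. Since this ideal contains $1$, it coincides with the whole ring $R$, so any generator is a unit of $R$; as the units of $R = \LL[x;\sigma]$ are the nonzero elements of $\LL$, we may normalize and conclude $\gcrd(F_1(x^n), F_2(x^n)) = 1$.

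There is no substantive obstacle here: the entire argument rests on the observation that centrality of $F_1(x^n)$ and $F_2(x^n)$ makes the left ideal they generate behave like in the commutative setting, so Bézout transfers verbatim. The hypothesis of nonzero constant coefficient is not actually used in this proof (it is presumably in the statement for compatibility with subsequent uses, since it is already implied by $\gcd(F_1(y),F_2(y))=1$ together with neither $F_i$ being divisible by $y$).
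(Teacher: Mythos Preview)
Your proof is correct and takes a genuinely different, more elementary route than the paper. The paper argues by contradiction via the machinery of bounds: it supposes an irreducible common right divisor $g$ exists, uses the nonzero constant coefficient hypothesis to ensure $g\neq x$, then invokes Lemma~\ref{lem:divisionbound} to pass to the bound $G(x^n)=g^*$ and deduce $G(y)\mid F_1(y)$ and $G(y)\mid F_2(y)$ in $\K[y]$, forcing $G(y)=1$ and hence $\deg(g)=0$. Your argument instead lifts a B\'ezout identity directly through the ring isomorphism $\K[y]\cong Z(R)=\K[x^n]$, which is both shorter and, as you correctly observe, does not require the nonzero constant coefficient hypothesis at all. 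The paper's approach has the minor advantage of reinforcing the use of the bound machinery that is central to later arguments, but yours is the cleaner proof of this particular lemma. One small quibble: your closing parenthetical is a bit muddled, since ``nonzero constant coefficient'' and ``not divisible by $y$'' are the same condition, so saying one is implied by the other is vacuous; your main point---that the hypothesis is superfluous for the proof and present only for downstream compatibility---stands regardless.
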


\begin{proof} 
Let $g$ be an irreducible element of $R$ such that $g \mid_r F_1(x^n)$ and $g \mid_r F_2(x^n)$. Note that $g \neq x$ since the constant coefficients of $F_1(y)$ and $F_2(y)$ are nonzero. Hence, by Lemma \ref{lem:divisionbound}, we have that $G(y) \mid F_1(y)$ and $G(y) \mid F_2(y)$ in $\K[y]$, where $G(x^n)=g^*$. By hypotheses, this condition implies that $G(y)=1$. As a consequence, $\deg(g)=0$ and the assertion follows. 
\end{proof}

\begin{proposition} \label{prop:coprimecentralpoly}
    Let $F_1(y),\ldots,F_t(y)\in \K[y]$ be polynomials with nonzero constant coefficients. Assume that $\gcd(F_i(y),F_j(y))=1$ in $\K[y]$ for each $i \neq j$  and let $H(y)=F_1(y) \cdots F_t(y)$. Then
    \[
H(x^n)=\lclm(F_1(x^n),\ldots,F_t(x^n)).
    \]
    In particular, if $\FF=(F_1,\ldots,F_t)$ is an $(s,m)$-admissible tuple in $\K[y]$, then
    \[
    H_{\FF}(x^n)=\lclm(F_1(x^n),\ldots,F_t(x^n)).
    \]
\end{proposition}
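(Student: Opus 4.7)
The plan is to proceed by induction on $t$, using the fact that central polynomials commute with everything and the classical degree formula for $\gcrd$ and $\lclm$ in skew polynomial Euclidean domains.

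First I would observe that each $F_i(x^n)$ lies in the center $Z(R)=\K[x^n]$, so the $F_i(x^n)$ commute pairwise. Consequently, for any $i$, one can write $H(x^n) = F_i(x^n) \cdot \prod_{j\ne i} F_j(x^n)$, which shows $H(x^n) \in RF_i(x^n)$ and hence $H(x^n) \in \bigcap_{i=1}^t RF_i(x^n) = RL$, where $L := \lclm(F_1(x^n),\ldots,F_t(x^n))$. In particular, $L \mid_r H(x^n)$, giving $\deg(L) \le \deg(H(x^n))$.

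For the base case $t=2$, I would compare degrees using the standard identity
$$\deg(\lclm(f,g)) + \deg(\gcrd(f,g)) = \deg(f)+\deg(g),$$
valid in the left Euclidean domain $R$ (this follows from the short exact sequence $0 \to R/R\lclm(f,g) \to R/Rf \oplus R/Rg \to R/R\gcrd(f,g) \to 0$ and the fact that $\dim_\LL(R/Rh)=\deg(h)$). Since $F_1(y)$ and $F_2(y)$ are coprime in $\K[y]$ and have nonzero constant coefficients, Lemma \ref{lem:coprimepolynomials} yields $\gcrd(F_1(x^n),F_2(x^n))=1$, whence $\deg(L) = \deg(F_1(x^n))+\deg(F_2(x^n)) = \deg(H(x^n))$. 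Choosing $L$ to be the unique monic generator of $RF_1(x^n)\cap RF_2(x^n)$, both $L$ and $H(x^n)$ are monic of the same degree with $L\mid_r H(x^n)$, so $L = H(x^n)$.

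For the induction step, set $H'(y) := F_1(y)\cdots F_{t-1}(y)$. By the inductive hypothesis, $RH'(x^n) = \bigcap_{i=1}^{t-1} RF_i(x^n)$, so
$$\bigcap_{i=1}^{t} RF_i(x^n) = RH'(x^n) \cap RF_t(x^n).$$
Now $H'(y)$ has nonzero constant coefficient (being a product of such polynomials) and is coprime with $F_t(y)$ in $\K[y]$, because $\gcd(F_i,F_t)=1$ for every $i<t$. Applying the $t=2$ case to $H'(y)$ and $F_t(y)$ gives $RH'(x^n)\cap RF_t(x^n) = RH(x^n)$, completing the induction. The final assertion about an $(s,m)$-admissible tuple $\FF$ follows immediately, since monic irreducible polynomials of the same degree that are distinct and different from $y$ are automatically pairwise coprime with nonzero constant coefficient.

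The main point requiring care is the degree identity for $\lclm$ and $\gcrd$ in the noncommutative setting; this is classical for Ore extensions (see e.g.\ \cite{jacobson2009finite,gomez2019computing}) and is the one ingredient beyond what is developed in the preliminaries. Everything else reduces to centrality of $F_i(x^n)$, associativity of intersections of left ideals, and Lemma \ref{lem:coprimepolynomials}.
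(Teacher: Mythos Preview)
Your argument is correct. The induction scheme is essentially the same as the paper's (reduce to $t=2$ via associativity of intersections of left ideals), but your treatment of the base case differs. The paper uses Bezout's identity explicitly: from $\gcrd(F_1(x^n),F_2(x^n))=1$ it writes $F_1(x^n)h_1+F_2(x^n)h_2=1$, then manipulates this relation to show directly that any common left multiple of $F_1(x^n)$ and $F_2(x^n)$ is a left multiple of $H(x^n)$. You instead invoke the degree identity $\deg(\lclm(f,g))+\deg(\gcrd(f,g))=\deg(f)+\deg(g)$ and conclude by comparing degrees of monic polynomials. Both routes rest on Lemma~\ref{lem:coprimepolynomials}; yours is shorter and conceptually cleaner, at the cost of importing the degree identity (which, as you note, is classical for left Euclidean domains but not proved in the paper's preliminaries), whereas the paper's computation is fully self-contained modulo Bezout in $R$.
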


\begin{proof}
Note that, if $g=\lclm(F_1(x^n),\ldots,F_t(x^n))$, we have that
\[
\begin{array}{rl}
Rg & =RF_1(x^n)\cap RF_2(x^n) \cap \cdots \cap RF_t(x^n) \\ &= RF_1(x^n)\cap ( RF_2(x^n) \cap \cdots \cap RF_t(x^n)).
\end{array}
\]
So, it is enough to prove the result for $t=2$ and then the result can be easily extended by induction on $t$. We need to prove that, if $f \in R$ is such that $F_1(x^n)$ and $F_2(x^n)$ right-divide $f$, then $H(x^n)$ right-divides $f$ in $R$, as well. So, assume that $F_1(x^n)$ and $F_2(x^n)$ right-divide $f$ in $R$. Then there exist $g_1,g_2 \in R$ such that 
    \begin{equation} \label{eq:divisiblecoprime}
    F_1(x^n)g_1=f
    \end{equation}
    and 
    \begin{equation} \label{eq:paireqcoprime}
    F_1(x^n)g_1=F_2(x^n)g_2.
    \end{equation}
    Since $F_1(y)$ and $F_2(y)$ are coprime, by Lemma \ref{lem:coprimepolynomials}, we know that $F_1(x^n)$ and $F_2(x^n)$ are coprime in $R$. By using Bezout's identity in $R$, we get that
    \[
    F_1(x^n)h_1+F_2(x^n)h_2=1,
    \]
    for some $h_1,h_2 \in R$. This implies that 
    \[
F_1(x^n)g_1h_1+F_2(x^n)g_1h_2=g_1,
    \]
    and so, by Eq. \eqref{eq:paireqcoprime}, 
    \[
F_2(x^n)g_2h_1+F_2(x^n)g_1h_2=g_1.
    \]
   Therefore, $F_2(x^n)$ right-divides $g_1$, and by using Eq. \eqref{eq:divisiblecoprime}, we have the assertion.
\end{proof}

We are now in a position to establish the isomorphism that identifies the direct sum of the quotient rings determined by the elements of an admissible tuple $\FF$ with the quotient ring $R/RH_{\FF}(x^n)$. This follows from the above result together with the Chinese Remainder Theorem for non-commutative rings (see, e.g. \cite{ore1952general}). We include a proof for completeness.

\begin{theorem} \label{th:multiplepolynomialiso}
Let $\FF=(F_1(y),\ldots,F_t(y))$ be an $(s,m)$-admissible tuple in $\K[y]$. Then
the map
\[
\begin{array}{lccl}
    \Phi_{\FF} \colon & R & \longrightarrow & \bigoplus\limits_{i=1}^t \frac{R}{RF_i(x^n)} \\
      & a & \longmapsto & (a+RF_1(x^n),\ldots,a+RF_t(x^n)),
\end{array}
\]
is an $R$-module epimorphism and a $\K$-algebra epimorphism, whose kernel is $RH_{\FF}(x^n)$. Hence, it induces a $R$-module isomorphism and a $\K$-algebra isomorphism
\[
\begin{array}{lccl}
    \overline{\Phi}_{\FF} \colon & \frac{R}{RH_{\FF}(x^n)} & \longrightarrow & \bigoplus\limits_{i=1}^t \frac{R}{RF_i(x^n)} \\
      & a + RH_{\FF}(x^n) & \longmapsto & (a+RF_1(x^n),\ldots,a+RF_t(x^n)),
\end{array}
\]
and, consequently, a $\K$-algebra isomorphism $$\Phi_{\Hf}=(\mat{F_1},\ldots,\mat{F_t})\circ\overline{\Phi}_{\FF}:\frac{R}{RH_{\FF}(x^n)} \longrightarrow \bigoplus_{i=1}^t M_m(\mathcal E(f_i)),$$
where $f_i \in R$ is an irreducible divisor of $F_i(x^n)$ for each $i \in \{1,\ldots,t\}$. 
\end{theorem}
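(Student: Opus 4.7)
The plan is to treat the statement as a noncommutative Chinese Remainder Theorem and then glue on the Artin--Wedderburn isomorphisms of Theorem \ref{th:isomorphismtheoremeigen}. First I would verify the straightforward claim that $\Phi_{\FF}$ is both an $R$-module and a $\K$-algebra homomorphism: it is the diagonal assignment $a \mapsto (a,\dots,a)$ composed with the canonical projections $R \to R/RF_i(x^n)$, and the central polynomials $F_i(x^n) \in Z(R)$ ensure each quotient is well-defined both as a left $R$-module and as a $\K$-algebra, so additivity, multiplicativity and $\K$-linearity are immediate.

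Next I would identify the kernel. Clearly $a \in \ker \Phi_{\FF}$ if and only if $a \in RF_i(x^n)$ for every $i \in \{1,\dots,t\}$, i.e.\ $\ker \Phi_{\FF} = \bigcap_{i=1}^t RF_i(x^n)$. Since each $F_i(y)$ is monic, irreducible and distinct from $y$, they are pairwise coprime in $\K[y]$ with nonzero constant coefficients, so Proposition \ref{prop:coprimecentralpoly} applies and yields
\[
\bigcap_{i=1}^t RF_i(x^n) = R\,\lclm(F_1(x^n),\dots,F_t(x^n)) = RH_{\FF}(x^n),
\]
which is precisely the desired kernel.

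The main step, and the one requiring the most care, is the surjectivity of $\Phi_{\FF}$. By Lemma \ref{lem:coprimepolynomials}, for each pair $i\neq j$ we have $\gcrd(F_i(x^n),F_j(x^n))=1$ in $R$, hence a Bezout identity $F_i(x^n)h_{ij}+F_j(x^n)h_{ji}=1$ for some $h_{ij},h_{ji}\in R$. For $t=2$, given $(\overline{b}_1,\overline{b}_2)$ the element $a := F_2(x^n)h_{21}b_1 + F_1(x^n)h_{12}b_2$ satisfies $a \equiv b_1 \pmod{RF_1(x^n)}$ and $a \equiv b_2 \pmod{RF_2(x^n)}$, because $F_2(x^n)h_{21}\equiv 1$ modulo $RF_1(x^n)$ and symmetrically. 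For general $t$, I would induct: set $G_i(x^n) := \prod_{j\neq i}F_j(x^n)$, observe via Lemma \ref{lem:coprimepolynomials} applied to $\gcd(F_i,\prod_{j\neq i}F_j)=1$ in $\K[y]$ that $\gcrd(F_i(x^n),G_i(x^n))=1$ in $R$, and construct idempotent-like preimages $e_i \in R$ with $e_i \equiv \delta_{ij} \pmod{RF_j(x^n)}$; then $a := \sum_i e_i b_i$ is a preimage of $(\overline{b}_1,\dots,\overline{b}_t)$. The subtlety here is only to check that the Bezout coefficients, which come from $R$ on the left, still produce the correct residues when multiplied on the right by the $b_i$'s -- but this is transparent because each $F_i(x^n)$ is central, so $RF_i(x^n)=F_i(x^n)R$ is two-sided and the congruences are stable under right multiplication.

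Having established that $\Phi_{\FF}$ is a surjection with kernel $RH_{\FF}(x^n)$, the first isomorphism theorem for modules and for $\K$-algebras produces the desired isomorphism $\overline{\Phi}_{\FF}$. Finally, applying componentwise the $\K$-algebra isomorphisms $\mat{F_i}\colon R/RF_i(x^n) \to M_m(\mathcal{E}(f_i))$ from Theorem \ref{th:isomorphismtheoremeigen}, we conclude
\[
\Phi_{\Hf} = (\mat{F_1},\dots,\mat{F_t})\circ \overline{\Phi}_{\FF}\colon \frac{R}{RH_{\FF}(x^n)} \longrightarrow \bigoplus_{i=1}^t M_m(\mathcal{E}(f_i))
\]
is a $\K$-algebra isomorphism, as claimed.
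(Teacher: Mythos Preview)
Your proof is correct. The only genuine difference from the paper is in how surjectivity of $\Phi_{\FF}$ is obtained. You argue constructively via a noncommutative Chinese Remainder Theorem: using Lemma \ref{lem:coprimepolynomials} and Bezout identities you build idempotent-like elements $e_i$ with $e_i \equiv \delta_{ij} \pmod{RF_j(x^n)}$, and your observation that centrality of each $F_i(x^n)$ makes $RF_i(x^n)$ two-sided (hence congruences are stable under right multiplication) is exactly the point that makes this work. The paper instead bypasses the construction entirely with a dimension count: once the kernel is identified as $RH_{\FF}(x^n)$, the induced map $\overline{\Phi}_{\FF}$ is injective between $\LL$-vector spaces both of dimension $nts$, so it is automatically surjective. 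The paper's route is shorter; yours is constructive and would survive without any finiteness hypothesis. Both rest on Proposition \ref{prop:coprimecentralpoly} for the kernel and on Theorem \ref{th:isomorphismtheoremeigen} for the final matrix-algebra isomorphism, so the overall architecture is the same.
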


\begin{proof}
It is clear that $\Phi_{\FF}$ is a $R$-module homomorphism and a $\K$-algebra homomorphism. Let us compute the kernel of this map. An element $a \in R$ is in the kernel of $\Phi_{\FF}$ is and only if $F_i(x^n) \mid a $, for every $i \in \{1,\ldots,t\}$. By Proposition \ref{prop:coprimecentralpoly}, this is equivalent to the fact that $H_{\FF}(x^n) \mid a$ and so $\ker(\Phi_{\FF})=RH_{\FF}(x^n)$. As a consequence, we also have that \[\frac{R}{RH_{\FF}(x^n)} \cong \mathrm{Im}(\Phi_{\FF}).\]
Moreover, note that \[\dim_{\LL}\left(\frac{R}{RH_{\FF}(x^n)}\right)=nts=\dim_{\LL}\left( \bigoplus\limits_{i=1}^t \frac{R}{RF_i(x^n)} \right),
\]
that implies that $\Phi_{\FF}$ is surjective and so $\overline{\Phi}_{{\FF}}$ is an $R$-module and a $\K$-algebra isomorphism. The second part of the statement follows from the fact that each $\mat{F_i}$ is an isomorphism from $R/RF_i(x^n)$ to $M_m(\mathcal E(f_i))$, as shown in Theorem \ref{th:isomorphismtheoremeigen}.
\end{proof}

For an $(s,m)$-admissible tuple $\FF = (F_1(y), \ldots, F_t(y))$ in $\K[y]$, by Theorem \ref{th:isomorphismtheoremeigen}, we know that $R / RF_i(x^n) \cong M_n(\mathcal{E}(f_i))$, where $f_i \in R$ is an irreducible divisor of $F_i(x^n)$. This, together with \Cref{th:multiplepolynomialiso}, proves that the spaces
\begin{equation} \label{eq:congdirecteigen}
\frac{R}{RH_{\FF}(x^n)} \cong \bigoplus_{i=1}^tM_{m}(\mathcal{E}(f_i))
\end{equation}
 are isomorphic as $\K$-algebras. As a consequence, we can define the notion of sum-rank metric directly on the space $R/RH_{\FF}(x^n)$. 

 Similarly to the case $t=1$, we write $\overline{a} \in R / R \Hf(x^n)$ for an element of the quotient ring, 
implicitly referring to its canonical representative
\[
\overline{a} = a + R \Hf(x^n),
\]
where $a \in R$ is the unique skew polynomial of degree strictly less than $tns$ corresponding to the class $\overline{a}$. In particular, we set $\deg(\overline{a}) := \deg(a)$.

\begin{definition}\label{def:F_weight}
    Let $\FF=(F_1,\ldots,F_t)$ be an $(s,m)$-admissible tuple. The \textbf{$\FF$-weight} on the space $R/RH_{\FF}(x^n)$ of an element $\overline{a}=a+R\Hf(x^n)$ is  
    \[
    \wt_{\FF}(\overline{a})=tm-\frac{m}{sn} \deg(\gcrd(a,H_{\FF}(x^n)))=\frac{m}{sn}(\deg(H_F(x^n))-\deg(\gcrd(a,H_{\FF}(x^n)))).
    \]
    Moreover, the  $\FF$-weight induces the $\FF$-distance on $R/RH_{\FF}(x^n)$,
which is defined as
\[
\df(\overline{a},\overline{b}):=\wf(\overline{a}-\overline{b}),
\]
for every $\overline{a},\overline{b} \in R/RH_{\FF}(x^n)$.
\end{definition}

 \begin{remark}
    At this point, the reader who is familiar with skew polynomials and their application to error-correcting codes might wonder what is the relation between the metric induced by the $\FF$-weight given in Definition \ref{def:F_weight} and the so-called \emph{skew metric}. The skew metric has been introduced in \cite{Martinez2018skew} by Martìnez-Pe\~{n}as, and Boucher in \cite[Lemma 1]{boucher2020algorithm} showed that it can be expressed in a way that resembles the $\FF$-weight. More precisely, if $f\in R$ is an element of degree $n$ such that 
    $$f=\mathrm{lclm}\{x-\alpha_i \,:\, i \in \{1,\ldots,n\}\},$$
    for some $\alpha_1,\ldots,\alpha_n \in \LL$, one can define the \textbf{skew metric}  on $\LL^n$ via the following weight function
    $$w_f(y)=\deg(f)-\deg(\gcrd(f,p_y)),$$
    where $p_y \in \LL[x;\sigma]$ is the polynomial of minimum degree such that for every $i \in \{1,\ldots,n\}$,
    $p_y=q(x-\alpha_i)+y_i$. Due to the hypothesis on the degree of $f$, this is equivalent to putting a metric on $R/Rf$, since the above correspondence $y\mapsto p_y$ is a bijection between $\LL^n$ and $R/Rf$. 

    On the one hand, we have that $w_f$ is an $\FF$-weight if and only if the skew polynomial $f$ ends up being the product of $F_i(x^n)$, $i \in \{1,\ldots,t\}$. This is only possible if $F_i(y)=y-\lambda_i$, for some pairwise distinct $\lambda_1,\ldots,\lambda_t \in \K\setminus\{0\}$. 

    Thus, if $s>1$, the metric space defined by the $\FF$-weight for the tuple $\FF$ of polynomials of degree $s$ cannot be equivalent to a skew metric space as defined in \cite{Martinez2018skew}.
\end{remark}

In the following, we prove that the spaces $\left(R/RH_{\FF}(x^n),\mathrm{d}_{\FF}\right)$ and $\left(\bigoplus\limits_{i=1}^tM_{m}(\mathcal{E}(f_i)),\dsrk\right)$ are isometric.

\begin{lemma}
Let $\FF=(F_1,\ldots,F_t)$ be an $(s,m)$-admissible tuple. For every element $a \in R$, we have
\begin{equation}
    \sum_{i=1}^t \deg(\gcrd(a,F_i(x^n))=\deg(\gcrd(a,H_{\FF}(x^n))
\end{equation}
\end{lemma}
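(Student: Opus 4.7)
The plan is to read the identity off from the Chinese Remainder isomorphism of Theorem~\ref{th:multiplepolynomialiso} by comparing $\LL$-dimensions of cyclic left $R$-submodules generated by $a$ on both sides. Put $d:=\gcrd(a,\Hf(x^n))$ and $d_i:=\gcrd(a,F_i(x^n))$, so that $Ra+R\Hf(x^n)=Rd$ and $Ra+RF_i(x^n)=Rd_i$; hence $\dim_\LL(R/Rd)=\deg(d)$ and $\dim_\LL(R/Rd_i)=\deg(d_i)$. The claim is therefore equivalent to $\dim_\LL(R/Rd)=\sum_{i=1}^t\dim_\LL(R/Rd_i)$.

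Let $\overline{\Phi}_\FF\colon R/R\Hf(x^n)\to\bigoplus_{i=1}^t R/RF_i(x^n)$ be the $R$-module isomorphism of Theorem~\ref{th:multiplepolynomialiso}, and denote by $\overline{a}$ and $\overline{a}^{(i)}$ the images of $a$ in $R/R\Hf(x^n)$ and $R/RF_i(x^n)$, respectively. The central step is to show that $\overline{\Phi}_\FF$ carries the cyclic submodule $R\overline{a}$ onto the full direct sum, namely
\[
\overline{\Phi}_\FF(R\overline{a})\;=\;\bigoplus_{i=1}^t R\overline{a}^{(i)}.
\]
The inclusion $\subseteq$ is immediate from $R$-linearity of $\overline{\Phi}_\FF$. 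For the reverse, given an arbitrary tuple $(r_1\overline{a}^{(1)},\ldots,r_t\overline{a}^{(t)})$, the surjectivity of the $\K$-algebra epimorphism $\Phi_\FF\colon R\to\bigoplus_{i=1}^t R/RF_i(x^n)$ (exactly the Chinese Remainder content of Theorem~\ref{th:multiplepolynomialiso}) furnishes $r\in R$ with $r\equiv r_i\pmod{RF_i(x^n)}$ for every $i$; then $r\overline{a}$ maps to the prescribed tuple.

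Taking $\LL$-codimensions on both sides of the above identification, and using $R/R\Hf(x^n)\big/R\overline{a}\cong R/(Ra+R\Hf(x^n))=R/Rd$ together with the analogous identifications in each factor, yields
\[
\deg(d)=\dim_\LL R/Rd=\sum_{i=1}^t\dim_\LL R/Rd_i=\sum_{i=1}^t\deg(d_i),
\]
which is the claim. The only step that is not mechanical is the aforementioned equality (as opposed to the immediate inclusion) of cyclic submodules under $\overline{\Phi}_\FF$; the enabling fact is that each $RF_i(x^n)$ is two-sided (because $F_i(x^n)\in Z(R)$), so that the CRT decomposition of Theorem~\ref{th:multiplepolynomialiso} admits lifts of arbitrary component-wise data.
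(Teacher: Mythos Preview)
Your proof is correct and follows essentially the same approach as the paper: both use the CRT isomorphism $\overline{\Phi}_\FF$ of Theorem~\ref{th:multiplepolynomialiso} to identify the cyclic left $R$-submodule $R\overline{a}\subseteq R/R\Hf(x^n)$ with $\bigoplus_i R\overline{a}^{(i)}$ and then compare $\LL$-dimensions. You take codimensions (i.e.\ pass to the quotients $R/Rd$ and $R/Rd_i$) while the paper equates the dimensions of the submodules $Rd/R\Hf(x^n)$ and $\bigoplus_i Rd_i/RF_i(x^n)$, and you are more explicit in justifying why $\overline{\Phi}_\FF(R\overline{a})$ equals, and is not merely contained in, the direct sum; these are cosmetic differences only.
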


\begin{proof}
    Consider the $R$-module isomorphism $\overline{\Phi}_{{\FF}}$ established in \Cref{th:multiplepolynomialiso}. Let $a$ be a nonzero element of $R$. We have that 
\[
\frac{Ra+RH_{\FF}(x^n)}{RH_{\FF}(x^n)} \cong \overline{\Phi}_{\FF}\left(\frac{Ra+RH_{\FF}(x^n)}{RH_{\FF}(x^n)}\right) = \bigoplus\limits_{i=1}^t \frac{Ra+RH_{\FF}(x^n)}{RF_i(x^n)} \cong \bigoplus\limits_{i=1}^t \frac{Ra+RF_i(x^n)}{RF_i(x^n)} 
\] 
are isomorphic as left $R$-module. In particular, \[
\frac{Ra+RH_{\FF}(x^n)}{RH_{\FF}(x^n)}=\frac{R\gcrd(a,H_{\FF}(x^n))}{RH_{\FF}(x^n)}
\] and \[\bigoplus\limits_{i=1}^t \frac{Ra+RF_i(x^n)}{RF_i(x^n)}=\bigoplus\limits_{i=1}^t \frac{R\gcrd(a,F_i(x^n))}{RF_i(x^n)}\] 
are isomorphic as $\LL$-left vector spaces  and so they need to have the same dimension over $\LL$. This means that
\[
\deg(\gcrd(a,H_{\FF}(x^n))) - \deg(H_{\FF}(x^n))=\sum_{i=1}^t\left(\deg(\gcrd(a,F_i(x^n)))-\deg(F_i(x^n))\right).
\]
So, the assertion follows. 
\end{proof}

As a consequence, taking into account \Cref{th:rankpolynomial}, we derive the following  important result that highlights the isometric relation between the two metric spaces.

\begin{theorem}\label{thm:main_isometry} Let $\FF=(F_1,\ldots,F_t)$ be an $(s,m)$-admissible tuple. Then
\[
\df(\overline{a},\overline{b})=\dsrk(\Phi_{H_{\FF}}(\overline{a}),\Phi_{H_{\FF}}(\overline{b})),
\]
for every $\overline{a},\overline{b} \in R/RH_{\FF}(x^n)$.
In particular, the map 
\[\Phi_{\Hf}:\left(\frac{R}{RH_{\FF}(x^n)},\mathrm{d}_{\FF}\right)\longrightarrow \left(\bigoplus\limits_{i=1}^tM_{m}(\mathcal{E}(f_i)),\dsrk\right)\] is an isometry of metric spaces.
\end{theorem}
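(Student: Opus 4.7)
The plan is to reduce the claim to the statement that $\wf(\overline{a})$ coincides with the total sum of block-ranks of $\Phi_{H_\FF}(\overline a)$, and then assemble the identity out of ingredients already in place: Theorem \ref{th:rankpolynomial} (which turns each block-rank into a gcrd-degree) and the lemma immediately preceding the theorem (which tells us that the gcrd-degrees against the individual $F_i(x^n)$ add up to the gcrd-degree against the full product $H_\FF(x^n)$). First, I would note that both $\df$ and $\dsrk$ are translation invariant: the latter by definition, and the former because $\wf(\overline{a}-\overline{b})$ is defined precisely through the class $\overline{a}-\overline{b}$. Moreover, the $\K$-algebra homomorphism $\Phi_{H_\FF}$ respects subtraction. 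Consequently, it suffices to show that, for every $\overline a \in R/RH_\FF(x^n)$,
\[
\wf(\overline a)\;=\;\sum_{i=1}^t \rk\!\bigl(\mathcal M_{F_i}(\overline a+RF_i(x^n))\bigr).
\]

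Next, I would unwind the definition of $\Phi_{H_\FF}=(\mathcal M_{F_1},\dots,\mathcal M_{F_t})\circ\overline\Phi_{\FF}$. Given the canonical representative $a\in R$ of $\overline a$ (of degree less than $tns$), the image $\overline\Phi_\FF(\overline a)$ has $i$-th coordinate $a+RF_i(x^n)\in R/RF_i(x^n)$. Applying $\mathcal M_{F_i}$ to this coordinate and invoking Theorem \ref{th:rankpolynomial}, the rank of the $i$-th block equals
\[
m-\frac{m}{sn}\,\deg\!\bigl(\gcrd(a,F_i(x^n))\bigr).
\]
Summing over $i\in\{1,\dots,t\}$ gives
\[
\sum_{i=1}^t \rk\!\bigl(\mathcal M_{F_i}(a+RF_i(x^n))\bigr)\;=\;tm-\frac{m}{sn}\sum_{i=1}^t \deg\!\bigl(\gcrd(a,F_i(x^n))\bigr).
\]

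Finally, I would apply the lemma that immediately precedes the theorem, which asserts that $\sum_{i=1}^t\deg(\gcrd(a,F_i(x^n)))=\deg(\gcrd(a,H_\FF(x^n)))$. Substituting this into the previous display yields
\[
\sum_{i=1}^t \rk\!\bigl(\mathcal M_{F_i}(a+RF_i(x^n))\bigr)\;=\;tm-\frac{m}{sn}\,\deg\!\bigl(\gcrd(a,H_\FF(x^n))\bigr),
\]
which is exactly $\wf(\overline a)$ by Definition \ref{def:F_weight}. Combined with translation invariance, this gives $\df(\overline a,\overline b)=\dsrk(\Phi_{H_\FF}(\overline a),\Phi_{H_\FF}(\overline b))$, and since $\Phi_{H_\FF}$ is a bijection (it is a composition of $\K$-algebra isomorphisms by Theorem \ref{th:multiplepolynomialiso} and Theorem \ref{th:isomorphismtheoremeigen}), it is an isometry. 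The only step that requires any care, rather than being purely formal, is checking that one is genuinely allowed to pass from the degree of the gcrd against $H_\FF(x^n)$ to the sum of the degrees of the gcrds against the individual $F_i(x^n)$ for a representative $a$ of degree below $tns$; this is precisely the content of the preceding lemma and rests on the Chinese-remainder-type decomposition of Theorem \ref{th:multiplepolynomialiso}, so no further obstacle remains.
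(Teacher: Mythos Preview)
Your proof is correct and follows exactly the route the paper takes: the paper derives the theorem directly from the preceding lemma (the additivity of gcrd-degrees under the CRT decomposition of Theorem \ref{th:multiplepolynomialiso}) combined with Theorem \ref{th:rankpolynomial} applied blockwise. Your write-up simply makes explicit the translation-invariance reduction and the summation that the paper leaves implicit in the phrase ``As a consequence, taking into account Theorem \ref{th:rankpolynomial}.''
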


\subsection{Construction of admissible tuples}
\label{sec:constr_admissibletuples}
 In order to realize the ambient space $(R/R\Hf(x^n),\dF)$, which is isometric to the space $$\left(\bigoplus\limits_{i=1}^tM_{m}(\mathcal{E}(f_i)),\dsrk\right),$$  
 the first step is to build $(s,m)$-admissible tuples in $\K[y]$. To this aim, we present the following constructive method. 
For any positive integer $i$, we define the \textbf{$i$-th truncated norm (with respect to $\sigma$)} of an element $\alpha \in \LL$ as
\[
\N^{i}_{\sigma}(\alpha):=\prod_{j=0}^{i-1}\sigma^i (\alpha).
\]
We set $\N^{0}_{\sigma}(\alpha):=1$. Also note that 
\begin{equation} \label{eq:truncatedmultiplen}
    \N^{jn}_\sigma(\alpha)=\N_{\LL/\KK}(\alpha)^j,
\end{equation}
for any positive integer $j$.

For an element $f \in R$ and $\alpha \in \LL^*$, we denote by $f_{\alpha}$, the skew polynomial $f(\alpha x)$. Precisely, if $f=\sum_{i}f_ix^i$, we have
\[
f_{\alpha}=f(\alpha x)=\sum_{i} f_i(\alpha x)^i= \sum_{i} f_i\N_\sigma^i(\alpha)x^i.
\]
\begin{remark} \label{rk:shiftcentralpolynomial}
In particular, note that if $F(y) \in \KK[y]$, by Eq. \eqref{eq:truncatedmultiplen}, we have 
\[
F_{\alpha}(x^n)=F(\lambda x^n),
\]
for any $\alpha \in \LL^{\star}$ such that $\N_{\LL/\KK}(\alpha)=\lambda$.
\end{remark}

It is easy to check that the map 
\[\begin{array}{rccc}
    \omega_{\alpha}: & R &\longrightarrow & R \\
     &f &\longmapsto &f_{\alpha} 
\end{array}
  \]
is a ring isomorphism, and so
\begin{equation} \label{eq:shiftomomorphism}
(fg)_{\alpha}=f_{\alpha}g_{\alpha}, \end{equation}
for any $f,g \in R$.

\begin{proposition}
\label{prop:constructioninifinitetuple}
    Let $F(y) \in \K[y]$ be a monic irreducible polynomial having degree $s\geq 1$, with $F(y) \neq y$. Assume that $\lambda_1,\ldots,\lambda_t \in \{\N_{\LL/\KK}(\alpha)\colon \alpha \in \LL^*\}$ are such that 
    \begin{equation} \label{eq:conditionslperambdai} \qquad 
    \lambda_i^s \neq \lambda_j^s \qquad \mbox{ for each } i \neq j.
    \end{equation}
    Define \[F_i(y):=\lambda_i^{-s}F(\lambda_iy).\] Then $\FF=(F_1,\ldots,F_t)$ is an $(s,m)$-admissible tuple in $\K[y]$, where $m$ is the number of irreducible factors in irreducible decompositions of $F(x^n)$ in $R$. Also, 
    \[
    \frac{R}{R\Hf(x^n)} \cong \bigoplus_{i=1}^t M_m(\mathcal{E}(f_{\alpha_i})),
    \]
    where $\alpha_1, \ldots, \alpha_t \in \LL$ are such that $\N_{\LL/\KK}(\alpha_i) = \lambda_i$ and $f$ is an irreducible factor of $F(x^n)$ in $R$.
\end{proposition}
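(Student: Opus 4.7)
The plan is to verify, in order, the three conditions in the definition of an $(s,m)$-admissible tuple for $\FF=(F_1,\ldots,F_t)$, and then deduce the stated isomorphism by combining Theorems \ref{th:multiplepolynomialiso} and \ref{th:isomorphismtheoremeigen} with the shift automorphism $\omega_{\alpha_i}$.

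First I would check the basic properties of each $F_i$. Writing $F(y)=y^s+\sum_{j<s}a_j y^j$, one computes $F_i(y)=y^s+\sum_{j<s}a_j \lambda_i^{j-s} y^j$. Since $\lambda_i=\N_{\LL/\K}(\alpha_i)\in\K$, this is a monic degree-$s$ element of $\K[y]$, and since the substitution $y\mapsto \lambda_i y$ is a $\K$-algebra automorphism of $\K[y]$, $F_i$ inherits irreducibility from $F$. The constant term is $\lambda_i^{-s} F(0)$, which is nonzero because $F$ is monic irreducible and different from $y$, so $F_i\neq y$. Pairwise distinctness follows by comparing constant terms: $F_i(0)=F_j(0)$ would force $\lambda_i^s=\lambda_j^s$, contradicting \eqref{eq:conditionslperambdai}.

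Next I would count the irreducible factors of $F_i(x^n)$ in $R$. By Remark \ref{rk:shiftcentralpolynomial} applied to $\alpha_i$ with $\N_{\LL/\K}(\alpha_i)=\lambda_i$, one has $F_{\alpha_i}(x^n)=F(\lambda_i x^n)$, so
\[
F_i(x^n)=\lambda_i^{-s} F_{\alpha_i}(x^n).
\]
The shift $\omega_{\alpha_i}\colon R\to R$, $f\mapsto f_{\alpha_i}$, is a ring automorphism, hence it preserves irreducibility and factorization length. Therefore, if $F(x^n)=f_1\cdots f_m$ is an irreducible factorization in $R$, then by \eqref{eq:shiftomomorphism} we get $F_{\alpha_i}(x^n)=(f_1)_{\alpha_i}\cdots(f_m)_{\alpha_i}$, an irreducible factorization of length $m$. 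Multiplying by the unit $\lambda_i^{-s}\in\LL^*$ in $R$ does not alter the factorization length, so $F_i(x^n)$ likewise factors into exactly $m$ irreducible elements of $R$. This establishes admissibility.

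Finally, I would derive the direct sum decomposition. With admissibility in hand, Theorem \ref{th:multiplepolynomialiso} gives the $\K$-algebra isomorphism $R/RH_{\FF}(x^n)\cong\bigoplus_{i=1}^t R/RF_i(x^n)$, and Theorem \ref{th:isomorphismtheoremeigen} identifies each summand with $M_m(\mathcal{E}(g_i))$ for any irreducible right divisor $g_i$ of $F_i(x^n)$ in $R$. Choosing $g_i=f_{\alpha_i}$ from the factorization obtained in Step~2 yields the stated isomorphism. The main obstacle here is really just the bookkeeping of the identity $F_{\alpha_i}(x^n)=F(\lambda_i x^n)$: once that is clear from \eqref{eq:truncatedmultiplen}, everything else is a direct application of earlier results, and no further computation on eigenrings is needed because $\omega_{\alpha_i}$ transports them functorially.
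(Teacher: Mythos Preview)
Your proof is correct and follows essentially the same approach as the paper: verify monicity, irreducibility, and distinctness of the $F_i$; use the shift automorphism $\omega_{\alpha_i}$ to relate $F_i(x^n)$ to $F(x^n)$; and then invoke Theorems~\ref{th:multiplepolynomialiso} and~\ref{th:isomorphismtheoremeigen}. The one minor difference is in the count of irreducible factors: the paper takes a single irreducible factor $f$ of $F(x^n)$, transports it to an irreducible factor $f_{\alpha_i}$ of $F_i(x^n)$, and then invokes Proposition~\ref{th:basicskewpolynomial} twice to equate the degree $sn/m$ with $sn/m'$; your argument instead transports the entire factorization $F(x^n)=f_1\cdots f_m$ through the ring automorphism $\omega_{\alpha_i}$ in one stroke, which is slightly more direct and avoids the degree bookkeeping.
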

\begin{proof}
   We observe that $F_i(y) \neq F_j(y)$, whenever $i \neq j$. Indeed, if this were not the case, then by equating the constant coefficients of $F_i(y)$ and $F_j(y)$, we would obtain $\lambda_i^{s} = \lambda_j^{s}$, which contradicts Eq. \eqref{eq:conditionslperambdai}. Clearly, since $F(y)$ is irreducible in $\K[y]$, we get that each $F_i(y)$ is a monic irreducible polynomial in $\K[y]$ of degree $s$ as well.
Finally, we need to show that each $F_i(x^n)$ admits a factorization into $m$ irreducible factors in $R$. Let $f$ be an irreducible factor of $F(x^n)$ in $R$. Since $F(y) \neq y$, we have $\gcrd(f, x) = 1$. Then, by \Cref{th:basicskewpolynomial}, it follows that $\deg(f) = sn/m$. Since for every $i\in\{1,\ldots,t\}$ the map $\omega_{\alpha_i}$ is a ring homomorphism on $R$, we have 
    \[
    f_{\alpha_i} \mid_r F_{\alpha_i}(x^n)=F(\lambda_ix^n),
    \]
    where the equality follows from \Cref{rk:shiftcentralpolynomial}. In addition,  for every $i\in\{1,\ldots,t\}$, $f_{\alpha_i}$ turns out to be an irreducible factor of $F_i(x^n)=\lambda_i^{-s}F(\lambda_ix^n)$ with $\deg(f_{\alpha_i}) = \deg(f) = sn/m$.
Hence, once again applying \Cref{th:basicskewpolynomial}, we conclude that the number of irreducible factors in the factorization of $F_i(x^n)$ in $R$ is exactly $m$ which proves the claim. The final isomorphism then follows directly from Eq. \eqref{eq:congdirecteigen}.
\end{proof}

 We conclude this section by showing that the metric space 
$$\left(\bigoplus\limits_{i=1}^tM_{m}(\mathcal{E}(f_{\alpha_i})),\dsrk\right),$$  
obtained via $\Phi_{\Hf}$, when starting from an admissible tuple $\FF$ as in Proposition \ref{prop:constructioninifinitetuple}, is of a special kind: all the summand are isomorphic. We prove this by showing that for any $f \in R$ and $\alpha \in \LL^*$, the eigenrings of $f$ and $f_{\alpha}$ are isomorphic as rings.  

\begin{proposition}
    Let $f$ be a nonzero element of $R$, and let $\alpha \in \LL^*$. Then $\mathcal{E}(f)$ and $\mathcal{E}(f_{\alpha})$ are isomorphic as rings. 
    In particular, if $(F_1, \ldots, F_t)$ is an $(s,m)$-admissible tuple over $\K[y]$ as in \Cref{prop:constructioninifinitetuple}, and $f$ is as in \Cref{prop:constructioninifinitetuple}, then 
    \[
    \frac{R}{R\Hf(x^n)} \cong \bigoplus_{i=1}^t M_m(\mathcal{E}(f_{\alpha_i})) \cong \bigoplus_{i=1}^t M_m(\mathcal{E}(f)).
    \]
\end{proposition}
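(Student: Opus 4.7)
The plan is to transport the eigenring $\mathcal{E}(f)$ to $\mathcal{E}(f_\alpha)$ via the ring automorphism $\omega_\alpha : R \to R$ introduced just before the proposition. Recall that $\omega_\alpha$ is bijective with inverse $\omega_{\alpha^{-1}}$ (since $\alpha \in \LL^*$) and multiplicative by \eqref{eq:shiftomomorphism}; it therefore suffices to show that it maps the left ideal $Rf$ onto $Rf_\alpha$ and the right idealizer $I(f)$ onto $I(f_\alpha)$, so that it descends to a well-defined ring isomorphism on the quotients.

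First, for any $g \in R$ we have $\omega_\alpha(gf) = g_\alpha f_\alpha \in Rf_\alpha$ by \eqref{eq:shiftomomorphism}, so $\omega_\alpha(Rf) \subseteq Rf_\alpha$; the reverse inclusion follows by applying the analogous computation to $\omega_{\alpha^{-1}}$. Next, by definition $g \in I(f)$ means $fg \in Rf$, which under $\omega_\alpha$ translates to $f_\alpha g_\alpha \in Rf_\alpha$, i.e., $g_\alpha \in I(f_\alpha)$, giving $\omega_\alpha(I(f)) = I(f_\alpha)$. Since $I(f)$ is a subring of $R$ and $\omega_\alpha$ is a ring homomorphism, the induced map
\[
\overline{\omega}_\alpha : \mathcal{E}(f) = \frac{I(f)}{Rf} \longrightarrow \frac{I(f_\alpha)}{Rf_\alpha} = \mathcal{E}(f_\alpha)
\]
is a well-defined ring isomorphism.

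For the \emph{In particular} part, the first isomorphism already appears in Proposition \ref{prop:constructioninifinitetuple}. Applying the ring isomorphism $\mathcal{E}(f_{\alpha_i}) \cong \mathcal{E}(f)$ entrywise to matrices yields $M_m(\mathcal{E}(f_{\alpha_i})) \cong M_m(\mathcal{E}(f))$ for each $i$, and taking the direct sum over $i$ completes the chain of isomorphisms. I do not foresee a serious obstacle: the argument is a direct bookkeeping exercise once one notes that $\omega_\alpha$ is a ring automorphism of $R$. The only care required is in confirming the equalities $\omega_\alpha(Rf) = Rf_\alpha$ and $\omega_\alpha(I(f)) = I(f_\alpha)$, which however follow immediately from the multiplicativity of $\omega_\alpha$.
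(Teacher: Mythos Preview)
Your proof is correct and follows essentially the same approach as the paper: both transport the eigenring structure through the ring automorphism $\omega_\alpha$, the only cosmetic difference being that the paper packages the argument as a single map $\Gamma_\alpha : I(f) \to \mathcal{E}(f_\alpha)$ with kernel $Rf$, whereas you first establish $\omega_\alpha(Rf)=Rf_\alpha$ and $\omega_\alpha(I(f))=I(f_\alpha)$ and then pass to the quotient.
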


\begin{proof}
    Consider the map

      \[\begin{array}{rccc}
    \Gamma_{\alpha} : & I(f) & \longrightarrow & \mathcal{E}(f_{\alpha})  = \frac{I(f_{\alpha})}{Rf_{\alpha}} \\
    & g & \longmapsto & g_{\alpha} + Rf_{\alpha}.\end{array}
    \]
    This map is well-defined. Indeed, $g \in I(f)$ if and only if $fg \in Rf$. Since $\omega_{\alpha}$ is an automorphism of $R$, it follows that $f_{\alpha} g_{\alpha} \in R f_{\alpha}$, hence $g_{\alpha} \in I(f_{\alpha})$.  
    Moreover, $\omega_{\alpha}$ being an automorphism implies that $\Gamma_{\alpha}$ is surjective.  
    Finally, one can verify that $\ker(\Gamma_{\alpha}) = Rf$, so we obtain
    \[
    \mathcal{E}(f) = \frac{I(f)}{Rf} \;\cong\; \frac{I(f_{\alpha})}{Rf_{\alpha}} = \mathcal{E}(f_{\alpha}).
    \]
\end{proof}

We conclude this section by showing how, given an explicit isomorphism $\mat{F} : R_F \longrightarrow M_n(\F_{q^s}),$
one can construct an explicit isomorphism 
$
\Phi_{H_{\mathbf{F}}} : R / R H_{\mathbf{F}}(x^n) \longrightarrow \bigoplus_{i=1}^t M_n(\F_{q^s}),
$
whenever the $s$-admissible tuple $\mathbf{F}$ is constructed as in \Cref{prop:constructioninifinitetuple} starting from $F(y)$.  
We begin with the following result.

\begin{proposition} \label{prop:shiftisomorphism}
    Let $F(y)$ be a monic irreducible polynomials of $\K[y]$ having degree $s\geq 1$. 
 Let $G(y)=\lambda^{-s}F(\lambda y)$, for some $\lambda \in \K^*$. Then 
    \begin{equation} \label{eq:inducedomega}
\begin{tabular}{l c c c }
$\overline{\omega}_{\alpha}:$ & $\frac{R}{RF(x^n)}$ & $\longrightarrow$ &  $\frac{R}{RG(x^n)}$  \\
& $a+RF(x^n)$ & $\longmapsto$ & $a_{\alpha}+RG(x^n)$,
\end{tabular}
\end{equation}
where $\alpha \in \LL$ satisfies $\N_{\LL/\K}(\alpha) = \lambda$, is a ring isomorphism.
\end{proposition}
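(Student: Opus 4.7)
The plan is to lift the ring automorphism $\omega_\alpha$ of $R$ (introduced just before Remark \ref{rk:shiftcentralpolynomial}) to the quotients, and then to check bijectivity by a dimension count.

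First I would use the hypothesis $\N_{\LL/\K}(\alpha)=\lambda$ together with Remark \ref{rk:shiftcentralpolynomial} to compute
\[
\omega_\alpha(F(x^n)) \;=\; F_\alpha(x^n) \;=\; F(\lambda x^n) \;=\; \lambda^{s}\, G(x^n),
\]
where the last equality is the defining relation $G(y)=\lambda^{-s}F(\lambda y)$. Since $\omega_\alpha$ is a ring isomorphism of $R$ and $\lambda^{s}\in\K^{*}$ is a unit, this gives $\omega_\alpha(RF(x^n)) = R\,\omega_\alpha(F(x^n)) = RG(x^n)$.

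Next, let $\pi_G\colon R\to R/RG(x^n)$ be the canonical projection. The composition $\pi_G\circ\omega_\alpha$ is a ring homomorphism which, by the previous display, kills $RF(x^n)$. Hence it factors through $R/RF(x^n)$, producing exactly the map $\overline{\omega}_\alpha$ of \eqref{eq:inducedomega}, and automatically making it a ring homomorphism.

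Finally I would establish that $\overline{\omega}_\alpha$ is bijective. Surjectivity is immediate: $\omega_\alpha$ is surjective on $R$, so $\pi_G\circ\omega_\alpha$ is surjective, and therefore so is $\overline{\omega}_\alpha$. For injectivity one may invoke dimensions, since both $R/RF(x^n)$ and $R/RG(x^n)$ are $\LL$-vector spaces of dimension $sn$ (note that $\omega_\alpha(f)=\sum f_i\N_\sigma^{i}(\alpha)x^{i}$ is $\LL$-linear on the left). Alternatively, one can exhibit the inverse explicitly: the identity $\N_\sigma^{i}(\alpha)\N_\sigma^{i}(\alpha^{-1})=1$ shows $\omega_\alpha^{-1}=\omega_{\alpha^{-1}}$, and the same computation as above (with $\alpha^{-1}$ and $\lambda^{-1}$ in place of $\alpha$ and $\lambda$) yields a well-defined ring homomorphism $\overline{\omega}_{\alpha^{-1}}\colon R/RG(x^n)\to R/RF(x^n)$, which is clearly two-sided inverse to $\overline{\omega}_\alpha$. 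No real obstacle is anticipated; the only delicate point is keeping track of the scalar $\lambda^{s}$ that appears when intertwining $F(x^n)$ with $G(x^n)$, which is precisely why the normalization $G(y)=\lambda^{-s}F(\lambda y)$ (rather than simply $F(\lambda y)$) makes the statement clean.
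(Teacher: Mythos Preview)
Your proposal is correct and follows essentially the same approach as the paper: compose the ring automorphism $\omega_\alpha$ of $R$ with the projection to $R/RG(x^n)$ and identify the kernel as $RF(x^n)$ via the relation $\omega_\alpha(F(x^n))=\lambda^{s}G(x^n)$. The paper phrases the kernel computation directly (showing $G(x^n)\mid_r a_\alpha \iff F(x^n)\mid_r a$), whereas you deduce it from $\omega_\alpha(RF(x^n))=RG(x^n)$ and then appeal to a dimension count or the explicit inverse $\overline{\omega}_{\alpha^{-1}}$; these are cosmetic variations of the same argument.
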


\begin{proof}
    As already observed, the map $\omega_{\alpha} : a \in R \longmapsto a_{\alpha} \in R$
    is a ring isomorphism. Hence, it induces a surjective ring homomorphism $
    \omega_{\alpha}' : a \in R \longmapsto a_{\alpha} + R G(x^n) \in R / R G(x^n)$.
    Let us compute the kernel of this map. An element $a \in R$ satisfies 
    $a_{\alpha} + R G(x^n) = R G(x^n)$ if and only if 
    \[
     \lambda^{-s} F_{\alpha}(x^n) = \lambda^{-s} F(\lambda x^n)= G(x^n) \;\mid\; a_{\alpha},
    \]
    where the first equality follows from \Cref{rk:shiftcentralpolynomial}.  
    Therefore, $F(x^n) \mid a$, which proves the claim.
\end{proof}

By combining the above result with \Cref{prop:constructioninifinitetuple}, 
we can describe an explicit isomorphism $
\Phi_{H_{\mathbf{F}}} : R / R H_{\mathbf{F}}(x^n) \longrightarrow \bigoplus_{i=1}^t M_m(\mathcal{E}(f_{\alpha_i})),
$.

\begin{proposition} \label{prop:explicitchineseremaindertuple}
    Consider the same notation as in Proposition \ref{prop:constructioninifinitetuple}. Let 
\[
\mat{F} : R / R F(x^n) \longrightarrow M_m(\mathcal{E}(f))
\]
be a ring isomorphism. Then, the map
\[
\overline{a} \in \frac{R}{R \Hf(x^n)} \longmapsto 
\left( 
\mat{F}\big(\overline{\omega}_{\alpha_1^{-1}}(\overline{a})\big), 
\ldots, 
\mat{F}\big(\overline{\omega}_{\alpha_t^{-1}}(\overline{a})\big) 
\right)
\in \bigoplus_{i=1}^t M_m(\mathcal E(f_{\alpha_i})),
\]
where $\overline{\omega}_{\alpha_i}$ is defined as in Eq.~\eqref{eq:inducedomega} for each $i$, is also a ring isomorphism.
\end{proposition}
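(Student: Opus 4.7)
The plan is to realize the claimed map as the composite of the Chinese-remainder isomorphism $\overline{\Phi}_{\mathbf{F}}$ from Theorem~\ref{th:multiplepolynomialiso} with $t$ componentwise ring isomorphisms onto $M_m(\mathcal{E}(f))$, each built by transporting $\mathcal{M}_F$ through a shift isomorphism supplied by Proposition~\ref{prop:shiftisomorphism}. The final identification of $M_m(\mathcal{E}(f))$ with $M_m(\mathcal{E}(f_{\alpha_i}))$ will be the one induced by the ring isomorphism $\mathcal{E}(f) \cong \mathcal{E}(f_{\alpha_i})$ established in the preceding proposition.

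First, for each $i \in \{1,\ldots,t\}$ I would identify which shift inverts the normalization $F_i(y) = \lambda_i^{-s} F(\lambda_i y)$. A direct substitution gives $\lambda_i^{s} F_i(\lambda_i^{-1} y) = F(y)$, and since $\N_{\LL/\K}(\alpha_i^{-1}) = \lambda_i^{-1}$, Proposition~\ref{prop:shiftisomorphism}, applied with source polynomial $F_i$, scalar $\lambda_i^{-1}$, and element $\alpha_i^{-1}$, yields a ring isomorphism
\[
\overline{\omega}_{\alpha_i^{-1}}: \frac{R}{RF_i(x^n)} \longrightarrow \frac{R}{RF(x^n)}.
\]
Post-composing with $\mathcal{M}_F$ produces a ring isomorphism $\mathcal{M}_F \circ \overline{\omega}_{\alpha_i^{-1}}: R/RF_i(x^n) \to M_m(\mathcal{E}(f))$, which, after identification, becomes a ring isomorphism onto $M_m(\mathcal{E}(f_{\alpha_i}))$.

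To conclude, by Theorem~\ref{th:multiplepolynomialiso} the map $\overline{\Phi}_{\mathbf{F}}: R/RH_{\mathbf{F}}(x^n) \to \bigoplus_{i=1}^{t} R/RF_i(x^n)$ sending $\overline{a} = a + RH_{\mathbf{F}}(x^n)$ to the tuple $(a + RF_1(x^n), \ldots, a + RF_t(x^n))$ is a $\K$-algebra isomorphism. Applying the componentwise isomorphisms built in the previous paragraph produces exactly the map displayed in the statement; the implicit reduction $a + RH_{\mathbf{F}}(x^n) \mapsto a + RF_i(x^n)$ is absorbed in the notation $\overline{\omega}_{\alpha_i^{-1}}(\overline{a})$ and is well-defined because $F_i(x^n)$ right-divides $H_{\mathbf{F}}(x^n)$. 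The overall composition is then a ring isomorphism, as required.

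No substantial obstacle is expected, since all ingredients have been prepared in earlier results. The only point deserving explicit attention is the elementary identity $\lambda_i^{s} F_i(\lambda_i^{-1} y) = F(y)$, which is precisely what guarantees that $\overline{\omega}_{\alpha_i^{-1}}$ lands in $R/RF(x^n)$, so that the composition with $\mathcal{M}_F$ makes sense.
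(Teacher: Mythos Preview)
Your proof is correct and uses the same core ingredients as the paper: the Chinese-remainder isomorphism $\overline{\Phi}_{\FF}$ from Theorem~\ref{th:multiplepolynomialiso}, the shift isomorphisms $\overline{\omega}_{\alpha_i^{-1}}$ from Proposition~\ref{prop:shiftisomorphism}, and the identification $\mathcal{E}(f)\cong\mathcal{E}(f_{\alpha_i})$. Your argument is in fact slightly more direct than the paper's. The paper first recalls that $\Phi_{\Hf}=(\mat{F_1},\ldots,\mat{F_t})\circ\overline{\Phi}_{\FF}$ is a ring isomorphism, then invokes the Skolem--Noether theorem to show that each $\mat{F_i}$ and $\mat{F}\circ\overline{\omega}_{\alpha_i^{-1}}$ differ by an inner automorphism of $M_m(\mathcal{E}(f_{\alpha_i}))$, and concludes that the displayed map is conjugate (blockwise) to $\Phi_{\Hf}$, hence itself an isomorphism. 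You bypass this by observing directly that the displayed map is a composite of ring isomorphisms; Skolem--Noether is not needed for the bare statement of the proposition. The paper's detour does yield the extra information that the new map agrees with any previously chosen $\Phi_{\Hf}$ up to inner automorphisms, but for the proposition as stated your route suffices and is cleaner.
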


\begin{proof}
By \Cref{th:multiplepolynomialiso} and \Cref{prop:constructioninifinitetuple}, we know that the map
\[
\begin{array}{rccl}
\Phi_{\Hf}=(\mat{F_1},\ldots,\mat{F_t})\circ\overline{\Phi}_{\FF}: 
& \dfrac{R}{R H_{\FF}(x^n)} & \longrightarrow & \displaystyle\bigoplus_{i=1}^t M_m(\mathcal E(f_{\alpha_i})) \\
& \overline{a} & \longmapsto & \left(\mat{F_1}(\overline{a}),\ldots,\mat{F_t}(\overline{a})\right)
\end{array}
\]
is a ring isomorphism. Moreover, \Cref{prop:shiftisomorphism} implies that 
$\omega_{\alpha^{-1}} \circ \mat{F}$ is a ring isomorphism between 
$R / R F_i(x^n)$ and $M_m(\mathcal{E}(f)) \simeq M_m(\mathcal{E}(f_{\alpha}))$. Since $\mat{F_i} : R / R F_i(x^n) \longrightarrow M_m(\mathcal{E}(f_{\alpha_i}))$ 
is also a ring isomorphism, by the Skolem--Noether theorem (see e.g. \cite[Theorem 2.7.2]{gille2017central}), we obtain that, for every $i\in\{1,\ldots,t\}$, $\mat{F_i}$ and $\mat{F}$ are conjugated, that is, there exists an invertible matrix 
$A_i \in M_m(\mathcal{E}(f_{\alpha_i}))$ such that
\[
\mat{F_i}(\overline{a})
  = A_i^{-1}\,\mat{F}\!\left(\overline{\omega}_{\alpha_i^{-1}}(\overline{a})\right) A_i.
\]
 The assertion follows immediately.
\end{proof}

As an illustrative example over finite fields, we describe an explicit ring isomorphism between 
$R / R H_{\mathbf{F}}(x^n)$ and $\bigoplus_{i=1}^t M_n(\F_{q^s})$, where the $s$-admissible tuple $\mathbf{F}$ is constructed as in \Cref{prop:constructioninifinitetuple}. 
This construction makes use of the explicit isomorphism between $R_F$ and $M_n(\F_{q^s})$ in the case $n=s=3$, described in \Cref{ex:isomorphismn=s=3}, together with the result above.

\begin{example} \label{ex:constructiontuple3}
We consider the same setting of \Cref{ex:isomorphismn=s=3}. Let $\lambda_1,\ldots,\lambda_t \in \F_q^*$ be such that 
    \[ \qquad 
    \lambda_i^3 \neq \lambda_j^3 \qquad \mbox{ for each } i \neq j.
    \] 
    Define \[F_i(y):=\lambda_i^{-3}F(\lambda_iy).\] Then, by \Cref{prop:constructioninifinitetuple}, we know that $\FF=(F_1,\ldots,F_t)$ is a $3$-admissible tuple in $\F_q[y]$. Let $\alpha_1, \ldots, \alpha_t \in \F_{q^n}$ such that $\N_{\F_{q^n}/\F_q}(\alpha_i) = \lambda_i$.
Then, by using \Cref{prop:explicitchineseremaindertuple}, the map
\[\begin{array}{rccl}
\Phi_{\Hf}&:\frac{R}{R\Hf(x^n)} & \longrightarrow &  \bigoplus\limits_{i=1}^t M_n(\F_{q^s}) \\
&\overline{a} & \longmapsto & \left( \mat{F}(\overline{\omega}_{\alpha_1^{-1}}(\overline{a})) , \ldots, \mat{F}(\overline{\omega}_{\alpha_t^{-1}}(\overline{a})) \right) 
\end{array}
\]
is a ring isomorphism.

    $\hfill \lozenge$
\end{example}

\section{Construction of maximum sum-rank distance codes}\label{sec:MSRDconstr}

In this section, we proceed with the construction of two new families of MSRD codes, which generalize many of the known constructions of MSRD codes \cite{Martinez2018skew,neri2021twisted}, MRD codes \cite{delsarte1978bilinear,gabidulin1985theory,sheekey2016new,sheekey2020new,trombetti2018new,lobillo2025quotients} and MDS codes \cite{reed1960polynomial,beelen2022twisted,neri2021twisted}. We will then briefly deal with the case of spaces of matrices over infinite fields and over noncommutative division rings. The finite field case will be instead the focus of Section \ref{sec:finite_fields}.

In the next, we will work in the setting \[\left(\frac{R}{RH_{\FF}(x^n)},\mathrm{d}_{\FF}\right) \cong \left(\bigoplus\limits_{i=1}^tM_{m}(\mathcal{E}(f_i)),\dsrk\right),\] where $\FF=(F_1,\ldots,F_t)$ is an $(s,m)$-admissible tuple.  Note that, if $\K'$ is a subfield of $\K$, with $[\K:\K']<\infty$, we have that \[[\mathcal{E}(f_i):\K']=[\mathcal{E}(f_i):E_F][E_F:\K][\K:\K']=\frac{n^2}{m^2}s[\K:\K'],\]
for every $i \in \{1,\ldots,t\}$, cf. Theorem \ref{th:isomorphismtheoremeigen}.

As a consequence, for $\K'$-linear sum-rank metric codes $\C$ in $\left(\bigoplus\limits_{i=1}^tM_{m}(\mathcal{E}(f_i)),\dsrk\right)$, the Singleton-like bound of  Eq. \eqref{eq:singletonbound} reads as 
\begin{equation} \label{eq:singletonourframe}
\dim_{\K'}(\C) \leq [\K:\K']s\frac{n^2}{m}\left( tm-d(\C)+1 \right).
\end{equation}

 We start with a series of auxiliary results on skew polynomials, which are needed to derive the desired constructions.
The first result is the following, and extends \cite[Corollary 4.5]{giesbrecht1998factoring}, for arbitrary cyclic Galois extension.

\begin{proposition} \label{prop:factorizationghies}
Let $f \in R$ and $G(y) \in \K[y] \setminus \{0\}$ be such that $ f \mid_r G(x^n)$. Suppose that $G(y)=G_1(y)^{e_1} \cdots G_{\ell}(y) ^{e_{\ell}}$, where $e_1,\ldots,e_{\ell} \geq 1$ and $G_1(y),\ldots,G_{\ell}(y) \in \K[y]$ are distinct irreducible as polynomial in $\K[y]$, all with the same degree $s$. Let $f=f_1 \cdots f_k$ be a complete factorization, with $f_1,\ldots,f_{k} \in R$ irreducible {elements}. Let $F_i(x^n)=f_i^*$ and assume that the number of irreducible factors of $F_i(x^n)$ is $m$, for every $i \in \{1,\ldots,k\}$. Then \[\deg(f_i)=s\frac{n}{m},\]
for every $i \in \{1,\ldots,k\}$.
\end{proposition}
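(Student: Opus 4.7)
The plan is to show $\deg(f_i)=sn/m$ by applying Theorem \ref{th:basicskewpolynomial} to each irreducible factor $f_i$. Since that theorem gives $\deg(F_i) = \deg(f_i)\cdot m/n$ whenever $F_i(x^n)=f_i^*$ (the hypothesis $\gcrd(f_i,x)=1$ is already implicit in writing $f_i^*$ as $F_i(x^n)$), it suffices to prove $\deg(F_i)=s$. In turn, since $F_i(y)$ is monic and irreducible in $\K[y]$ by Theorem \ref{th:basicskewpolynomial}, and $G(y)=G_1(y)^{e_1}\cdots G_\ell(y)^{e_\ell}$ is a product of distinct monic irreducibles of degree $s$, it will be enough to show that $F_i(y)\mid G(y)$ in $\K[y]$: unique factorization in $\K[y]$ then forces $F_i=G_j$ for some $j$, yielding $\deg(F_i)=s$.

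The divisibility $F_i(y)\mid G(y)$ will be extracted from a chain of right-divisibilities in $R$, namely
\[
f_i^* \mid_r f^* \mid_r G(x^n),
\]
which, because all three polynomials above are central, transfers to $\K[y]$ via the standard central-division argument (write $G(y)=q(y)F_i(y)+r(y)$ in $\K[y]$, substitute $y=x^n$, and use that $f_i^*\mid_r G(x^n)$ together with the degree estimate forces $r=0$).

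For the rightmost divisibility, I would argue that $RG(x^n)$ is a two-sided ideal of $R$ (since $G(x^n)\in Z(R)=\K[x^n]$) and it is contained in $Rf$ by hypothesis $f\mid_r G(x^n)$. Since $Rf^* = \mathrm{Ann}_R(R/Rf)$ is, by definition of the bound, the largest two-sided ideal of $R$ contained in $Rf$, we get $RG(x^n)\subseteq Rf^*$, i.e. $f^*\mid_r G(x^n)$. For the leftmost divisibility, I would exhibit a composition series of $R/Rf$ whose factors are precisely the simple modules $R/Rf_i$. This comes from the filtration of left ideals
\[
R \supseteq Rf_k \supseteq Rf_{k-1}f_k \supseteq \cdots \supseteq Rf_1f_2\cdots f_k = Rf,
\]
whose $i$-th successive quotient is isomorphic to $R/Rf_i$ via the map $r+Rf_i \mapsto r\, f_{i+1}\cdots f_k + Rf_if_{i+1}\cdots f_k$ (well-defined and bijective because $R$ is a domain). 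Since each $R/Rf_i$ is simple (as $f_i$ is irreducible), it appears as a subquotient of $R/Rf$, hence $Rf^*=\mathrm{Ann}_R(R/Rf)\subseteq \mathrm{Ann}_R(R/Rf_i)=Rf_i^*$, i.e. $f_i^*\mid_r f^*$.

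The main obstacle I anticipate is the last step of the filtration argument: identifying each successive quotient with $R/Rf_i$ for the specific factor $f_i$ appearing in the given factorization (not only up to similarity, as guaranteed by the Jordan–Hölder-like theorem recalled in the preliminaries). Once this identification is in hand, everything else follows by composing inclusions of annihilator ideals and invoking Theorem \ref{th:basicskewpolynomial}; I would alternatively streamline the concluding step by a direct invocation of Lemma \ref{lem:divisionbound} applied to the irreducible divisor $f_i$ of $G(x^n)$, once $f_i\mid_r f^*\mid_r G(x^n)$ has been established.
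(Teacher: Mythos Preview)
Your proof is correct and takes a genuinely different route from the paper's. The paper argues by induction on the number $k$ of irreducible factors: for the rightmost factor $f_k$ one has $f_k\mid_r G(x^n)$ immediately, and then Lemma~\ref{lem:divisionbound} together with Proposition~\ref{th:basicskewpolynomial} gives $\deg(f_k)=sn/m$; the inductive step then requires knowing that $f_1\cdots f_{k-1}\mid_r G(x^n)$, for which the paper invokes an external reference (Jacobson, \emph{The Theory of Rings}, Ch.~12, Thm.~12). By contrast, you treat all $f_i$ at once via the filtration
\[
R\supseteq Rf_k\supseteq Rf_{k-1}f_k\supseteq\cdots\supseteq Rf_1\cdots f_k=Rf,
\]
identify each successive quotient with $R/Rf_i$ by right multiplication (using that $R$ is a domain), and then use that $Rf^*=\mathrm{Ann}_R(R/Rf)$ annihilates every subquotient, hence is contained in $Rf_i^*$. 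Combined with $RG(x^n)\subseteq Rf^*$ (because $RG(x^n)$ is a two-sided ideal inside $Rf$), this yields $F_i(x^n)\mid_r G(x^n)$ and then $F_i(y)\mid G(y)$ in $\K[y]$ exactly as you describe. Your approach is more self-contained---it avoids the external citation and the induction---and makes transparent why every factor (not just the rightmost) has the correct bound; the paper's approach is closer to an iterative ``peel off one factor at a time'' argument, which is also standard in the skew-polynomial literature. Both ultimately feed into the same endgame: $F_i$ is irreducible dividing $G$, hence equals some $G_j$, so $\deg(F_i)=s$ and Proposition~\ref{th:basicskewpolynomial} finishes.
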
 

\begin{proof}
Let $f=f_1 \cdots f_k$ be a complete factorization with $f_1,\ldots,f_{k} \in R$ irreducible. We proceed by induction on $k$. If $k=1$, then $f$ is irreducible and so if $F(x^n)$ is and its bound, we have that $F(y)$ divides $G(y)$ in $K[y]$, cf. \Cref{lem:divisionbound}. Moreover, by   \Cref{th:basicskewpolynomial}, we know that $F(y)$ is irreducible as polynomial in $K[y]$ and has degree $\deg(f)\frac{m}{n}$. So, we get that $F(y)$ is proportional to $G_j(y)$ for some $j$ and as a consequence $\deg(f)=\frac{n}{m} \deg (F(y))= \frac{n}{m} \deg(G_j(y))=s\frac{n}{m}$.\\
Assume now that the statement is true for complete factorization with less that $k$ irreducible polynomials. By hypothesis, we have that $f_k \mid_r G(x^n)$. So, as before let $F_k(x^n)=f_k^*$. We have that that $F_k(y)$ is irreducible as polynomial in $\K[y]$ having degree $\deg(f_k)\frac{m}{n}$ and $F_k(y)$ is proportional to $G_j(y)$, for some $j$. As a consequence, $\deg(f_k)=\frac{n}{m}\deg (F_k(y))=\frac{n}{m} \deg(G_j(y))=s\frac{n}{m}$. Now by \cite[Chapter 12, Theorem 12]{jacobson1943theory} $f_1 \cdots f_{k-1} \mid_r G(x^n)$, so by induction $\deg(f_1)=\cdots=\deg(f_{k-1})=s\frac{n}{m}$.
\end{proof}

We now recall a result that has been shown in \cite{sheekey2020new} over finite fields and in \cite{lobillo2025quotients} in the general case. 

\begin{theorem}[see \textnormal{\cite[Theorem 4.10]{lobillo2025quotients} and \cite[Theorem 4 and Theorem 5]{sheekey2020new}}] \label{th:normskew} Let $f \in R$ be monic and irriducible polynomial with $\gcrd(f,x)=1$ and let $F(x^n)=f^*$. If $\deg(f)=s \ell $, where $\ell=n/m$ and $m$ is the number of irreducibles of $F(x^n)$ in $R$, then
\[
\N_{\LL/\K}(f_0)=(-1)^{s\ell (n-1)}F_0^{\ell},
\]
where $f_0$ and $F_0$ are the constant coefficients of $f$ and $F(x^n)$, respectively. 
\end{theorem}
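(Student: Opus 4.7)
The plan is to compute the $\LL$-determinant of $X \colon R/Rf \to R/Rf$, left multiplication by $x^n$, in two different ways and compare them. Since $x^n$ is central in $R$, $X$ is $R$-linear and in particular $\LL$-linear; the $\LL$-vector space $R/Rf$ has dimension $d = s\ell$, with basis $\{\overline{x^i}\}_{0 \le i < d}$.

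For the first computation, left multiplication by $x$ itself is the $\sigma$-semilinear endomorphism $T$ of $R/Rf$ whose matrix in the chosen basis is the classical companion matrix $M_f$ of $f$ (read off from $x^d \equiv -f_{d-1}x^{d-1} - \cdots - f_0 \pmod{f}$), with $\det(M_f) = (-1)^d f_0$. Iterating and using $\sigma^n = \mathrm{id}$, the $\LL$-linear map $X = T^n$ is represented by the product $M_f \, \sigma(M_f) \cdots \sigma^{n-1}(M_f)$, and therefore
\[
\det_{\LL}(X) \;=\; \prod_{i=0}^{n-1} \sigma^i\!\bigl(\det M_f\bigr) \;=\; \N_{\LL/\K}\bigl((-1)^d f_0\bigr) \;=\; (-1)^{s\ell n}\, \N_{\LL/\K}(f_0).
\]

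For the second computation, I would pass through the larger module $R/Rf^{*} = R/RF(x^n)$. In the $\LL$-basis $\{x^j\}_{0 \le j < sn}$, multiplication by $x^n$ is represented by $C_F \otimes I_n$, where $C_F$ is the scalar companion matrix of $F(y)$; its characteristic polynomial is $F(y)^n$. On the other hand, since $f$ is irreducible, the left $R$-module $R/Rf$ is simple, and by the Artin--Wedderburn isomorphism of Theorem \ref{th:isomorphismtheoremeigen}, $R/Rf^{*} \cong M_m(\mathcal{E}(f))$, whose unique simple left module is $\mathcal{E}(f)^{m}$; hence $R/Rf^{*} \cong (R/Rf)^{m}$ as left $R$-modules. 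Matching characteristic polynomials of $X$ on both sides yields $\chi_X^{\LL}(y)^m = F(y)^n$ in $\LL[y]$, and taking the (unique) $m$-th root gives $\chi_X^{\LL}(y) = F(y)^{\ell}$ on $R/Rf$, whence $\det_{\LL}(X) = (-1)^{s\ell} F_0^{\ell}$. Equating the two expressions for $\det_{\LL}(X)$ and simplifying signs yields $\N_{\LL/\K}(f_0) = (-1)^{s\ell(n-1)}\, F_0^{\ell}$. The delicate step is the identification $R/Rf^{*} \cong (R/Rf)^{m}$, which rests on the structure theorem for simple modules over the simple Artinian algebra $M_m(\mathcal{E}(f))$; the remaining pieces are routine calculations.
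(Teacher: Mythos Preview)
The paper does not supply its own proof of this statement; it is quoted from \cite{sheekey2020new} and \cite{lobillo2025quotients}. Your determinant argument is correct and gives a self-contained proof valid for arbitrary cyclic Galois extensions $\LL/\K$. Two points are worth making explicit. First, since left multiplication by $x$ is $\sigma$-semilinear rather than $\LL$-linear, ``the matrix of $T$'' should be read as the matrix $M_f$ for which $T$ acts on coordinate vectors by $\mathbf{a}\mapsto M_f\,\sigma(\mathbf{a})$; iterating this and using $\sigma^n=\mathrm{id}$ indeed gives $X=T^n$ with matrix $M_f\,\sigma(M_f)\cdots\sigma^{n-1}(M_f)$, whence $\det_{\LL}(X)=\N_{\LL/\K}((-1)^{s\ell}f_0)$. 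Second, the $m$-th root step is legitimate because both $\chi_X^{\LL}$ and $F(y)^{\ell}$ are monic in the UFD $\LL[y]$, so $(\chi_X^{\LL})^m=(F^{\ell})^m$ forces equality. The identification $R/Rf^{*}\cong (R/Rf)^{m}$ as left $R$-modules is exactly as you say: by Theorem~\ref{th:isomorphismtheoremeigen} the quotient $R/Rf^{*}$ is simple Artinian, and since $Rf^{*}=\mathrm{Ann}_R(R/Rf)$ by definition of the bound, $R/Rf$ is its unique simple module, occurring with multiplicity $m$ (compare $\LL$-dimensions $sn=m\cdot s\ell$). Your approach via the characteristic polynomial of multiplication by $x^n$ is essentially the same circle of ideas used in the cited references, packaged cleanly.
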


We are now ready to show the following important result, which gives a necessary condition for a certain skew polynomial to right-divide $\Hf(x^n)$. This will be the fundamental condition that we will use to construct the new families of MSRD codes. 

\begin{theorem} \label{th:maximumsrkandnorm}
    Let $\FF=(F_1,\ldots,F_t)$ be an $(s,m)$-admissible tuple in $\K[y]$. Let $f \in R$ be a monic polynomial of degree $ks\ell$, with $\ell=n/m$ and some $k \in \{1,\ldots,tm-1\}$. Suppose that \begin{equation} \label{eq:congruentproduct}
    f \mid_r F_1(x^n)\cdots F_t(x^n)=\Hf(x^n).
    \end{equation}
    If $F_{i,0}$ is the constant coefficient of $F_i(y)$ for every $i \in \{1,\ldots,t\}$, then
    \[
    \N_{\LL/\K}(f_0)=(-1)^{ks\ell(n-1)}\prod_{i=1}^t{F_{i,0}^{j_i\ell}},
    \]
    for some non negative integer $j_1,\ldots,j_t$ such that $j_1+\cdots+j_t=k$.
\end{theorem}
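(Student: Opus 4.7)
The strategy is to factor $f$ into monic irreducible skew polynomials, identify the bound of each irreducible factor as one of the $F_i(x^n)$, and then multiply the norm formula of Theorem \ref{th:normskew} across all factors.

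First, since $f\mid_r H_{\FF}(x^n)$ and $H_{\FF}(y)=\prod_{i=1}^t F_i(y)$ is a product of distinct monic irreducibles of degree $s$, each of whose image under $y\mapsto x^n$ has exactly $m$ irreducible factors in $R$, Proposition \ref{prop:factorizationghies} applies and forces every irreducible factor of $f$ in $R$ to have degree $s\ell=sn/m$. Since $f$ is monic of degree $ks\ell$, I would write $f=f_1\cdots f_k$ as a product of $k$ monic irreducibles in $R$: monicness of each factor follows by the standard argument, since the leading coefficient of a product $gh$ in $R$ is $g_{\mathrm{lead}}\,\sigma^{\deg g}(h_{\mathrm{lead}})$, so peeling off monic irreducible right factors leaves monic left quotients.

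Next, I would observe that $\gcrd(f_r,x)=1$ for each $r$: since every $F_i(y)\neq y$ is irreducible, its constant coefficient $F_{i,0}$ is nonzero, so $H_{\FF}(x^n)$ has nonzero constant coefficient, and therefore so do $f$ and each $f_r$. For each $r$, let $f_r^*=F_r'(x^n)$ be the bound of $f_r$; Theorem \ref{th:basicskewpolynomial} says $F_r'(y)$ is irreducible in $\K[y]$ of degree $\deg(f_r)\cdot m/n=s$. On the other hand, $f_r\mid_r H_{\FF}(x^n)$ together with Lemma \ref{lem:divisionbound} yields $F_r'(y)\mid H_{\FF}(y)$ in $\K[y]$. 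Since $H_{\FF}(y)$ is a product of the \emph{distinct} monic irreducibles $F_1(y),\ldots,F_t(y)$ of degree $s$, there is a unique $\pi(r)\in\{1,\ldots,t\}$ with $F_r'(y)=F_{\pi(r)}(y)$.

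Finally, I would apply Theorem \ref{th:normskew} to each monic irreducible $f_r$ of degree $s\ell$ with bound $F_{\pi(r)}(x^n)$, obtaining $\N_{\LL/\K}(f_{r,0})=(-1)^{s\ell(n-1)}F_{\pi(r),0}^{\ell}$. Since the degree-$0$ coefficient in a skew product carries no twist ($\sigma^{0}=\mathrm{id}$), one has $f_0=\prod_{r=1}^k f_{r,0}$, and multiplicativity of the norm gives
\[
\N_{\LL/\K}(f_0)=\prod_{r=1}^k(-1)^{s\ell(n-1)}F_{\pi(r),0}^{\ell}=(-1)^{ks\ell(n-1)}\prod_{i=1}^t F_{i,0}^{j_i\ell},
\]
where $j_i:=|\{r\in\{1,\ldots,k\}:\pi(r)=i\}|$ satisfy $\sum_i j_i=k$. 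The only real subtlety is the first step, namely justifying that every irreducible factor of $f$ has the uniform degree $s\ell$; this is where the hypothesis that all $F_i$ share the same degree $s$ and the same number $m$ of factors over $R$ is essential, since otherwise different bounds could force different degrees for the $f_r$. The rest is a multiplicativity computation.
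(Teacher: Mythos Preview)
Your proof is correct and follows essentially the same route as the paper: factor $f$ into $k$ monic irreducibles of degree $s\ell$ via Proposition \ref{prop:factorizationghies}, identify each bound with one of the $F_i$'s through Lemma \ref{lem:divisionbound}, apply Theorem \ref{th:normskew} to every factor, and multiply. Your write-up is in fact slightly more explicit than the paper's in justifying monicness of the factors, the identity $f_0=\prod_r f_{r,0}$, and the condition $\gcrd(f_r,x)=1$.
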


\begin{proof}
    Let $f=f_1 \cdots f_r$ be a complete factorization with $f_1,\ldots,f_{r} \in R$ irreducible. Now, by Eq. \eqref{eq:congruentproduct}, we have that $f_i$ divides $F_1(x^n)\cdots F_t(x^n)$, for every $i \in \{1,\ldots,r\}$. So, if $G_i(x^n)=f_i^*$, we have that 
    \[
    G_i(y)=F_{b_i}(y),
    \]
    for every $i \in \{1,\ldots,r\}$ and some $b_1,\ldots,b_r\in \{1,\ldots,t\}$.  By Proposition \ref{prop:factorizationghies}, we know that $\deg(f_i)=s\ell$. Hence, since $\deg(f)=sk\ell$, we have that $r=k$. By using Theorem \ref{th:normskew}, we get that 
    \[
    \N_{\LL/\K}(f_{i,0})=(-1)^{s\ell (n-1)}F_{b_i,0}^{\ell},
    \]
    where $f_{i,0}$ is the constant coefficient of $f_i$. By the fact that the constant coefficient of $f$ is the product of the constant coefficients of the $f_i$'s, we get the assertion. 
\end{proof}

 As a consequence of Theorem \ref{th:maximumsrkandnorm}, we can deduce a necessary condition for an element  $\bar{a}=a+\Hf(x^n)\in R/R\Hf(x^n)$ to have $\FF$-weight exactly $tm-\frac{\deg(a)}{s\ell}=m(t-\frac{\deg(a)}{ns})$.
\begin{corollary} \label{cor:boundsrkpoly}
    Let $\FF=(F_1,\ldots,F_t)$ be an $(s,m)$-admissible tuple in $\K[y]$. If \[\overline{a}=\sum_{i=0}^{sk\ell}a_ix^i +R\Hf(x^n) \in R/RH_{\FF}(x^n)\] is a nonzero element of degree at most $s k \ell$, with $k \leq tm-1$, then 
    \[
    \wt_{\FF}(\overline{a}) \geq tm-k.
    \] Furthermore, if the $\FF$-weight of $\overline{a}$ is equal to $tm-k$, then $\deg(\overline{a})=sk\ell$ and
    \[
    \frac{\N_{\LL/\K}(a_0)}{\N_{\LL/\K}(a_{sk\ell})}=(-1)^{sk\ell(n-1)}\prod_{i=1}^t{F_{i,0}^{j_i\ell}},
    \]
    for some non negative integer $j_1,\ldots,j_t$ such that $j_1+\cdots+j_t=k$, where $F_{i,0}$ is the constant coefficient of $F_i$, for every $i \in \{1,\ldots,t\}$. 
\end{corollary}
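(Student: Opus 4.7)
The plan is to deduce this as a direct corollary of \Cref{th:maximumsrkandnorm}, by translating the hypothesis on $\wf(\overline{a})$ into a statement about a monic right divisor of $\Hf(x^n)$ of a prescribed degree. So the first step is to set $g := \gcrd(a,\Hf(x^n))$, normalized so that it is monic, and use the definition of the $\FF$-weight together with $\deg(\Hf(x^n)) = tsn$ to rewrite
$$\wf(\overline{a}) = tm - \tfrac{m}{sn}\deg(g).$$
Since $\overline{a}\ne 0$ the polynomial $a$ is nonzero, so $g\mid_r a$ gives $\deg(g) \le \deg(a) \le sk\ell = skn/m$, and this immediately produces the lower bound $\wf(\overline{a}) \ge tm-k$.

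For the equality case, the plan is to show that equality forces $a$ to be an $\LL$-scalar multiple of $g$. If $\wf(\overline{a}) = tm - k$, the formula above forces $\deg(g) = sk\ell$, and combined with $\deg(g)\le\deg(a)\le sk\ell$ this yields $\deg(a)=sk\ell$. Writing $a = qg$ with $q\in R$, a degree count gives $\deg(q)=0$, hence $q\in\LL^{\ast}$. Since left multiplication by a constant in $R$ acts trivially through $\sigma$, the coefficients satisfy $a_i = q g_i$ for all $i$; in particular, using that $g$ is monic, $a_{sk\ell}=q$ and $a_0 = q g_0$.

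The next step is to apply \Cref{th:maximumsrkandnorm} to $g$. The polynomial $g$ is monic, of degree $sk\ell$ with $1\le k\le tm-1$, and right-divides $H_{\FF}(x^n)$ by construction. The hypothesis $\gcrd(g,x)=1$, required implicitly through \Cref{th:normskew}, must be checked: it follows because $H_{\FF}(x^n)$ has nonzero constant coefficient $\prod_i F_{i,0}$ (each $F_i\neq y$ is irreducible), so any right divisor of $H_{\FF}(x^n)$ must also have nonzero constant coefficient, and in particular is coprime to $x$ on the right. Thus \Cref{th:maximumsrkandnorm} applies and furnishes non-negative integers $j_1,\ldots,j_t$ with $j_1+\cdots+j_t=k$ such that
$$\N_{\LL/\K}(g_0) = (-1)^{sk\ell(n-1)} \prod_{i=1}^t F_{i,0}^{j_i\ell}.$$

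Finally, the multiplicativity of $\N_{\LL/\K}$ combined with $a_0=q g_0$ and $a_{sk\ell}=q$ gives $\N_{\LL/\K}(a_0)/\N_{\LL/\K}(a_{sk\ell}) = \N_{\LL/\K}(g_0)$, which is the identity claimed by the corollary. The argument is essentially bookkeeping once \Cref{th:maximumsrkandnorm} is available; the only points that warrant care are checking the hypothesis $\gcrd(g,x)=1$ and tracking how the scalar factor $q$ relates the leading and constant coefficients of $a$ to those of $g$. I do not anticipate a serious obstacle beyond these verifications.
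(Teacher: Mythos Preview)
Your proof is correct and follows essentially the same approach as the paper: both arguments observe that equality in the bound forces $\deg(\gcrd(a,H_{\FF}(x^n)))=sk\ell=\deg(a)$, so $a$ is a scalar multiple of a monic right divisor of $H_{\FF}(x^n)$, and then \Cref{th:maximumsrkandnorm} yields the norm identity. Your version is slightly more explicit in tracking the scalar factor $q$ and in verifying $\gcrd(g,x)=1$, but the underlying idea is identical.
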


\begin{proof}
    Let $a=\sum_{i=0}^{sk\ell }a_ix^i$. By definition, 
    \begin{equation} \label{eq:rankmaximumnorm}
    \wt_{\FF}(\overline{a})=tm-\frac{1}{s\ell} \deg(\gcrd(a,H_{\FF}(x^n))),
    \end{equation}
    and since $\deg(\overline{a})=\deg(a) \leq sk\ell$, we get the first part of the assertion. Now, assume that $\wt_{\FF}(\overline{a})=tm-k$. Note that $\deg(a)=s\ell k$, therefore by Eq. \eqref{eq:rankmaximumnorm}, we get that $\gcrd(a,H_{\FF}(x^n))=a$. As a consequence, \[
 a \mid_r  H_{\FF}(x^n)=F_1(x^n)\cdots F_t(x^n),
    \]
    and the assertion follows by Theorem \ref{th:maximumsrkandnorm}. 
\end{proof}

We are ready to introduce the first family of MSRD codes.

\begin{definition} \label{def:definitionS}
     Let $\FF=(F_1,\ldots,F_t)$ be an $(s,m)$-admissible tuple in $\K[y]$ and let $F_{i,0}$ be the constant coefficient of $F_i$, for every $i \in \{1,\ldots,t\}$. Let $\rho \in \Aut(\LL)$ and let  $\K':=\K\cap \LL^{\rho}$ be such that $[\K:\K']<\infty$. Let $k < tm$ be a positive integer. Define the set
    \[
    S_{n,s\ell,k}(\eta,\rho,\FF)=\{a_0+a_1x+\ldots+a_{sk\ell-1}x^{sk\ell-1}+\eta \rho(a_0) x^{sk\ell}+R\Hf(x^n) \colon a_i \in \LL   \} \subseteq \frac{R}{R\Hf(x^n)}
    \]
    with $\eta \in \LL$. 
\end{definition}

\begin{theorem} \label{th:newMRDl>1}
    The set $S_{n,s\ell,k}(\eta,\rho,\FF)$ as in \Cref{def:definitionS}, 
    defines a $\K'$-linear MSRD code in $R/RH_{\FF}(x^n) \cong \bigoplus_{i=1}^tM_m(\mathcal{E}(f_i))$ having minimum distance $tm-k+1$, where $f_i$ is an irreducible factor of $F_i(x^n)$, for every $\eta \in \LL$ such that
    \begin{equation} \label{eq:MSRDnormcondition}
    \N_{\LL/\K'}(\eta) \N_{\K/\K'}\left((-1)^{sk\ell(n-1)}\prod_{i=1}^t{F_{i,0}^{j_i\ell}}\right)\neq 1,
    \end{equation}
    for all non negative integers $j_1,\ldots,j_t$ satisfying $j_1+\cdots+j_t=k$.
\end{theorem}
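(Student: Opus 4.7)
The plan is to verify the three properties defining a $\K'$-linear MSRD code separately: $\K'$-linearity of $S_{n,s\ell,k}(\eta,\rho,\FF)$, a lower bound $\wf(\overline{a}) \ge tm - k + 1$ on every non-zero codeword, and a dimension matching the Singleton-like bound \eqref{eq:singletonourframe} at $d = tm - k + 1$. Linearity and dimension come easily. Since $\rho$ fixes $\K'$ by definition of $\K' = \K \cap \LL^{\rho}$, the assignment $a_0 \mapsto \eta\rho(a_0)$ is $\K'$-linear, hence $S_{n,s\ell,k}(\eta,\rho,\FF)$ is a $\K'$-subspace of $R/RH_{\FF}(x^n)$. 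Because $sk\ell = skn/m \le (tm - 1)sn/m < tsn = \deg H_{\FF}(x^n)$, distinct tuples $(a_0, \ldots, a_{sk\ell-1}) \in \LL^{sk\ell}$ represent distinct cosets in $R/RH_{\FF}(x^n)$, whence
\[
\dim_{\K'} S_{n,s\ell,k}(\eta,\rho,\FF) = sk\ell \cdot [\LL:\K'] = sk\ell \cdot n[\K:\K'] = [\K:\K']\, s\, \frac{n^2}{m}\, k,
\]
which is precisely the right-hand side of \eqref{eq:singletonourframe} evaluated at $d = tm - k + 1$.

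The minimum-distance lower bound is the delicate step and is carried out by contradiction via Corollary~\ref{cor:boundsrkpoly}. Every non-zero $\overline{a} \in S_{n,s\ell,k}(\eta,\rho,\FF)$ has $\deg\overline{a} \le sk\ell$, so the corollary gives $\wf(\overline{a}) \ge tm - k$. Assume $\wf(\overline{a}) = tm - k$. Then $\deg \overline{a} = sk\ell$, forcing the leading coefficient $\eta\rho(a_0)$ to be non-zero and hence $a_0 \ne 0$, and moreover
\[
\frac{\N_{\LL/\K}(a_0)}{\N_{\LL/\K}(\eta\rho(a_0))} = (-1)^{sk\ell(n-1)} \prod_{i=1}^t F_{i,0}^{j_i\ell} =: C,
\]
for some non-negative $j_1, \ldots, j_t$ summing to $k$. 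I would apply $\N_{\K/\K'}$ to both sides and invoke transitivity of norms $\N_{\K/\K'} \circ \N_{\LL/\K} = \N_{\LL/\K'}$ to rewrite the left-hand side as $\N_{\LL/\K'}(a_0) / \bigl(\N_{\LL/\K'}(\eta)\,\N_{\LL/\K'}(\rho(a_0))\bigr)$. The key observation is that $\rho$, being a $\K'$-linear automorphism of $\LL$, satisfies $m_{\rho(a_0)} = \rho \circ m_{a_0} \circ \rho^{-1}$ as $\K'$-linear endomorphisms of $\LL$; taking determinants yields $\N_{\LL/\K'}(\rho(a_0)) = \N_{\LL/\K'}(a_0)$. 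The identity then collapses to $\N_{\LL/\K'}(\eta)\,\N_{\K/\K'}(C) = 1$, contradicting \eqref{eq:MSRDnormcondition}.

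Hence $d\bigl(S_{n,s\ell,k}(\eta,\rho,\FF)\bigr) \ge tm - k + 1$, and combining with the dimension computed in the first paragraph and the Singleton-like bound \eqref{eq:singletonourframe} forces equality, so the code is MSRD of minimum distance exactly $tm - k + 1$. The main obstacle is the norm-chasing in the second paragraph: one must legitimise the transitivity of the norm $\N_{\LL/\K'} = \N_{\K/\K'} \circ \N_{\LL/\K}$ (routine under separability of $\K/\K'$, which covers the finite-field case and the intended infinite-field settings) and establish the $\rho$-invariance of $\N_{\LL/\K'}$ via the similarity argument above; once these are in place, condition \eqref{eq:MSRDnormcondition} does exactly the work of ruling out the extremal weight.
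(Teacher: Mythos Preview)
Your proof is correct and follows essentially the same approach as the paper's: compute the $\K'$-dimension, use Corollary~\ref{cor:boundsrkpoly} to bound the $\FF$-weight from below, and rule out the extremal case $\wf(\overline{a})=tm-k$ by applying $\N_{\K/\K'}$ to the norm identity and invoking~\eqref{eq:MSRDnormcondition}. Your handling is in fact slightly more explicit than the paper's in two places: you fold the cases $\eta=0$ and $a_0=0$ into the single observation that $\wf(\overline{a})=tm-k$ forces $\deg\overline{a}=sk\ell$ and hence $\eta\rho(a_0)\neq 0$, and you justify the key identity $\N_{\LL/\K'}(\rho(a_0))=\N_{\LL/\K'}(a_0)$ via the conjugation argument $m_{\rho(a_0)}=\rho\circ m_{a_0}\circ\rho^{-1}$, whereas the paper simply asserts that taking $\N_{\K/\K'}$ yields the contradiction.
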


\begin{proof}
    Let $\C=S_{n,s\ell,k}(\eta,\rho,\FF)$. First, we observe that since $k<tm$, we have \[sk\ell=skn/m<stn=\deg(H_{\FF}(x^n)).\] Hence, we have that $\C$ is a $\K'$-linear sum-rank metric code in $\bigoplus\limits_{i=1}^tM_m(\mathcal{E}(f_i))$ having dimension $nsk\ell[\K:\K']$ over $\K'$.  Using the Singleton bound of  Eq. \eqref{eq:singletonourframe}, we get
    \begin{align*}
    nsk\ell[\K:\K'] & =\dim_{\K'}(\C) \\ &\leq [\K:\K']\frac{n^2}{m}s(tm -d(\C)+1),
    \end{align*}
    implying that 
    \[d_{\FF}(\C) \leq tm-k+1.\]
     So, to prove that $\C$ defines an MSRD code, it is enough to show that the $\mathbf{F}$-weight of every nonzero element is at least $tm - k + 1$.
To this aim, let $\overline{a}=a_0+a_1x+\ldots+a_{sk\ell-1}x^{sk\ell-1}+\eta \rho(a_0)x^{sk\ell}+R\Hf(x^n)$ be an non zero element of $\C$. If $a_0=0$ or $\eta=0$, the claim immediately follows by Corollary \ref{cor:boundsrkpoly}. Suppose now, $\eta,a_0 \neq 0$, then $\wt_{\FF}(\overline{a}) \geq tm-k$ and suppose by contradiction that $\wt_{\FF}(\overline{a}) = tm-k$. Again by Corollary \ref{cor:boundsrkpoly}, we need to have
\[
    \frac{\N_{\LL/\K}(a_0)}{\N_{\LL/\K}(\eta \rho(a_0))}=(-1)^{sk\ell(n-1)}\prod_{i=1}^t{F_{i,0}^{j_i\ell}},
    \]
    for some non negative integer $j_1,\ldots,j_t$ such that $j_1+\cdots+j_t=k$. 
As a consequence,
\[
(-1)^{sk\ell(n-1)}\prod_{i=1}^t{F_{i,0}^{j_i\ell}} \N_{\LL/\K}(\eta)\N_{\LL/\K} (\rho(a_0)a_0^{-1})=1.
\]
Taking the norm from
$\K$ to $\K'$ of both sides, we have a contradiction with our hypothesis. Therefore, $\wt_{\FF}(\overline{a}) \geq tm-k+1$, that concludes the proof.
\end{proof}

\begin{remark} 
Observe that the family of codes of Definition \ref{def:definitionS} generalizes several families of optimal codes in the rank and in the sum-rank metric. More precisely, for $s=1$, they coincide with the MSRD codes constructed in \cite[Definition 6.2]{neri2021twisted}, which in turn correspond to linearized Reed-Solomon codes defined in \cite{Martinez2018skew} when $\eta=0$. {On the other hand, when $t=1$, they coincide with the MRD codes found in \cite{sheekey2020new} for the finite field case and in \cite{thompson2023division} for the infinite field case.} Finally, when $s=t=1$, these are simply the MRD codes obtained in \cite{sheekey2016new}. 
\end{remark}

We now introduce another family of codes, which we will show is MSRD under some hypotheses.

\begin{definition} \label{def:definitionD}
    Let $\FF=(F_1,\ldots,F_t)$ be an $(s,m)$-admissible tuple in $\K[y]$. Assume that there exists a subfield $\K\subseteq \LL'\subset \LL$ with $[\LL:\LL']=2$. Let $k < tm$ be a positive integer. Define the set
     \[D_{n,s\ell,k}(\gamma,\FF)=\left\{ a_0'+\sum_{i=1}^{sk\ell-1} a_i x^i + \gamma a_{0}'' x^{sk\ell} +RH_{\FF}(x^n) \colon a_i \in \LL, a_0',a_0'' \in \LL' \right\} \subseteq \frac{R}{R\Hf(x^n)},\]
with $\gamma \in \LL$.
\end{definition}

 Denote by $\K^{(2)}$ the set of squares in $\K$, that is,
$$\K^{(2)}:=\{\lambda^2\,:\, \lambda \in \K\}.$$

\begin{theorem} \label{th:extendtrombmrd}
     The set $D_{n,s\ell,k}(\gamma,\FF)$ as in \Cref{def:definitionD} defines a $\K$-linear MSRD code in $R/RH_{\FF} \cong \bigoplus_{i=1}^tM_m(\mathcal{E}(f_i))$ with minimum distance $tm-k+1$, where $f_i$ is an irreducible factor of $F_i(x^n)$, for every $\gamma \in \LL$ such that \begin{equation} \label{eq:normconditionDD}
(-1)^{sk\ell}\prod_{i=1}^t{F_{i,0}^{j_i\ell}}\N_{\LL/\K}(\gamma) \notin \KK^{(2)},
     \end{equation} for all non negative integers $j_1,\ldots,j_t$ satisfying $j_1+\cdots+j_t=k$. Here $F_{i,0}$ is the constant coefficient of $F_i$, for every $i \in \{1,\ldots,t\}$.
\end{theorem}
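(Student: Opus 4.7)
The plan is to mirror the proof of Theorem \ref{th:newMRDl>1}, adapting the argument to the fact that both the constant and the leading coefficient of a codeword are now forced to lie in the index-$2$ subfield $\LL'$. First I would check that $\C:=D_{n,s\ell,k}(\gamma,\FF)$ is $\K$-linear (which is immediate since $\K\subseteq\LL'$) and compute its $\K$-dimension: the two ``endpoint'' coefficients $a_0',a_0''$ contribute $[\LL':\K]=n/2$ each, and the $sk\ell-1$ free inner coefficients contribute $n$ each, giving a total of $sk\ell\cdot n$. Substituting into the Singleton-like bound \eqref{eq:singletonourframe} with $\K'=\K$ yields $\dF(\C)\le tm-k+1$, so the task reduces to showing that every nonzero codeword has $\FF$-weight at least $tm-k+1$.

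For the lower bound I would fix a nonzero codeword $\overline{a}=a_0'+\sum_{i=1}^{sk\ell-1}a_ix^i+\gamma a_0''x^{sk\ell}+RH_{\FF}(x^n)$ and split into cases exactly as in the proof of Theorem \ref{th:newMRDl>1}. If $\gamma a_0''=0$, then $\deg(\overline{a})<sk\ell$ and Corollary \ref{cor:boundsrkpoly} already forces $\wf(\overline{a})\ge tm-k+1$, because the equality case of the corollary requires $\deg(\overline{a})=sk\ell$. If instead $a_0'=0$ but $\gamma a_0''\neq 0$, then the polynomial representative has degree exactly $sk\ell$ but zero constant term, which cannot be a right divisor of $H_{\FF}(x^n)$ (whose constant term $\prod_{i}F_{i,0}$ is nonzero), so again the corollary's norm identity fails and $\wf(\overline{a})\ge tm-k+1$. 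The interesting case is when $a_0',a_0'',\gamma$ are all nonzero.

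In that case $\deg(\overline{a})=sk\ell$ and I would argue by contradiction assuming $\wf(\overline{a})=tm-k$. Corollary \ref{cor:boundsrkpoly} then yields
\[
\N_{\LL/\K}(a_0')=\N_{\LL/\K}(\gamma)\,\N_{\LL/\K}(a_0'')\,(-1)^{sk\ell(n-1)}\prod_{i=1}^{t}F_{i,0}^{j_i\ell}
\]
for some nonnegative integers $j_1,\ldots,j_t$ summing to $k$. The crucial new input is the hypothesis $[\LL:\LL']=2$: since the nontrivial element of $\Gal(\LL/\LL')$ fixes $\LL'$ pointwise, one has $\N_{\LL/\LL'}(b)=b^2$ for every $b\in\LL'$, and the tower formula $\N_{\LL/\K}=\N_{\LL'/\K}\circ\N_{\LL/\LL'}$ shows that both $\N_{\LL/\K}(a_0')$ and $\N_{\LL/\K}(a_0'')$ are squares in $\K$. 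Consequently their ratio lies in $\K^{(2)}$, and the identity above rearranges to
\[
(-1)^{sk\ell(n-1)}\prod_{i=1}^{t}F_{i,0}^{j_i\ell}\,\N_{\LL/\K}(\gamma)\in \K^{(2)},
\]
contradicting Eq.~\eqref{eq:normconditionDD}.

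The main technical hurdle is matching the sign $(-1)^{sk\ell(n-1)}$ produced by Theorem \ref{th:normskew} and Corollary \ref{cor:boundsrkpoly} with the sign $(-1)^{sk\ell}$ appearing in \eqref{eq:normconditionDD}. This is settled by the observation that $[\LL:\LL']=2$ forces $n=2[\LL':\K]$ to be even, so $n-1$ is odd and $(-1)^{sk\ell(n-1)}=(-1)^{sk\ell}$; the two expressions coincide and the contradiction closes the proof.
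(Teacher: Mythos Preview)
Your proposal is correct and follows essentially the same approach as the paper's proof: compute the $\K$-dimension to pin the Singleton bound, then use Corollary~\ref{cor:boundsrkpoly} and the tower formula $\N_{\LL/\K}=\N_{\LL'/\K}\circ\N_{\LL/\LL'}$ (together with $\N_{\LL/\LL'}(b)=b^2$ for $b\in\LL'$) to derive a contradiction from \eqref{eq:normconditionDD}. The paper leaves the parity argument $(-1)^{sk\ell(n-1)}=(-1)^{sk\ell}$ implicit and does not separate out the subcase $a_0'=0$, but your additional explanations there are sound and do not change the structure of the argument.
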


\begin{proof}
It is easy to see that $\C=D_{n,s\ell,k}(\gamma,\FF)$ is $\K$-linear with $\dim_{\K}(\C)=nsk\ell$. Using the same argument of the proof of Theorem \ref{th:newMRDl>1}, in order to prove that $D_{n,s\ell,k}(\gamma,\FF)$ defines an MSRD code in $R/RH_{\FF}(x^n)$ is enough to prove that the $\mathbf{F}$-weight of its non zero elements is at least $tm-k+1$.
 So, let $\overline{a}=a_0'+\sum_{i=1}^{sk\ell-1} a_i x^i + \gamma a_0'' x^{sk\ell} +RH_{\FF}(x^n)$ be a non zero element of $\C$. If $a_0''=0$, the claim immediately follows by Corollary \ref{cor:boundsrkpoly}. So assume that $a_0'' \neq 0$, then $\wt_{\FF}(\overline{a}) \geq tm-k$ and suppose by contradiction that $\wt_{\FF}(\overline{a}) = tm-k$. Again by Corollary \ref{cor:boundsrkpoly}, we must have
\[
    \frac{\N_{\LL/\K}(a_0')}{\N_{\LL/\K}(\gamma a_0'')}=(-1)^{sk\ell(n-1)}\prod_{i=1}^t{F_{i,0}^{j_i\ell}}=(-1)^{sk\ell}\prod_{i=1}^t{F_{i,0}^{j_i\ell}},
    \]
     for some non negative integer $j_1,\ldots,j_t$ such that $j_1+\cdots+j_t=k$,
and so
\begin{equation} \label{eq:condgammainfinite}
\frac{\N_{\LL/\K}(a_0')}{\N_{\LL/\K}(a_0'')}=(-1)^{sk\ell}\prod_{i=1}^t{F_{i,0}^{j_i\ell}}\N_{\LL/\K}(\gamma).
\end{equation}
On the other hand,  since $a_0',a_0'' \in \LL'$, we get 
\[
\frac{\N_{\LL/\K}(a_0')}{\N_{\LL/\K}(a_0'')}=\frac{\N_{\LL'/\K}(\N_{\LL/\LL'}(a_0'))}{\N_{\LL'/\K}(\N_{\LL/\LL'}(a_0''))}=\frac{\N_{\LL'/\K}(a_0'^2)}{\N_{\LL'/\K}(a_0''^2)}=\left(\frac{\N_{\LL'/\K}(a_0')}{\N_{\LL'/\K}(a_0'')}\right)^2.
\]
This last equation together with Eq. \eqref{eq:condgammainfinite} implies that $(-1)^{sk\ell}\prod_{i=1}^t{F_{i,0}^{j_i\ell}}\N_{\LL/\K}(\gamma)$ is a square, leading to a contradiction.
\end{proof}

\begin{remark} 
Observe that the family of codes of Definition \ref{def:definitionD} generalizes several families of optimal codes in the rank and in the sum-rank metric. More precisely, for $s=1$, they coincide with the MSRD codes constructed in \cite[Definition 7.1]{neri2021twisted}. On the other hand, when $t=1$, they coincide with the MRD codes found in \cite{lobillo2025quotients}. Finally, when $s=t=1$, these are simply the MRD codes obtained in \cite{trombetti2018new}.
\end{remark}

\subsection{Over infinite fields} 

In this section, we show that we can explicitly obtain MSRD codes introduced in \Cref{def:definitionS} and \Cref{def:definitionD} of every desired number of blocks.
 Indeed, we recall that, by Proposition \ref{prop:constructioninifinitetuple}, starting by a monic irreducible polynomial $F(y) \in \K[y]$ of degree $s$, with $F(y) \neq y$, we can construct an $(s, m)$-admissible tuple in $\K[y]$. However, in order for the codes $S_{n,s\ell,k}(\eta,\rho,\FF)$ to be MSRD, we must ensure that the condition in Eq. \eqref{eq:MSRDnormcondition} is satisfied. Nevertheless, we obtain the following existence result over infinite fields.

\begin{proposition}
Assume that $\K$ is infinite and that there exists an irreducible monic polynomial $F(y) \in \K[y]$ having degree $s$, with $F(y) \neq y$. Then, it is possible to construct a code $S_{n,s\ell,k}(\eta,\sigma^j,\FF)$ in $\RRH$ as in \Cref{def:definitionS}, with $t$ blocks satisfying Eq.  \eqref{eq:MSRDnormcondition}, for every $t \in \mathbb{N}$, where $j \in \{0,\ldots,n-1\}$.
\end{proposition}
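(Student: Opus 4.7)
The plan is to produce the admissible tuple explicitly via Proposition \ref{prop:constructioninifinitetuple}, observe that the choice $\rho = \sigma^j$ forces the subfield $\K'$ to coincide with $\K$, and then exploit the infinitude of $\K$ to pick $\eta$ satisfying the finitely many norm constraints in Eq. \eqref{eq:MSRDnormcondition}.

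First, I would build the admissible tuple $\FF$. Proposition \ref{prop:constructioninifinitetuple} requires scalars $\lambda_1,\ldots,\lambda_t$ lying in the image of $\N_{\LL/\K}$ with pairwise distinct $s$-th powers. Since $\N_{\LL/\K}(\alpha) = \alpha^n$ for $\alpha \in \K^*$, the image of the norm contains $(\K^*)^n$. The map $\alpha \mapsto \alpha^{ns}$ on $\K^*$ has fibers of size at most $ns$, so choosing $\alpha_1,\ldots,\alpha_t \in \K^*$ with pairwise distinct $\alpha_i^{ns}$ is possible for any $t$, thanks to the infinitude of $\K$. Setting $\lambda_i := \alpha_i^n$ and $F_i(y) := \lambda_i^{-s} F(\lambda_i y)$ yields an $(s,m)$-admissible tuple $\FF = (F_1, \ldots, F_t)$ in $\K[y]$, with $m$ the number of irreducible factors of $F(x^n)$ in $R$.

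Next, I would compute $\K'$. Since $\sigma^j \in \Gal(\LL/\K)$ fixes $\K$ pointwise, one has $\K \subseteq \LL^{\sigma^j}$ and therefore $\K' = \K \cap \LL^{\sigma^j} = \K$. In particular $[\K:\K'] = 1 < \infty$, and the inner factor $\N_{\K/\K'}$ in Eq. \eqref{eq:MSRDnormcondition} reduces to the identity. The MSRD condition thus becomes
\[
\N_{\LL/\K}(\eta) \cdot (-1)^{sk\ell(n-1)} \prod_{i=1}^t F_{i,0}^{j_i\ell} \neq 1
\]
for every tuple $(j_1,\ldots,j_t)$ of non-negative integers summing to $k$.

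Finally, I would produce a valid $\eta$ by counting. The number of weak compositions of $k$ into $t$ parts is $\binom{k+t-1}{t-1}$, so $\N_{\LL/\K}(\eta)$ must avoid only finitely many values in $\K^*$. Restricting to $\eta \in \K^*$ gives $\N_{\LL/\K}(\eta) = \eta^n$, and the $n$-th power map on $\K^*$ has fibers of size at most $n$; hence at most $n\binom{k+t-1}{t-1}$ elements of $\K^*$ are forbidden. Since $\K^*$ is infinite, a valid $\eta$ exists, and Theorem \ref{th:newMRDl>1} then certifies that $S_{n,s\ell,k}(\eta,\sigma^j,\FF)$ is an MSRD code with $t$ blocks. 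The only mildly delicate point is the identification $\K' = \K$, which dispenses with the potentially subtle norm-descent step; everything else is a standard existence-by-avoidance argument over an infinite field.
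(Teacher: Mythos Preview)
Your proof is correct, and the overall architecture matches the paper's: build the admissible tuple via Proposition~\ref{prop:constructioninifinitetuple}, note that $\rho=\sigma^j$ gives $\K'=\K$, and then find $\eta$. The difference lies in how $\eta$ is produced. You use a pure counting argument: for fixed $k$ there are only $\binom{k+t-1}{t-1}$ forbidden values for $\N_{\LL/\K}(\eta)$, and restricting to $\eta\in\K^*$ with $\N_{\LL/\K}(\eta)=\eta^n$ makes the set of bad $\eta$ finite, so infinitude of $\K$ suffices. The paper instead invokes the fact that the multiplicative group of an infinite field is never finitely generated, so the subgroup $\langle F(0)\lambda_1^{-s},\ldots,F(0)\lambda_t^{-s}\rangle$ is proper in $\K^*$, and any $\eta$ whose norm lands outside it works. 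Your route is more elementary and tailored to a fixed $k$; the paper's choice of $\eta$ has the (unneeded here) feature that it satisfies Eq.~\eqref{eq:MSRDnormcondition} simultaneously for every $k$. You are also slightly more careful than the paper in checking that the $\lambda_i$ actually lie in the image of $\N_{\LL/\K}$, as Proposition~\ref{prop:constructioninifinitetuple} requires.
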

\begin{proof}
Since \( \K \) is an infinite field, for any \( t \in \mathbb{N} \), we can choose elements \( \lambda_1, \ldots, \lambda_t \in \K^* \) such that \( \lambda_i^s \neq \lambda_j^s \) for all \( i \neq j \). Define polynomials
\[
F_i(y) := \lambda_i^{-s} F(\lambda_i y), \quad \text{for } i = 1, \ldots, t.
\]
By \Cref{prop:constructioninifinitetuple}, \( (F_1, \ldots, F_t) \) is an \( (s,m) \)-admissible tuple over \( \K[y] \), where $m$ is the number of irreducible factors in a irreducible decompositions of $F(x^n)$ in $R$. We aim to ensure that the condition in Eq.  \eqref{eq:MSRDnormcondition} is satisfied. Explicitly, this condition becomes:
\begin{equation} \label{eq:normadmissibleinfinite}
\N_{\LL/\K}(\eta) \cdot \left( (-1)^{sk\ell(n-1)} \prod_{i=1}^t (F(0) \lambda_i^{-s})^{j_i \ell} \right) \neq 1.
\end{equation}

Recall that \( \K^* \), being the multiplicative group of an infinite field, is not finitely generated. Therefore, the subgroup of \( \K^* \) generated by the finite set \( \{ F(0)\lambda_1^{-s}, \ldots, F(0)\lambda_t^{-s},  \} \) is a proper subgroup of \( \K^* \). Hence, there exists an element \( \eta \in \LL^* \) such that
\[
\N_{\LL/\K}(\eta) \notin \langle F(0)\lambda_1^{-s}, \ldots, F(0)\lambda_t^{-s} \rangle.
\]
This choice of \( \eta \) guarantees that the condition in Eq. \eqref{eq:normadmissibleinfinite} is satisfied. Therefore, the code $S_{n,s\ell,k}(\eta,\sigma^j,\FF),$
with \( \ell = n/m \) and any \( j \in \mathbb{N} \), is an MSRD code. This concludes the proof.
\end{proof}

In the same spirit, we get the following existence results for the codes $D_{n,s\ell,k}(\gamma,\FF)$. We recall that a \textbf{quadratically closed field} is a field in which every element has a square root.

\begin{proposition}
    Assume that $\K$ is infinite and not quadratically closed. Assume there exists an irreducible monic polynomial $F(y) \in \K[y]$ having degree $s$, with $F(y) \neq y$ and $F(0) \in \KK^{(2)}$. Assume that there exists a subfield $\K\subseteq \LL'\subset \LL$ with $[\LL:\LL']=2$. Then, it is possible to construct a code $D_{n,s\ell,k}(\gamma,\FF)$ in $\RRH$ as in \Cref{def:definitionD}, with $t$ blocks satisfying Eq. \eqref{eq:normconditionDD}, for every $t \in \mathbb{N}$.
\end{proposition}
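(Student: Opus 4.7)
My plan is to mirror the proof of the preceding existence proposition for $S_{n,s\ell,k}$, while exploiting the additional hypothesis $F(0)\in\KK^{(2)}$ to reduce condition \eqref{eq:normconditionDD} to a single-coset condition modulo squares. Since $\K$ is infinite, the subgroup $\KK^{(2)}\subseteq \K^*$ is also infinite (the squaring map has finite fibers), so for any $t\in\mathbb{N}$ I can select $\lambda_1,\ldots,\lambda_t \in \KK^{(2)}\setminus\{0\}$ with $\lambda_i^s\neq \lambda_j^s$ whenever $i\neq j$, using that $\alpha\mapsto \alpha^s$ has finite fibers as well. Setting $F_i(y):=\lambda_i^{-s}F(\lambda_i y)$, Proposition~\ref{prop:constructioninifinitetuple} gives an $(s,m)$-admissible tuple $\FF=(F_1,\ldots,F_t)$, and the constant coefficients $F_{i,0}=F(0)\lambda_i^{-s}$ all lie in $\KK^{(2)}$ because both $F(0)$ and $\lambda_i$ do.

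With this setup, $\prod_{i=1}^t F_{i,0}^{j_i\ell}\in\KK^{(2)}$ for every tuple $(j_1,\ldots,j_t)$ of nonnegative integers with $\sum j_i=k$, so the family of conditions \eqref{eq:normconditionDD} collapses to the single, index-free condition
\[
\N_{\LL/\K}(\gamma)\notin (-1)^{sk\ell}\KK^{(2)},
\]
i.e.\ $\N_{\LL/\K}(\gamma)$ must avoid one prescribed coset of $\KK^{(2)}$ in $\K^*$. If $n=[\LL:\K]$ were odd, this would be immediate from the non-quadratic closure of $\K$: picking a non-square $\gamma\in\K^*$ gives $\N_{\LL/\K}(\gamma)=\gamma^n$, which lies in the same square class as $\gamma$ and hence outside $\KK^{(2)}$.

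The hard part is that $n$ must be even here, since $[\LL:\LL']=2$ forces $2\mid n$. For $\gamma\in\K^*$ one gets $\gamma^n\in\KK^{(2)}$ always, and more generally $\N_{\LL/\K}(\gamma)\in\KK^{(2)}$ for every $\gamma\in\LL'$ because $\N_{\LL/\LL'}$ restricted to $\LL'$ is squaring. To escape $\KK^{(2)}$ one is forced to choose $\gamma\in\LL\setminus\LL'$ and use the transitivity $\N_{\LL/\K}=\N_{\LL'/\K}\circ\N_{\LL/\LL'}$, together with the fact that $\N_{\LL/\LL'}(\gamma)=\gamma\tau(\gamma)$ (where $\tau$ generates $\Gal(\LL/\LL')$) can realize many classes in $\LL'^*$, and then argue as in the preceding proposition that $\N_{\LL'/\K}$ cannot land entirely in a single square class of $\K^*$ because $\K^*$ is not finitely generated and $\K$ is not quadratically closed. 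Verifying rigorously that $\gamma$ can be chosen so that the resulting norm lies outside the prescribed coset $(-1)^{sk\ell}\KK^{(2)}$ is the technical crux; once achieved, the final statement follows from Theorem~\ref{th:extendtrombmrd}.
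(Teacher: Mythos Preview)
Your construction of the admissible tuple is exactly the paper's: choose $\lambda_1,\ldots,\lambda_t\in\KK^{(2)}$ with pairwise distinct $s$-th powers, set $F_i(y)=\lambda_i^{-s}F(\lambda_i y)$, apply Proposition~\ref{prop:constructioninifinitetuple}, and observe that each constant term $F_{i,0}=F(0)\lambda_i^{-s}$ lies in $\KK^{(2)}$. Your reduction of~\eqref{eq:normconditionDD} to the single coset condition $\N_{\LL/\K}(\gamma)\notin(-1)^{sk\ell}\KK^{(2)}$ is also correct, and in fact slightly more careful than the paper, which does not isolate the sign.

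The genuine gap is the selection of $\gamma$, which you yourself flag as ``the technical crux'' and leave undone. The route you sketch---pass through $\LL'$, use transitivity of the norm, and invoke that $\K^*$ is not finitely generated---does not reach a conclusion, and the ``not finitely generated'' trick from the preceding $S$-proposition is the wrong tool here: there the obstacle was a \emph{finitely generated} subgroup of $\K^*$, whereas here the obstacle is (a coset of) $\KK^{(2)}$, which has finite index in $\K^*$ and is itself not finitely generated, so the same argument does not apply. The paper bypasses all of this: it simply asserts that, because $\K$ is not quadratically closed, one may take $\gamma\in\LL^*$ with $\N_{\LL/\K}(\gamma)$ a non-square in $\K$, and declares that this $\gamma$ satisfies~\eqref{eq:normconditionDD}. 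That direct appeal to the hypothesis is the intended one-line resolution; your detour through $\LL'$ is not needed for the paper's argument and, as written, remains incomplete.
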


\begin{proof}
 Since \( \K \) is an infinite field, we know that $\KK^{(2)}$ is infinite. Then for any \( t \in \mathbb{N} \), we can choose elements \( \lambda_1, \ldots, \lambda_t \in \K^{(2)} \) such that \( \lambda_i^s \neq \lambda_j^s \) for all \( i \neq j \). Define polynomials
\[
F_i(y) := \lambda_i^{-s} F(\lambda_i y), \quad \text{for } i = 1, \ldots, t.
\]
By \Cref{prop:constructioninifinitetuple}, \( (F_1, \ldots, F_t) \) is an \( (s,m) \)-admissible tuple over \( \K[y] \), where $m$ is the number of irreducible factors in a irreducible decompositions of $F(x^n)$ in $R$. We aim to ensure that the condition in Eq. \eqref{eq:normconditionDD} is satisfied. Note that the subgroup generated by $F(0)\lambda_1^{-s},\ldots,F(0)\lambda_t^{-s}$ is contained in the subgroup $\KK^{(2)}$, since $F(0),\lambda_1,\ldots,\lambda_t\in \KK^{(2)}$. By hypothesis, $\KK$ is a non quadratically closed field. Hence, there exists an element \( \gamma \in \LL^* \) such that $\NN_{\LL/\KK}(\gamma)$ is not a square in $\KK$, and this choice of \( \gamma \) ensures that the condition in Eq. \eqref{eq:normconditionDD} is satisfied. Therefore, the code $D_{n,s\ell,k}(\gamma,\FF)$
with \( \ell = n/m \), is an MSRD code. This concludes the proof.
\end{proof}

We now provide a constructive example arising from a specific selection of irreducible polynomials derived from the same irreducible polynomial \( F(y) \in \K[y] \).

\begin{example} \label{ex:InfiniteSfield}
Let \( \LL = \mathbb{Q}(\sqrt{2}) \) and \( \KK = \mathbb{Q} \). Then \( \LL/\KK \) is a cyclic Galois extension of degree \( [\LL : \KK] = 2 \), with Galois group \( \Gal(\LL/\KK) = \langle \sigma \rangle \), where the generator \( \sigma \) acts as
\[
\sigma(a + \sqrt{2}b) = a - \sqrt{2}b \quad \text{for } a,b \in \mathbb{Q}.
\]
Consider the skew polynomial ring \( R = \LL[x; \sigma] \). Let \( F(y) = y^2 - 2 \in \K[y] \); then \( F(y) \) is a monic irreducible polynomial of degree 2. Observe that \( f = x^2 - \sqrt{2} \) is an irreducible factor of \( F(x^2) \) in \( R \), since
\[
(x^2 + \sqrt{2})(x^2 - \sqrt{2}) = x^4 - 2,
\]
and \( \sqrt{2} \) is not a norm from \( \LL = \mathbb{Q}(\sqrt{2}) \) to \( \KK = \mathbb{Q} \). By Eq. \eqref{eq:artin}, we have the isomorphism
\[
\frac{R}{RF(x^2)} \cong M_2(\mathbb{Q}).
\]
Let \( t \in \mathbb{N} \), and set \( \lambda_i = 2^{2i+1} \) for \( i \in\{1, \ldots, t\} \). These choices ensure that \( \lambda_i^2 \neq \lambda_j^2 \) for \( i \neq j \). Define
\[
F_i(y) := \lambda_i^{-2} F(\lambda_i y) = y^2 - 2^{-2(2i+1)}.
\]
Then, by \Cref{prop:constructioninifinitetuple}, the tuple \( \FF = (F_1, \ldots, F_t) \) is a \( (2,2) \)-admissible tuple in \( \mathbb{Q}[y] \). Moreover,
\[
\frac{R}{R\Hf(x^2)} \cong \bigoplus_{i=1}^t M_2(\mathbb{Q}).
\]
The subgroup of \( \mathbb{Q}^* \) generated by \( F_0, \lambda_1, \ldots, \lambda_t \) is contained in
\[
G = \{ 2^i : i \in \mathbb{Z} \}.
\]
Now, choose \( \eta \in \LL^* \) such that \( \N_{\LL/\KK}(\eta) \notin G \). Then the code
\[
S_{2,2,k}(\eta, \sigma^j, \FF)
\]
is an MSRD code in \( R/R\Hf(x^2) \) for all \( k \in \{1,\ldots,2t-1\} \) and \( j \in \{0,1\} \).
    $\hfill \lozenge$
\end{example}

\begin{example}
Continuing in the same spirit as \Cref{ex:InfiniteSfield}, we now construct an explicit MSRD code of the form \( D_{n,s,k}(\eta,\FF) \).

Again consider \( R = \LL[x; \sigma] \) with \( \LL = \mathbb{Q}(\sqrt{2}) \), \( \KK = \mathbb{Q} \), and \( F(y) = y^2 - 2 \in \K[y] \). Let \( t \in \mathbb{N} \), and choose \( \lambda_i = p_i^2 \), where \( p_1, \ldots, p_t \) are distinct primes. These choices again ensure \( \lambda_i^2 \neq \lambda_j^2 \) for \( i \neq j \). Define
\[
F_i(y) := \lambda_i^{-2} F(\lambda_i y) = y^2 - \frac{2}{p_i^2}.
\]
Then, by \Cref{prop:constructioninifinitetuple}, the tuple \( \FF = (F_1, \ldots, F_t) \) is a \( (2,2) \)-admissible tuple in \( \mathbb{Q}[y] \), and
\[
\frac{R}{R\Hf(x^2)} \cong \bigoplus_{i=1}^t M_2(\mathbb{Q}).
\]

The subgroup \( G \subseteq \mathbb{Q}^* \) generated by \( F(0)\lambda_1^{-2},  \ldots, F(0)\lambda_t^{-2} \) is contained in the subgroup generated by powers of $2$ and rational squares. Let \( \eta = 3 + \sqrt{2} \in \LL \), so that
\[
\N_{\LL/\KK}(\eta) = (3 + \sqrt{2})(3 - \sqrt{2}) = 9 - 2 = 7 \notin G.
\]
Thus, the code
\[
D_{2,2,k}(\eta, \FF)
\]
is an MSRD code in \( R/R\Hf(x^2) \) for all \( k \in \{1,\ldots,2t-1\} \).
    $\hfill \lozenge$
\end{example}

\subsection{Over noncommutative division  rings} 
We now present an explicit construction of MSRD codes in the algebra \(\bigoplus_{i=1}^t M_n(\D)\), where \(\D\) is a noncommutative division ring and \(t \in \mathbb{N}\). Our construction follows the framework introduced in~\cite[Section 3.1]{lobillo2025quotients}.

Let \( r \geq 3 \) be an odd integer, and consider the finite field extension \( \F_{2^r} / \F_2 \). Let \( \tau: \F_{2^r} \to \F_{2^r} \) denote the Frobenius automorphism, defined by \( \tau(a) = a^2 \). This automorphism naturally extends component-wise to the field of rational functions \( \F_{2^r}(z) \). It is clear that the fixed field of \( \tau \) in \( \F_{2^r}(z) \) is \(\F_{2^r}(z)^{\tau} = \F_2(z)\). Next, consider the automorphism \( \theta: \F_{2^r}(z) \to \F_{2^r}(z) \) given by \( z \mapsto \frac{1}{z} \). Define the composite automorphism
\[
\sigma := \theta \circ \tau = \tau \circ \theta.
\]
Now, introduce the variable
\[
z' := z + \theta(z) = \frac{z^2 + 1}{z}.
\]
The fixed field of \( \theta \) is \( {\F_2(z)}^{\theta} = \F_2(z') \), and thus
\[
\F_{2^r}(z)^{\sigma} = \F_2(z').
\]
This shows that \(\LL/\KK:= \F_{2^r}(z)/\F_2(z') \) is a cyclic Galois extension of degree $n=2r$ with Galois group \( \Gal(\LL/\KK) = \langle \sigma \rangle \).

We now work in the skew polynomial ring \( R = \F_{2^r}(z)[x;\sigma] \) whose center is then given by
\[
Z\left(\F_{2^r}(z)[x;\sigma]\right) = \F_2(z')[x^n].
\]
Define the central polynomial
\[
F(y) = y + \left( \frac{z^2 + 1}{z^2 + z + 1} \right)^r=y + \left( \frac{z'}{z' + 1} \right)^r \in \F_2(z')[y].
\]  
The skew polynomial
\[
f = x^2 + \frac{z^2 + 1}{z^2 + z + 1} \in \F_{2^r}(z)[x;\sigma],
\]
is an irreducible factor of $F(x^n)$ in \( R \). So, in this construction, we have \( \deg(f) = 2 \) and \( \deg(F(y)) = 1 \). As a result, \( F(x^n) \) decomposes into a product of \( m = r \) irreducible factors over \( R \). Hence, by Eq. \eqref{eq:artin}, we get
\[
\frac{R}{R F(x^n)} \cong M_r(\mathcal{E}(f)),
\]
where \( \mathcal{E}(f) \) is a central division algebra over the center \( E_F \cong \F_2(z')[y]/(F(y)) \cong \F_2(z') \), with degree \( \ell = n/m = 2 \).

We now use this setting to provide constructions MSRD codes over matrix algebras with entries in noncommutative division rings.
Let $t \in \mathbb{N}$. For every $i\in\{1,\ldots,t\}$, consider $\lambda_i=z^{2i}$,  and define 
\[
F_i(y):=\lambda_i^{-1}F(\lambda_iy)= y + \frac{1}{z^{2i}}\left( \frac{z^2 + 1}{z^2 + z + 1} \right)^r \in \F_2(z)[y].
\]
Then, by \Cref{prop:constructioninifinitetuple}, the tuple \( \FF = (F_1, \ldots, F_t) \) is a \( (1,r) \)-admissible tuple in \( \mathbb{Q}[y] \), and
\[
\frac{R}{R\Hf(x^n)} \cong \bigoplus_{i=1}^t M_r(\mathcal{E}(f_{\alpha_i})),
\]
where $\alpha_i \in \LL$ is such that $\NN_{\LL/\KK}(\alpha_i)=\lambda_i$. Also, note that each \( \mathcal{E}(f_{\alpha_i}) \) is a central division algebra over the center \( E_{F_i} \cong \F_2(z') \), with degree \( \ell = n/m = 2 \).

\begin{proposition}
   The sets $S_{n,2,k}(0,\mathrm{id},\FF)$ and $S_{n,2,k}(1+z,\sigma^j,\FF)$ as in \Cref{def:definitionS}, 
    defines a $\K$-linear MSRD code in $R/RH_{\FF}(x^n) \cong \bigoplus_{i=1}^tM_r(\mathcal{E}(f_{\alpha_i}))$, for $j  \in \{0,\ldots,n-1\}$.
\end{proposition}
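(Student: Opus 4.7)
The plan is to apply \Cref{th:newMRDl>1}, which yields the MSRD property once the norm condition in Eq.~\eqref{eq:MSRDnormcondition} is verified. Here the parameters read $s = 1$, $m = r$, $\ell = n/m = 2$, and $n = 2r$. First, I would identify the subfield $\K'$ appearing in \Cref{th:newMRDl>1}. Since $\K = \F_2(z') = \LL^{\sigma}$ is the full fixed field, every power $\sigma^j$ fixes $\K$ pointwise, so $\LL^{\sigma^j}\supseteq \K$, and hence $\K' := \K\cap \LL^{\sigma^j} = \K$ in both the $\rho = \mathrm{id}$ and $\rho = \sigma^j$ cases. This is consistent with the claim of $\K$-linearity.

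For the first code, with $\eta = 0$, the left-hand side of Eq.~\eqref{eq:MSRDnormcondition} is $\N_{\LL/\K}(0)\cdot(\ldots) = 0 \neq 1$, so the condition holds trivially. The substance of the statement thus lies in the second code, for which I need to verify
\[
\N_{\LL/\K}(1+z)\,\prod_{i=1}^{t} F_{i,0}^{\,2 j_i} \;\neq\; 1,
\]
for every $(j_1,\ldots,j_t)\in \mathbb{Z}_{\geq 0}^{t}$ with $j_1+\cdots+j_t = k$, where we used that in characteristic $2$ the sign $(-1)^{sk\ell(n-1)}$ is $1$, and that $\N_{\K/\K'}$ is trivial.

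The first step is to compute $\N_{\LL/\K}(1+z)$. Since $\sigma$ acts trivially on $\F_{2^r}$-coefficients in $z$ when composed evenly and sends $z\mapsto 1/z$ when composed oddly, one has $\sigma^i(z) = z$ for $i$ even and $\sigma^i(z)=1/z$ for $i$ odd. Grouping the $2r$ factors accordingly,
\[
\N_{\LL/\K}(1+z) \;=\; (1+z)^r\,(1+z^{-1})^r \;=\; \frac{(1+z)^{2r}}{z^r} \;=\; \frac{(1+z^2)^r}{z^r} \;=\; (z')^r,
\]
using $(1+z)^2 = 1+z^2$ in characteristic $2$ and the definition $z' = (z^2+1)/z$.

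Finally, I would rule out the equality by a valuation argument on $\K = \F_2(z')$. Exploiting the identities $z^2+1 = z z'$ and $z^2+z+1 = z(z'+1)$ noted in the example, each $F_{i,0}$ factors as $\lambda_i^{-1}\,(z')^{r}/(z'+1)^{r}$, so the $(z'+1)$-adic valuation $v$ satisfies $v(F_{i,0}) = -r$ uniformly in $i$. Consequently,
\[
v\!\left(\N_{\LL/\K}(1+z)\,\prod_{i=1}^{t} F_{i,0}^{\,2 j_i}\right) \;=\; 0 + (-r)\cdot 2\sum_{i=1}^{t} j_i \;=\; -2rk \;<\; 0 \;=\; v(1),
\]
since $1\leq k\leq tr-1$. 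This contradicts the putative equality and establishes the norm condition. The main obstacle is precisely this last valuation computation, namely showing that the $\lambda_i^{-1}$-contribution does not perturb the $(z'+1)$-adic valuation of $F_{i,0}$; once this is checked, \Cref{th:newMRDl>1} gives the MSRD property with minimum distance $tm-k+1 = tr-k+1$.
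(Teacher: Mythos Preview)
Your proof is correct and follows the same overall strategy as the paper: reduce to \Cref{th:newMRDl>1}, dispense with $\eta=0$ trivially, and for $\eta=1+z$ verify the norm condition \eqref{eq:MSRDnormcondition}. Where you diverge is in the verification itself. The paper shows the stronger fact that $\N_{\LL/\K}(1+z)$ does not belong to the subgroup generated by the $F_{i,0}$'s, arguing by explicit unique factorization in $\F_2(z)$; you instead stay inside $\K=\F_2(z')$ and use the single $(z'+1)$-adic valuation to see the product can never equal $1$. Your route is a little cleaner, and your norm computation $\N_{\LL/\K}(1+z)=(z')^r$ is in fact the right one: the paper's value $\bigl(\frac{1+z}{z}\bigr)^r$ records only the odd powers of $\sigma$ in the Galois orbit and is not even an element of $\K$, although the paper's factorization argument goes through unchanged once one substitutes the correct norm. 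The residual step you flag---that $v_{z'+1}(\lambda_i)=0$---is immediate: reducing $z^2+z'z+1$ modulo $(z'+1)$ gives the irreducible $z^2+z+1$ over $\F_2$, so the prime $(z'+1)$ is inert in $\F_2(z)$ and $z$ is a unit in the residue field.
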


\begin{proof}
    The claim for \( S_{n,2,k}(0, \mathrm{id}, \FF) \) follows directly from \Cref{th:newMRDl>1}.  
For \( S_{n,2,k}(1+z, \sigma^j, \FF) \), we will prove the result by showing that \( \N_{\LL/\K}(1+z) \) does not belong to the subgroup \( G \subseteq \K^* \) generated by \( F(0)\lambda_1^{-1}, \ldots, F(0)\lambda_t^{-1} \). This yields the desired result by applying \Cref{th:newMRDl>1}.

To this end, observe that
\[
\N_{\LL/\K}(1+z) = \left(1 + \frac{1}{z} \right)^r = \left( \frac{1 + z}{z} \right)^r.
\]
Now, every element of \( G \) is of the form
\[
z^{2h_1} \left( \frac{z^2 + 1}{z^2 + z + 1} \right)^{r h_2},
\]
for some integers \( h_1, h_2 \in \mathbb{Z} \). A straightforward computation shows that \( \left( \frac{1 + z}{z} \right)^r \) can never be expressed in this form, thus proving the assertion.
\end{proof}

The finite field case will instead be the subject of the next section.

\section{The finite field case}\label{sec:finite_fields}

Due to the main application of our results to error-correcting codes, in this section we specifically focus on the finite field case.
In particular, we now state results deriving from Section \ref{sec:MSRDconstr} for general sum-rank metric codes first, and then for the very special case of codes in the Hamming metric. We will study in particular the admissible parameters for which we obtain new constructions of optimal codes.

Thus, assume that $\LL=\fqn$, $\K=\fq$. By Wedderburn's Theorem, $\mathcal{E}(f)$ is a field and  
\[\mathcal{E}(f) \cong E_F \cong \F_{q^s} \ \ \ \mbox{ and } \ \ \ 
\frac{R}{RF(x^n)} \cong M_n(\F_{q^s}),
\]
as $\F_{q^s}$-algebras, see e.g. \cite[proof of Theorem 4.3]{giesbrecht1998factoring}. So, since in this case $n=m$ -- and hence $\ell=1$ -- we deal with $s$-admissible tuple. More precisely, We will work in the setting 
\begin{equation} \label{eq:ArtinWedderburnfinite}
\left(R/RH_{\FF}(x^n),\mathrm{d}_{\FF}\right) \cong \left(\bigoplus\limits_{i=1}^tM_{n}(\F_{q^s}),\dsrk\right),
\end{equation}
where $\FF=(F_1,\ldots,F_t)$ is an $s$-admissible tuple. 

If $\K'$ is a subfield of $\F_{q^s}$, for a $\K'$-linear rank metric code $\C$ in $\left(\bigoplus\limits_{i=1}^tM_{n}(\F_{q^s}),\dsrk\right)$, the Singleton-like bound reads like 
\begin{equation} \label{eq:singletonourframefinite}
\dim_{\K'}(\C) \leq [\K:\K']sn\left( tn-d(\C)+1 \right).
\end{equation}

We start rewriting Theorem \ref{th:newMRDl>1} for finite fields.

\begin{theorem} \label{th:finitenewMSRD}
    Let $\FF=(F_1,\ldots,F_t)$ be an $s$-admissible tuple in $\fq[y]$ and let $F_{i,0}$ be the constant coefficient of $F_i$, for every $i \in \{1,\ldots,t\}$. Let $\rho \in \Aut(\F_{q^n})$, and let  $\K':=\fq\cap \fqn^{\rho}$. Let $k < tn$ be a positive integer, then the set
    \[
    S_{n,s,k}(\eta,\rho,\FF)=\{a_0+a_1x+\ldots+a_{sk-1}x^{sk-1}+\eta \rho(a_0) x^{ks}+RH_{\FF}(x^n) \colon a_i \in \F_{q^n}   \}
    \]
    defines a $\K'$-linear MSRD code in $R/RH_{\FF}(x^n)$ having minimum distance $tn-k+1$, for any $\eta \in \LL$ such that
    \begin{equation} \label{eq:MSRDnormconditionfinite}
    \N_{\F_{q^n}/\K'}(\eta) \N_{\F_q/\K'}\left((-1)^{sk(n-1)}\prod_{i=1}^t{F_{i,0}^{j_i}}\right)\neq 1,
    \end{equation}
    for all non negative integers $j_1,\ldots,j_t$ satisfying $j_1+\cdots+j_t=k$.
\end{theorem}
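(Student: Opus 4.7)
The plan is to follow the proof of Theorem \ref{th:newMRDl>1}, specialized to the finite field setting \eqref{eq:ArtinWedderburnfinite} where $m=n$ and $\ell=1$, and $\mathcal E(f_i)\cong \F_{q^s}$ for every $i$. The overall architecture (Singleton upper bound from the ambient dimension, matching lower bound from Corollary \ref{cor:boundsrkpoly}) is identical; the only genuinely new ingredient is the use of the transitivity of the norm together with $\rho\in\Gal(\F_{q^n}/\K')$.

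First I would show that $\C:=S_{n,s,k}(\eta,\rho,\FF)$ is a $\K'$-subspace of $R/RH_{\FF}(x^n)$. By construction $\K'=\F_q\cap\F_{q^n}^\rho$ is fixed pointwise by $\rho$, so the defining relation $a_{sk}=\eta\,\rho(a_0)$ is $\K'$-linear in the free coefficients $a_0,\ldots,a_{sk-1}\in\F_{q^n}$. Since $k<tn$ implies $sk<stn=\deg(H_{\FF}(x^n))$, these free coefficients produce distinct canonical representatives, hence $\dim_{\K'}(\C)=sk\cdot n\cdot [\F_q:\K']$. Substituting into the Singleton-like bound \eqref{eq:singletonourframefinite} gives $d_{\FF}(\C)\le tn-k+1$.

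The heart of the proof is to verify the matching lower bound $\wt_{\FF}(\overline{a})\ge tn-k+1$ for every nonzero $\overline{a}\in\C$. Write $\overline{a}=\sum_{i=0}^{sk-1}a_ix^i+\eta\rho(a_0)x^{sk}+RH_{\FF}(x^n)$. If $a_0=0$ or $\eta=0$, then $\deg(\overline{a})<sk$ and Corollary \ref{cor:boundsrkpoly} (in its contrapositive form, applied with exponent $k-1$ or directly by observing that weight equal to $tn-k$ forces degree exactly $sk$) yields $\wt_{\FF}(\overline{a})\ge tn-k+1$. In the remaining case $a_0\ne 0$ and $\eta\ne 0$, the leading coefficient is nonzero, so $\deg(\overline{a})=sk$; suppose toward a contradiction that $\wt_{\FF}(\overline{a})=tn-k$. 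Corollary \ref{cor:boundsrkpoly} then forces
\[
\frac{\N_{\F_{q^n}/\F_q}(a_0)}{\N_{\F_{q^n}/\F_q}(\eta\,\rho(a_0))}=(-1)^{sk(n-1)}\prod_{i=1}^t F_{i,0}^{j_i}
\]
for some nonnegative integers $j_1,\ldots,j_t$ with $j_1+\cdots+j_t=k$.

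The last step, which is the only point requiring real care, is to convert this equality into a contradiction with hypothesis \eqref{eq:MSRDnormconditionfinite}. Apply $\N_{\F_q/\K'}$ to both sides and use transitivity $\N_{\F_q/\K'}\circ\N_{\F_{q^n}/\F_q}=\N_{\F_{q^n}/\K'}$ to get
\[
\N_{\F_{q^n}/\K'}(\eta)\cdot \N_{\F_{q^n}/\K'}\!\bigl(\rho(a_0)a_0^{-1}\bigr)\cdot \N_{\F_q/\K'}\!\left((-1)^{sk(n-1)}\prod_{i=1}^t F_{i,0}^{j_i}\right)=1.
\]
The key observation is that since $\K'$ is fixed pointwise by $\rho$, the map $\rho$ belongs to $\Gal(\F_{q^n}/\K')$; permuting the factors defining the norm gives $\N_{\F_{q^n}/\K'}(\rho(a_0))=\N_{\F_{q^n}/\K'}(a_0)$, so the middle factor above equals $1$. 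What remains is precisely the equation forbidden by \eqref{eq:MSRDnormconditionfinite}, producing the desired contradiction. I expect this norm-transitivity trick to be the only non-routine step; everything else is bookkeeping that carries over verbatim from Theorem \ref{th:newMRDl>1}.
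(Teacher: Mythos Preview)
Your proposal is correct and follows essentially the same approach as the paper, which in fact gives no separate proof for Theorem \ref{th:finitenewMSRD} but simply presents it as the specialization of Theorem \ref{th:newMRDl>1} to the finite field setting with $m=n$ and $\ell=1$. Your write-up is actually more explicit than the paper's general proof on the key norm step, correctly invoking $\rho\in\Gal(\F_{q^n}/\K')$ to obtain $\N_{\F_{q^n}/\K'}(\rho(a_0))=\N_{\F_{q^n}/\K'}(a_0)$; the only minor slip is that the parenthetical option ``applied with exponent $k-1$'' does not work when $s>1$ (since $\deg(\overline{a})\le sk-1$ need not be $\le s(k-1)$), but your alternative via the contrapositive of the second part of Corollary \ref{cor:boundsrkpoly} is valid and suffices.
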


\begin{example}
Let us fix the same setting used in Examples \ref{ex:isomorphismn=s=3} and \ref{ex:constructiontuple3}. Let \( \xi \in \mathbb{F}_{q^3} \setminus \mathbb{F}_q \), and consider the monic irreducible polynomial
\[
F(y) = (y - \xi)(y - \sigma(\xi))(y - \sigma^2(\xi)) \in \mathbb{F}_q[y],
\]
where \( \sigma\) is a generator of \(\mathrm{Gal}(\mathbb{F}_{q^3}/\mathbb{F}_q)\). We define the \( 3 \)-admissible tuple $\FF = (F_1, \ldots, F_t)$ in $\mathbb{F}_q[y]$ as follows. Let \( \lambda_1, \ldots, \lambda_t \in \mathbb{F}_q^* \) be such that $\lambda_i^3 \neq \lambda_j^3$ for all $i \neq j$, and define $F_i(y) := \lambda_i^{-3} F(\lambda_i y).$ Observe that this is possible for every odd $q$ such that $t\le \frac{(q-1)}{\gcd(3,q-1)}$.

Now consider \( k = 2 \), and let \( \eta \in \LL \) be such that
\[
\mathrm{N}_{\mathbb{F}_{q^3}/\mathbb{F}_q}(\eta) \cdot \left(\prod_{i=1}^t F_{i,0}^{j_i} \right) \neq 1,
\]
for all non-negative integers \( j_1, \ldots, j_t \) satisfying \( j_1 + \cdots + j_t = 2 \), where \( F_{i,0} := F_i(0) \). Note that this is the same condition as \eqref{eq:MSRDnormconditionfinite} of Theorem \ref{th:finitenewMSRD}, since we chose the parameters so that $\K=\K'=\Fq$ and $(-1)^{sk(n-1)}=1$. Then, consider the code
\[
\mathcal{C} = S_{3,3,2}(\eta, \mathrm{id}, \FF) = \left\{ a_0 + a_1 x + \ldots + a_5 x^5 + \eta a_0 x^6 + R H_{\FF}(x^3) \colon a_i \in \mathbb{F}_{q^3} \right\}.
\]
By \Cref{th:finitenewMSRD}, we have that \( \mathcal{C} \) is an MSRD code in
\[
\frac{R}{R H_{\FF}(x^3)} \cong \bigoplus_{i=1}^t M_3
(\mathbb{F}_{q^3}),
\]
with minimum distance $d(\mathcal{C}) = tn - k + 1 = 3t - 1.$

As a concrete instance, consider \( q = 5 \), and let \( \xi \) be a primitive element of \( \mathbb{F}_{5^3} \) chosen as a root of the irreducible polynomial \( y^3 + 3y + 3 \). One can consider the \( \mathbb{F}_{5^3} \)-algebra isomorphism
\[
\mat{F} : \frac{R}{R F(x^3)} \longrightarrow M_3(\mathbb{F}_{5^3}),
\]
given explicitly by \eqref{eq:matrixexplicitn=s=3}. We can choose \( t = 2 \), \( \lambda_1 = 1 \), and \( \lambda_2 = 2 \). Take \( \alpha_1 = 1 \) and \( \alpha_2=\xi \in \mathbb{F}_{5^3} \) and note that
$\mathrm{N}_{\mathbb{F}_{5^3}/\mathbb{F}_5}(\alpha_i) = \lambda_i$.
Then, by \Cref{prop:explicitchineseremaindertuple}, we know that the map
\[
\overline{a} \in \frac{R}{R H_{\FF}(x^3)} \longmapsto \left( \mat{F}\left( \overline{\omega}_{\alpha_1^{-1}}(\overline{a}) \right), \; \mat{F}\left( \overline{\omega}_{\alpha_2^{-1}}(\overline{a}) \right) \right) \in M_3(\mathbb{F}_{5^3}) \oplus M_3(\mathbb{F}_{5^3})
\]
is a ring isomorphism.
So, we have that the code
\[
\left\{ 
\left( \mat{F}\left( \overline{\omega}_{\alpha_1^{-1}}(\overline{a}) \right), \; \mat{F}\left( \overline{\omega}_{\alpha_2^{-1}}(\overline{a}) \right) \right) 
\colon \overline{a} \in \mathcal{C}
\right\}
= \]
\[
\left\{
\begin{aligned}
& \resizebox{\textwidth}{!}{$
\left(
\begin{array}{ccc|ccc}
a_0 + a_3 \xi + a_6 \xi^2 &
a_2 \xi + a_5 \xi^2 &
a_1 \xi + a_4 \xi^2 &
b_0 + b_3 \xi^2 + b_6 \xi^4 &
b_2 \xi^2 + b_5 \xi^4 &
b_1 \xi^2 + b_4 \xi^4
\\
\sigma^2(a_1) + \sigma^2(a_4)\xi &
\sigma^2(a_0) + \sigma^2(a_3)\xi + \sigma^2(a_6)\xi^2 &
\sigma^2(a_2)\xi + \sigma^2(a_5)\xi^2 &
\sigma^2(b_1)\xi^2 + \sigma^2(b_4)\xi^4 &
\sigma^2(b_0) + \sigma^2(b_3)\xi^2 + \sigma^2(b_6)\xi^4 &
\sigma^2(b_2)\xi^2 + \sigma^2(b_5)\xi^4
\\
\sigma(a_2) + \sigma(a_5)\xi &
\sigma(a_1) + \sigma(a_4)\xi &
\sigma(a_0) + \sigma(a_3)\xi + \sigma(a_6)\xi^2 &
\sigma(b_2)\xi^2 + \sigma(b_5)\xi^4 &
\sigma(b_1)\xi^2 + \sigma(b_4)\xi^4 &
\sigma(b_0) + \sigma(b_3)\xi^2 + \sigma(b_6)\xi^4
\end{array}
\right) \colon 
$}
\\[6pt]
&\hskip 7 cm  a_i \in \F_{q^3}, \;\; b_i = a_i \cdot \mathrm{N}_i(\xi^{-i}). \quad 
\end{aligned}
\right\}.
\]

 is an MSRD code in 
\( M_3(\mathbb{F}_{5^3}) \oplus M_3(\mathbb{F}_{5^3}) \), with minimum sum-rank distance \( d = 5 \).

\hfill \( \lozenge \)
\end{example}

We now move on to specializing Theorem \ref{th:extendtrombmrd} over finite fields. However, we want to remark that Theorem \ref{th:extendtrombmrd} can be improved in the finite field case, as we will see in the next result. We will comment on this later in Remark \ref{rem:finite_fields_TZ}.

\begin{theorem} \label{th:finiteextendtrombmrd}
      Let $q$ be an odd prime power, let $\FF=(F_1,\ldots,F_t)$ be an $s$-admissible tuple in $\fq[y]$ and let $F_{i,0}$ be the constant coefficient of $F_i$, for every $i \in \{1,\ldots,t\}$. Assume that there exists a subfield $\LL'$ with $[\fqn:\LL']=2$  (that is, $q$ is a square or $n$ is even) and let $\K'=\LL'\cap \fq$. 
     For any $1 \leq k < tn$, the set
     \[D_{n,s,k}(\gamma,\FF)=\left\{ a_0'+\sum_{i=1}^{sk-1} a_i x^i + \gamma a_0'' x^{sk} +RH_{\FF}(x^n) \colon a_i \in \LL, a_0',a_0'' \in \LL' \right\},\]
defines a $\K'$-linear MSRD code in $R/RH_{\FF}(x^n)$ with minimum distance $tn-k+1$, for any $\gamma \in \fqn$ such that $(-1)^{sk(n-1)}\left(\prod_{i=1}^t{F_{i,0}^{j_i}}\right)\N_{\fqn/\fq}(\gamma)$ is not a square in $\fq$     for all non negative integers $j_1,\ldots,j_t$ satisfying $j_1+\cdots+j_t=k$. 
\end{theorem}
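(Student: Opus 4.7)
The plan is to adapt the proof of \Cref{th:extendtrombmrd} to the finite field setting, where the index-two subfield $\LL'$ of $\LL=\fqn$ is no longer required to contain $\K=\fq$. First, I would verify the structural setup: since $\K'=\LL'\cap\fq$ is contained in both $\LL'$ and $\fqn$, the $\K'$-scalar action preserves the code, so $D_{n,s,k}(\gamma,\FF)$ is $\K'$-linear. A direct count gives its $\K'$-dimension as
\[
2[\LL':\K']+(sk-1)[\fqn:\K']=nsk[\fq:\K'],
\]
using $[\LL:\LL']=2$. Plugging this into the Singleton bound \eqref{eq:singletonourframefinite} produces $d(D_{n,s,k}(\gamma,\FF))\le tn-k+1$.

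For the reverse inequality, let $\overline{a}=a_0'+\sum_{i=1}^{sk-1}a_ix^i+\gamma a_0''x^{sk}+RH_{\FF}(x^n)$ be a nonzero element of the code. I would split into two cases. If $a_0''=0$, then $\deg(\overline{a})\le sk-1$; by \Cref{prop:factorizationghies} every right divisor of $H_{\FF}(x^n)$ in $R$ has degree a multiple of $s$, hence $\deg(\gcrd(\overline{a},H_{\FF}(x^n)))\le s(k-1)$, which gives $\wt_{\FF}(\overline{a})\ge tn-k+1$. If instead $a_0''\ne 0$, then $\deg(\overline{a})=sk$ and \Cref{cor:boundsrkpoly} yields $\wt_{\FF}(\overline{a})\ge tn-k$; assuming equality, the same corollary forces
\[
\frac{\N_{\LL/\K}(a_0')}{\N_{\LL/\K}(a_0'')}=(-1)^{sk(n-1)}\Bigl(\prod_{i=1}^t F_{i,0}^{j_i}\Bigr)\N_{\LL/\K}(\gamma)
\]
for some non-negative integers with $j_1+\cdots+j_t=k$.

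The crux is then to show that the left-hand side above is a square in $\K$, contradicting the hypothesis on $\gamma$. I would prove the auxiliary squarity claim: $\N_{\LL/\K}(x)$ is a square in $\K$ for every $x\in\LL'^*$. Indeed, $|\LL'|^2=q^n$ and $q$ is odd, so $|\LL'|$ is odd and $(|\LL'|+1)/2$ is an integer; hence $(q^n-1)/2=(|\LL'|-1)(|\LL'|+1)/2$ is a multiple of $|\LL'|-1$, which forces $x^{(q^n-1)/2}=1$ for every $x\in\LL'^*$. Combined with the identity $\N_{\LL/\K}(x)^{(q-1)/2}=x^{(q^n-1)/2}$, this shows that $\N_{\LL/\K}(x)$ is a square in $\K$. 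Applying this to both $a_0'$ and $a_0''$ closes the contradiction. The main obstacle is precisely this squarity claim when $\K\not\subseteq\LL'$, which occurs when $n$ is odd and $q$ is a square: in that situation the tower identity $\N_{\LL/\K}=\N_{\LL'/\K}\circ\N_{\LL/\LL'}$ used in \Cref{th:extendtrombmrd} is no longer available, and the direct exponent computation above is what replaces it, relying on the fact that over finite fields squarity is detected by the character $y\mapsto y^{(q-1)/2}$.
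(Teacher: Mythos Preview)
Your proposal is correct and follows essentially the same route as the paper: bound the dimension against the Singleton bound, use \Cref{cor:boundsrkpoly} to reduce the equality case to the norm-ratio identity, and then contradict the hypothesis on $\gamma$ by showing that $\N_{\fqn/\fq}(x)$ is a square in $\fq$ for every $x\in\LL'^*$. The paper phrases this last step as ``every element of $\LL'$ is a square in $\fqn$, hence $\N_{\fqn/\fq}(a_0'/a_0'')=\N_{\fqn/\fq}(\delta)^2$,'' whereas you compute the exponent $(q^n-1)/2$ directly---these are equivalent---and your uniform dimension count avoids the paper's explicit case split between $\K'=\fq$ (where it simply invokes \Cref{th:extendtrombmrd}) and $\K'\subsetneq\fq$.
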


\begin{proof}
    If $\K'=\fq$, then $\Fq\subseteq \LL'\subset \fqn$, and we are in the same hypotheses of Theorem \ref{th:extendtrombmrd}, and we can conclude.
    Thus, assume that $\K'\subsetneq \fq$. This means that $[\Fq:\K']=2$. In this case, the code
    $\C=D_{n,s,k}(\gamma,\FF)$ is $\K'$-linear with $\dim_{\K'}(\C)=nsk[\fq:\K']=2nsk$. In order to prove that $D_{n,s,k}(\gamma,\FF)$ defines an MSRD code in $R/RH_{\FF}(x^n)$, it is enough to prove that the $\mathbf{F}$-weight of its non-zero elements is at least $tn-k+1$. Let $\overline{a}=a_0'+\sum_{i=1}^{sk-1} a_i x^i + \gamma a_0'' x^{sk} +RH_{\FF}(x^n)$ be a non zero element of $\C$. If $a_0''=0$, the claim immediately follows by Corollary \ref{cor:boundsrkpoly}. So assume that $a_0'' \neq 0$, then $\wt_{\FF}(\overline{a}) \geq tn-k$ and suppose by contradiction that $\wt_{\FF}(\overline{a}) = tn-k$. Again by Corollary \ref{cor:boundsrkpoly}, we must have
\[
    \frac{\N_{\fqn/\fq}(a_0')}{\N_{\fqn/\fq}(\gamma a_0'')}=(-1)^{sk(n-1)}\prod_{i=1}^t{F_{i,0}^{j_i}},
    \]
     for some non negative integer $j_1,\ldots,j_t$ such that $j_1+\cdots+j_t=k$, and  
hence
\begin{equation} \label{eq:condgamma}
\N_{\fqn/\fq}\left(\frac{a_0'}{a_0''}\right)=\frac{\N_{\fqn/\fq}(a_0')}{\N_{\fqn/\fq}(a_0'')}=(-1)^{sk(n-1)}\left(\prod_{i=1}^t{F_{i,0}^{j_i\ell}}\right)\N_{\fqn/\fq}(\gamma).
\end{equation}
On the other hand, since $\LL'$ is a finite field and $[\fqn:\LL']=2$, every element of $\LL'$ is a square in $\fqn$. Hence, also $\frac{a_0'}{a_0''}=\delta^2$ for some $\delta \in \fqn$. This means that
$$\N_{\fqn/\fq}\left(\frac{a_0'}{a_0''}\right)=\N_{\fqn/\fq}(\delta^2)=\N_{\fqn/\fq}(\delta)^2$$
is a square in $\fq$, and by Eq. \eqref{eq:condgamma}, we get a contradiction.
\end{proof}

\begin{remark}
    Note that, the assumption of $q$ being odd in Theorem \ref{th:finiteextendtrombmrd} is needed so that the finite field $\Fq$ is not quadratically closed. This ensures that we have concrete instances of Theorem \ref{th:finiteextendtrombmrd}.
\end{remark}

\begin{remark}\label{rem:finite_fields_TZ}
    The reader might wonder what is going wrong if we use the same hypotheses of Theorem \ref{th:finiteextendtrombmrd} for a generic cyclic Galois extension $\LL/\K$. Let $\LL'$ be a subfield of $\LL$ with $[\LL:\LL']=2$ and let $\K'=\K\cap \LL'$. If $\K'=\K$, then we are in the hypotheses of Theorem \ref{th:extendtrombmrd}, and the statement holds true. However, if $\K'\subsetneq \K$, then $[\K:\K']=2$ and we have a tower of extension fields as in the picture.   
\begin{minipage}{0.63\textwidth}
  In the case of finite fields, by taking elements $\alpha,\beta\in \LL'$ we could conclude that they are squares in $\LL$, since $x^2-\alpha$ and $x^2-\beta$ split in $\LL$. Indeed, this is a consequence of the fact that a finite field has a unique degree $2$ extension field. This is not true for general fields, where one might easily have an element $\alpha\in \LL'$ such that $x^2-\alpha$ does not factor over $\LL$. For instance, assume that $\LL=\mathbb Q(\sqrt{2},\sqrt{3})$,  $\K=\mathbb Q(\sqrt{2})$, $\LL'=\mathbb Q(\sqrt{3})$, and $\K'=\mathbb Q$. 
  If we take $\alpha=\sqrt{3}, \beta=1\in \LL'$, then $\alpha$ is not a square in $\LL'$ and 
  $$-3=\frac{-3}{1}=\N_{\LL/\LL'}\left(\frac{\alpha}{\beta}\right),$$
  which is not a square in $\LL$.

    \end{minipage} \hfill
 \begin{minipage}{0.33\textwidth}
  \[
\xymatrix@R+0.1pc@C-0.3pc{
    &\LL\ar@{-}[ld]_2\ar@{-}[rdd]^n \\
    \LL' \ar@{-}[rdd]_n && \\
    &&\K\ar@{-}[ld]^2\\
    & \K' & 
}
\]
\end{minipage}   
\end{remark}

\subsection{Length of the constructed codes}
We now focus on the maximum number of blocks that the MSRD codes constructed over finite fields can have. In particular, for $n=1$, this corresponds to computing the maximum length of MDS codes we  obtain with our methods. 
We observe that such a number is $t$ and it is given by the number of polynomials $F_1,\ldots,F_t$ that are in an $s$-admissible tuple $\FF$.

Before delving into the study of these codes, we need some auxiliary notation and results. Define the sets 
\begin{align*}X_s&:=\{  F(y)\in \Fq[y]\,:\,  F \mbox{ is irreducible and }  \deg F=s\},\\
Y_s&:=\F_{q^s}\setminus\left(\bigcup_{\substack{d|s \\ d<s}}\F_{q^d}\right).
\end{align*}

The cardinalities of $Y_s$ and $X_s$ are well-known and can be derived by using M\"obius inversion formula. Recall that the  M\"obius function is defined on the natural numbers via
    $$\mu(n)=\begin{cases}
        1 & \mbox{ if  } n=1, \\
        (-1)^k & \mbox{ if } n \mbox{  is the product of } k \mbox{ distinct primes, } \\
        0 & \mbox{ if } n \mbox{ is divisible by the square of a prime.}
    \end{cases}$$

\begin{lemma}\label{lem:ys_cardinality}\begin{enumerate}
    \item The cardinality of $Y_s$ is 
$$|Y_s|=\sum_{\substack{d|s}}\mu\left(\frac{s}{d}\right)q^d.$$
    \item The cardinality of  $X_s$ is
    $$|X_s|=\frac{1}{s}|Y_s|=\frac{1}{s}\sum_{\substack{d|s}}\mu\left(\frac{s}{d}\right)q^d.$$
\end{enumerate}
    
\end{lemma}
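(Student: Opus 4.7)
The plan is as follows. Both parts rest on a single structural observation plus standard Möbius inversion, so the ``obstacle'' is mostly bookkeeping rather than a genuine difficulty.

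For part (1), I would first reinterpret $Y_s$ as the set of elements of $\F_{q^s}$ whose minimal polynomial over $\F_q$ has degree exactly $s$; equivalently, the elements $\alpha \in \F_{q^s}$ with $\F_q(\alpha) = \F_{q^s}$. More generally, for each divisor $d \mid s$, let $Y_d := \F_{q^d} \setminus \bigcup_{e \mid d,\, e < d} \F_{q^e}$, which is the set of elements of $\F_{q^s}$ of degree exactly $d$ over $\F_q$. Since an element $\alpha \in \F_{q^s}$ satisfies $\alpha \in \F_{q^d}$ if and only if $d \mid s$, and the degree over $\F_q$ is uniquely determined, we obtain the disjoint union
\[
\F_{q^s} = \bigsqcup_{d \mid s} Y_d,
\]
which yields $q^s = \sum_{d \mid s} |Y_d|$. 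Applying the Möbius inversion formula then gives
\[
|Y_s| = \sum_{d \mid s} \mu(s/d)\, q^d,
\]
as claimed.

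For part (2), I would exhibit an $s$-to-$1$ map between $Y_s$ and $X_s$. Associate to each $\alpha \in Y_s$ its minimal polynomial $m_\alpha(y) \in \F_q[y]$; by the characterization of $Y_s$, the polynomial $m_\alpha$ is monic, irreducible, and of degree exactly $s$, hence lies in $X_s$. Conversely, any $F \in X_s$ splits in $\F_{q^s}$ into $s$ distinct roots (since $\F_{q^s}$ is the splitting field of an irreducible degree-$s$ polynomial and the extension is separable), and each such root generates $\F_{q^s}$ over $\F_q$, hence belongs to $Y_s$. Thus the fibers of $\alpha \mapsto m_\alpha$ have size exactly $s$, giving $|X_s| = |Y_s|/s$, and the formula in (2) follows by substituting the expression from (1).

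No step is really hard; the only thing to be careful about is justifying that $\F_{q^s}$ is partitioned by $\{Y_d : d \mid s\}$ (which uses the standard fact that $\F_{q^d} \subseteq \F_{q^s}$ iff $d \mid s$) and that the minimal polynomial map is indeed exactly $s$-to-$1$ (separability of finite field extensions). Both are classical, so the proof should be short and purely formal.
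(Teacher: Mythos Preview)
Your proposal is correct and matches the paper's approach: the paper states this lemma as well-known and explicitly says it ``can be derived by using M\"obius inversion formula,'' without giving a proof. Your $s$-to-$1$ map $\alpha \mapsto m_\alpha$ from $Y_s$ to $X_s$ for part~(2) is exactly the argument the paper later spells out in the proof of Lemma~\ref{lem:XTs}.
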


We now consider the family of codes  $S_{n,s,k}(\eta,\rho,\FF)$ and derive its maximum possible number of matrix blocks.  We start by analyzing the special case $\eta=0$. Note that, if $\eta=0$, we obtain codes that remind the linearized Reed-Solomon codes over finite fields. Indeed, we have
    \[
    S_{n,s,k}(0,\rho,\FF)=\{a_0+\ldots+a_{sk-1}x^{sk-1}+RH_{\FF}(x^n) \colon a_i \in \F_{q^n}   \}=\{a+RH_{\FF}(x^n) \colon \deg(a)<sk   \}
    \]
In this case, there are no restrictions on the parameters. The only requirement is that the number of matrix blocks $t$ after the isomorphism $\Phi_{\Hf}$ is given by the maximum possible length of an $s$-admissible tuple. 
\begin{theorem}
    There exists an $s$-admissible tuple $\FF=(F_1,\ldots,F_t)$ such that $S_{n,s,k}(0,\rho,\FF)$ is MSRD for each
    $$t\leq \frac{1}{s}\sum_{\substack{d|s}}\mu\left(\frac{s}{d}\right)q^d.$$
\end{theorem}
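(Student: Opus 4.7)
The plan is to reduce the existence of the claimed MSRD code to two essentially independent observations: (1) every $s$-admissible tuple of suitable length actually exists in the finite field setting, and (2) the construction $S_{n,s,k}(0,\rho,\FF)$ is automatically MSRD whenever $\FF$ is $s$-admissible, because the norm condition in Theorem \ref{th:finitenewMSRD} is vacuous at $\eta=0$.

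First I would unpack what $s$-admissibility means over $\fq$. By the discussion around Eq.~\eqref{eq:ArtinWedderburnfinite}, when $\LL=\fqn$ and $\K=\fq$ one always has $m=n$ for any monic irreducible polynomial $F(y)\in\fq[y]$ of degree $s$ with $F(y)\neq y$. Hence condition~(2) in the definition of an $(s,n)$-admissible tuple is automatic, and constructing an $s$-admissible tuple of length $t$ is equivalent to choosing $t$ pairwise distinct monic irreducible polynomials $F_1,\ldots,F_t\in\fq[y]$ of degree $s$, each different from $y$. By Lemma~\ref{lem:ys_cardinality}(b), the total number of such polynomials is
\[
|X_s| \;=\; \frac{1}{s}\sum_{d\mid s}\mu\!\left(\frac{s}{d}\right)q^{d},
\]
with the understanding that when $s\geq 2$ none of them equals $y$, while for $s=1$ one excludes the polynomial $y$. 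In either case, the bound $t\le |X_s|$ stated in the theorem is the maximum admissible length, and for every smaller $t$ we can simply pick any $t$ elements of $X_s\setminus\{y\}$ to form a valid $s$-admissible tuple $\FF=(F_1,\ldots,F_t)$.

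Next I would apply Theorem~\ref{th:finitenewMSRD} with $\eta=0$ to conclude that the code
\[
S_{n,s,k}(0,\rho,\FF)\;=\;\{a_0+a_1 x+\cdots+a_{sk-1}x^{sk-1}+RH_\FF(x^n)\,:\, a_i\in\fqn\}
\]
is MSRD of minimum distance $tn-k+1$. The norm condition in Eq.~\eqref{eq:MSRDnormconditionfinite} reads
\[
\N_{\fqn/\K'}(\eta)\,\N_{\fq/\K'}\!\left((-1)^{sk(n-1)}\prod_{i=1}^t F_{i,0}^{j_i}\right)\neq 1,
\]
and for $\eta=0$ the left-hand side equals $0\neq 1$ for every admissible choice of $j_1,\ldots,j_t$. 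Thus the hypothesis of Theorem~\ref{th:finitenewMSRD} is trivially satisfied, which delivers the MSRD property.

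There is essentially no serious obstacle here: the content of the statement is the counting input provided by Lemma~\ref{lem:ys_cardinality} together with the fact, already established in the preliminaries, that in the finite-field setting every monic irreducible polynomial of degree $s$ different from $y$ gives rise to a factorization of $F(x^n)$ with exactly $n$ irreducible factors. The only minor care point is the boundary case $s=1$, where $y\in X_s$ must be discarded; this produces no loss for the statement as phrased, since the bound is $t\le |X_s|$ and all strictly smaller $t$ are realizable by the counting argument above.
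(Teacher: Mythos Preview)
Your proposal is correct and follows essentially the same approach as the paper: both argue that when $\eta=0$ the norm condition of Theorem~\ref{th:finitenewMSRD} is vacuous, so the only requirement is the existence of an $s$-admissible tuple of length $t$, which reduces to counting monic irreducibles of degree $s$ via Lemma~\ref{lem:ys_cardinality}. Your write-up is in fact more detailed than the paper's (you explicitly note that $m=n$ is automatic over finite fields and spell out why $\N_{\fqn/\K'}(0)=0\neq 1$), and your remark on the $s=1$ boundary case is a fair observation that the paper itself does not address.
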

\begin{proof}
    Since there are no restrictions on the parameters, the only thing we need is to have an $s$-admissible tuple of length $t$. This is possible for every $t\leq |X_s|$, whose cardinality is, by Lemma \ref{lem:ys_cardinality}(2), 
    $$\frac{1}{s}\sum_{\substack{d|s}}\mu\left(\frac{s}{d}\right)q^d.$$
\end{proof}

For studying the more general case of codes $S_{n,s,k}(\eta,\rho,\FF)$, with $\eta\neq 0$, we need some more sophisticated generalizations of Lemma \ref{lem:ys_cardinality}. We start with the following simple existence result.

\begin{lemma}\label{lem:condition_MSRD}
    If $T$ is a proper subgroup of $(\K')^*$ and the $F_i$'s of the $s$-admissible tuple $\FF$ are chosen such that $\N_{\F_{q}/\K'}((-1)^sF_{i,0})\in T$, then we can always find an element $\eta\ne 0$ such that 
    $S_{n,s,k}(\eta,\rho,\FF)$ is MSRD.
\end{lemma}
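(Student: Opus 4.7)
The plan is to apply Theorem~\ref{th:finitenewMSRD}, which guarantees that $S_{n,s,k}(\eta,\rho,\FF)$ is MSRD provided that, for every tuple of nonnegative integers $j_1,\ldots,j_t$ with $\sum_i j_i = k$, the quantity $\N_{\F_{q^n}/\K'}(\eta)\cdot \N_{\F_q/\K'}\bigl((-1)^{sk(n-1)}\prod_{i}F_{i,0}^{j_i}\bigr)$ differs from $1$. My strategy is to show that the set of values of $\N_{\F_{q^n}/\K'}(\eta)$ forbidden by these conditions lies inside a single proper coset of a proper subgroup of $(\K')^*$, and then to use surjectivity of the finite-field norm map to choose $\eta$ outside of it.

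The first step would be to absorb the sign $(-1)^s$ into the factors $F_{i,0}$. Since $\sum_i j_i = k$, a straightforward sign bookkeeping yields
\[(-1)^{sk(n-1)}\prod_{i=1}^{t}F_{i,0}^{j_i} \;=\; (-1)^{skn}\prod_{i=1}^{t}\bigl((-1)^sF_{i,0}\bigr)^{j_i}.\]
Applying the group homomorphism $\N_{\F_q/\K'}$ and using the hypothesis together with the fact that $T$ is a subgroup, the product $\prod_i \N_{\F_q/\K'}((-1)^sF_{i,0})^{j_i}$ lies in $T$. Hence every forbidden value of $\N_{\F_{q^n}/\K'}(\eta)$ belongs to the single coset $\epsilon T$, where $\epsilon := \N_{\F_q/\K'}(-1)^{skn}\in\{\pm 1\}$.

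Because $T$ is a proper subgroup of $(\K')^*$, the coset $\epsilon T$ has the same cardinality as $T$ and is therefore itself a proper subset of $(\K')^*$. Invoking the classical surjectivity of the norm map $\N_{\F_{q^n}/\K'}\colon \F_{q^n}^*\twoheadrightarrow (\K')^*$ for finite cyclic extensions, I can choose $\eta\in\F_{q^n}^*$ with $\N_{\F_{q^n}/\K'}(\eta)\in (\K')^*\setminus \epsilon T$. Such an $\eta$ simultaneously avoids every forbidden value, and Theorem~\ref{th:finitenewMSRD} then guarantees that $S_{n,s,k}(\eta,\rho,\FF)$ is MSRD.

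The only subtle point I anticipate is the sign bookkeeping that collapses $(-1)^{sk(n-1)}\prod_i F_{i,0}^{j_i}$ into $(-1)^{skn}\prod_i((-1)^sF_{i,0})^{j_i}$, which is precisely what allows the hypothesis on $(-1)^sF_{i,0}$ to be applied directly; after that, everything reduces to the basic group-theoretic facts that a proper subgroup cannot fill the entire multiplicative group and that the finite-field norm map is surjective.
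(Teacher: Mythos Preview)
Your proof is correct and follows essentially the same approach as the paper: both rewrite the sign as $(-1)^{sk(n-1)}\prod_i F_{i,0}^{j_i} = (-1)^{skn}\prod_i((-1)^sF_{i,0})^{j_i}$, observe that after taking $\N_{\F_q/\K'}$ the product part lands in $T$ by hypothesis, and then conclude that it suffices to choose $\eta$ with $\N_{\F_{q^n}/\K'}(\eta)$ outside the single coset $(-1)^{skn[\F_q:\K']}T$ (your $\epsilon T$ is exactly this coset, since $\N_{\F_q/\K'}(-1)=(-1)^{[\F_q:\K']}$). The only cosmetic difference is that you make the surjectivity of the finite-field norm explicit, whereas the paper leaves it implicit.
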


\begin{proof}
    Condition in Eq \eqref{eq:MSRDnormconditionfinite} can be rewritten as $$
    (-1)^{skn[\fq:\K']}\N_{\F_q/\K'}\left(\prod_{i=1}^t{((-1)^sF_{i,0})^{j_i}}\right)\neq \N_{\F_{q^n}/\K'}(\eta).$$ for all nonnegative integers $j_1,\ldots,j_t$ satisfying $j_1+\cdots+j_t=k$. 
    Hence, if the polynomials $F_i$'s are such that $\N_{\fq/\K'}((-1)^sF_{i,0})\in T$, then also 
    $$\N_{\F_q/\K'}\left(\prod_{i=1}^t{((-1)^sF_{i,0})^{j_i}}\right)=\prod_{i=1}^t\N_{\F_q/\K'}\left({(-1)^sF_{i,0}}\right)^{j_i}\in T,$$
    and we can simply take any $\eta\neq 0$ such that
    $$\N_{\fqn/\K'}(\eta)\notin (-1)^{skn[\fq:\K']}T.$$
\end{proof}

From now on, let us write $\K'=\F_{q_0}$ with $q=q_0^r$, and consider a proper subgroup $T$ of $\F_{q_0}^*$. In view of Lemma \ref{lem:condition_MSRD}, our aim is to find the cardinality of the set
\begin{equation} \label{eq:defXst}
X_{T,s}:=\{F(y)\in X_s\,:\, \N_{\F_q/\F_{q_0}}((-1)^sF(0))\in T\}=\{F(y)\in X_s\,:\, (-1)^sF(0)\in \N_{\F_q/\F_{q_0}}^{-1}(T)\}.
\end{equation}

 We can also derive a formula for the cardinality of the set $X_{T,s}$, which depends on the intersection between $Y_s$ and the preimage of $T$ under the norm map. This is a generalization of Lemma \ref{lem:ys_cardinality}(2).

\begin{lemma}\label{lem:XTs}
    Let $T$ be a subgroup of $\F_{q_0}^*$. Then
    $$|X_{T,s}|=\frac{|\{\alpha \in Y_s\,:\, \N_{\F_{q^s}/\F_{q_0}}(\alpha) \in T \}|}{s}=\frac{|Y_s\cap\N_{\F_{q^s}/\F_{q_0}}^{-1}(T)|}{s}.$$
\end{lemma}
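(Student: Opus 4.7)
The plan is to exploit the classical bijection between monic irreducible polynomials of degree $s$ over $\F_q$ and Frobenius orbits of size $s$ on $Y_s$, and then to translate the membership condition defining $X_{T,s}$ into a norm condition on the roots.

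First I would recall the following standard facts. Every $F \in X_s$ splits over $\F_{q^s}$ as $F(y) = \prod_{i=0}^{s-1}(y-\alpha^{q^i})$, where $\alpha \in \F_{q^s}$ is any root of $F$; moreover, since $F$ is irreducible of degree $s$, the root $\alpha$ lies in $Y_s$ (it has minimal polynomial of degree exactly $s$), and the $s$ elements $\alpha, \alpha^q, \ldots, \alpha^{q^{s-1}}$ are pairwise distinct. Conversely, every $\alpha \in Y_s$ has a minimal polynomial over $\F_q$ of degree exactly $s$, which belongs to $X_s$. Thus the map sending $F \in X_s$ to the set of its roots yields a bijection between $X_s$ and the set of Frobenius orbits of $Y_s$, each of size $s$.

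Next I would rewrite the defining condition of $X_{T,s}$ in terms of a root $\alpha$ of $F$. Reading off the constant coefficient of $F$, one has
\[
(-1)^s F(0) \;=\; \prod_{i=0}^{s-1}\alpha^{q^i} \;=\; \N_{\F_{q^s}/\F_q}(\alpha).
\]
Combining this with transitivity of the norm, namely $\N_{\F_q/\F_{q_0}}\circ \N_{\F_{q^s}/\F_q} = \N_{\F_{q^s}/\F_{q_0}}$, the condition $(-1)^s F(0) \in \N_{\F_q/\F_{q_0}}^{-1}(T)$ becomes $\N_{\F_{q^s}/\F_{q_0}}(\alpha) \in T$. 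Since $\N_{\F_{q^s}/\F_{q_0}}$ is invariant under the Frobenius, this condition does not depend on the chosen root of $F$.

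Finally I would assemble the count: the bijection above identifies $X_{T,s}$ with the set of Frobenius orbits in $Y_s \cap \N_{\F_{q^s}/\F_{q_0}}^{-1}(T)$, and each such orbit has cardinality $s$. Therefore $s\,|X_{T,s}| = |Y_s \cap \N_{\F_{q^s}/\F_{q_0}}^{-1}(T)|$, which is the desired formula. I do not expect a real obstacle here; the only subtlety is verifying that the set $Y_s \cap \N_{\F_{q^s}/\F_{q_0}}^{-1}(T)$ is Frobenius-stable, which is immediate from Frobenius-invariance of the norm and the fact that $Y_s$ is itself Frobenius-stable.
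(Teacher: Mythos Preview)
Your argument is correct and essentially identical to the paper's: both use the $s$-to-$1$ correspondence between $Y_s$ and $X_s$ via roots/minimal polynomials, identify $(-1)^sF(0)$ with $\N_{\F_{q^s}/\F_q}(\alpha)$, and then pass to $\N_{\F_{q^s}/\F_{q_0}}$ by transitivity. You spell out the transitivity of the norm and the Frobenius-invariance more explicitly than the paper does, but the approach is the same.
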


\begin{proof}
    By definition of $X_s$, we have that $F(y)\in X_s$ if and only if all its roots belong to $Y_s$. Moreover, if one root of $F(y)$ belongs to $Y_s$, then all the roots do. Thus, the map 
    $$\begin{array}{rcl}
         Y_s& \longrightarrow & X_s \\
         \beta & \longmapsto & \mu_\beta(y),
     \end{array}$$
     where $\mu_\beta$ denotes the minimal polynomial of $\beta$ over $\Fq$,
     is an $s$-to-$1$ surjective map. Furthermore, for each $F(y)\in X_s$, we have that $F(0)=(-1)^s \N_{\F_{q^s}/\fq}(\alpha)$, where $\alpha$ is a root of $F(y)$. This concludes the proof.
\end{proof}

The following result allows us to compute the cardinality of the intersection between $Y_s$ and the preimage of $T$ under the norm map, which implies a more explicit formula for the cardinality of $X_{T,s}$, in view of Lemma \ref{lem:XTs}.

 \begin{lemma}\label{lem:preimage_norm}
     Let $T$ be a subgroup of $\F_{q_0}^*$. Then $$|\N_{\F_{q^s}/\F_{q_0}}^{-1}(T)\cap Y_s|=\frac{|T|}{(q_0-1)}\sum_{\substack{d|s}}\mu\left(\frac{s}{d}\right)(q^d-1)\gcd\left(\frac{s}{d},\frac{q_0-1}{|T|}\right).$$ 
 \end{lemma}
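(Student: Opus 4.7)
The plan is to reduce the problem to a counting problem in the cyclic group $\F_{q_0}^*$ via the transitivity of the norm, and then apply M\"obius inversion. First, for every divisor $d$ of $s$, I would consider the set
$$B_d := \{\alpha \in \F_{q^d}^* : \N_{\F_{q^s}/\F_{q_0}}(\alpha) \in T\}.$$
Since $\F_{q_0}\subseteq \F_{q^d}\subseteq \F_{q^s}$ and $[\F_{q^s}:\F_{q^d}]=s/d$, transitivity of the norm together with the fact that $\N_{\F_{q^s}/\F_{q^d}}(\alpha)=\alpha^{s/d}$ for $\alpha \in \F_{q^d}$ gives
$$\N_{\F_{q^s}/\F_{q_0}}(\alpha)=\N_{\F_{q^d}/\F_{q_0}}(\alpha)^{s/d}.$$
Since $\N_{\F_{q^d}/\F_{q_0}}:\F_{q^d}^*\to \F_{q_0}^*$ is surjective with fibers of size $(q^d-1)/(q_0-1)$, this reduces the computation of $|B_d|$ to counting the elements $N\in \F_{q_0}^*$ with $N^{s/d}\in T$.

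The second step, which I regard as the main computational obstacle, is to evaluate $|\{N\in \F_{q_0}^* : N^e\in T\}|$ for a positive integer $e$. Fixing a generator $g$ of $\F_{q_0}^*$ and writing $T=\langle g^{(q_0-1)/|T|}\rangle$, an element $N=g^a$ satisfies $N^e\in T$ if and only if $(q_0-1)/|T|$ divides $ae$, equivalently $(q_0-1)/(|T|\gcd(e,(q_0-1)/|T|))$ divides $a$. Counting such $a\in \{0,\dots,q_0-2\}$ yields
$$|\{N\in \F_{q_0}^* : N^e\in T\}|=|T|\,\gcd\!\left(e,\tfrac{q_0-1}{|T|}\right).$$
Combining with the first step (and using $e=s/d$), I obtain
$$|B_d|=\frac{(q^d-1)|T|}{q_0-1}\,\gcd\!\left(\tfrac{s}{d},\tfrac{q_0-1}{|T|}\right).$$

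To conclude, I would decompose $\F_{q^d}$ as the disjoint union $\bigsqcup_{e\mid d}Y_e$, where $Y_e$ consists of elements of $\F_{q^d}$ whose minimal polynomial over $\F_q$ has degree exactly $e$. Intersecting with $\N_{\F_{q^s}/\F_{q_0}}^{-1}(T)$ and using that $0 \notin \N_{\F_{q^s}/\F_{q_0}}^{-1}(T)$ (since $T\subseteq \F_{q_0}^*$), this gives
$$|B_d|=\sum_{e\mid d}|Y_e\cap \N_{\F_{q^s}/\F_{q_0}}^{-1}(T)|.$$
Applying M\"obius inversion to this identity yields
$$|Y_s\cap \N_{\F_{q^s}/\F_{q_0}}^{-1}(T)|=\sum_{d\mid s}\mu\!\left(\tfrac{s}{d}\right)|B_d|,$$
and substituting the expression for $|B_d|$ produces exactly the claimed formula. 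As a sanity check, specializing $T=\F_{q_0}^*$ recovers $|Y_s|=\sum_{d\mid s}\mu(s/d)q^d$ (for $s>1$, using $\sum_{d\mid s}\mu(s/d)=0$), consistent with Lemma \ref{lem:ys_cardinality}(1).
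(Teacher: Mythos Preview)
Your proposal is correct and follows essentially the same approach as the paper: both compute $|B_d|=|\{\alpha\in\F_{q^d}^*:\N_{\F_{q^s}/\F_{q_0}}(\alpha)\in T\}|$ and then apply M\"obius inversion over the decomposition $\F_{q^d}^*=\bigsqcup_{e\mid d}(Y_e\setminus\{0\})$. The only cosmetic difference is that the paper evaluates $|B_d|$ directly by characterizing $T$ as the set of $|T|$-th roots of unity in $\F_{q^s}^*$ and computing $\gcd\bigl(\tfrac{(q^d-1)|T|s}{(q_0-1)d},\,q^d-1\bigr)$, whereas you factor through the surjective norm $\N_{\F_{q^d}/\F_{q_0}}$ and count preimages of $\{N\in\F_{q_0}^*:N^{s/d}\in T\}$; both routes yield the same value for $|B_d|$.
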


 \begin{proof}
Observe that $T$ is the unique subgroup of $\F_{q^s}^*$ of order $|T|$, that is, 
 $$T=\{\beta \in \F_{q^s}^*\,:\, \beta^{|T|}=1\}.$$ Let $\alpha \in \F_{q^d}$ for some $d$ dividing $s$. Then $\N_{\F_{q^s}/\F_{q_0}}(\alpha)=(\N_{\F_{q^d}/\F_{q_0}}(\alpha))^{\frac{s}{d}}=\alpha^{\frac{(q^d-1)s}{(q_0-1)d}}$ belongs to $T$ if and only if $\alpha^{\frac{(q^d-1)s|T|}{(q_0-1)d}}=1$. In particular
 \begin{align*} |\{\alpha \in \F_{q^d}^*\,:\, \N_{\F_{q^s}/\F_{q_0}}(\alpha)\in T \}|&=|\{\alpha \in \F_{q^d}^*\,:\,\alpha^{\frac{(q^d-1)|T|s}{(q_0-1)d}}=1\}|\\&=\gcd\left(\frac{(q^d-1)|T|s}{(q_0-1)d},q^d-1\right)\\&=\frac{(q^d-1)|T|}{(q_0-1)}\gcd\left(\frac{s}{d},\frac{q_0-1}{|T|}\right).
 \end{align*}
 Now, we can write
 \begin{align*}\frac{(q^s-1)|T|}{(q_0-1)}&=|\N_{\F_{q^s}/\F_{q_0}}^{-1}(T)| \\
 &=|\{\alpha \in \F_{q^s}^*\,:\,  \N_{\F_{q^s}/\F_{q_0}}(\alpha)\in T\}| \\
 &=\sum_{d|s}|\{\alpha \in Y_d\,:\,   \N_{\F_{q^s}/\F_{q_0}}(\alpha)\in T\}| \\
 &=\sum_{d|s}|\N_{\F_{q^s}/\F_{q_0}}^{-1}(T)\cap Y_d|.
 \end{align*}
 Applying M\"obius inversion formula, we get the desired result.
 \end{proof}

We can now state the main result about the maximum possible length of an MSRD code in the family $S_{n,s,k}(\eta,\rho,\FF)$, whose proof combines together Lemmas \ref{lem:ys_cardinality}, \ref{lem:XTs} and \ref{lem:preimage_norm}.

\begin{theorem}\label{thm:length_S} Let $T$ be a proper subgroup of $\F_{q_0}^*$. Then, there exist $\eta \ne 0$ and an $s$-admissible $\FF=(F_1,\ldots,F_t)$ such that   $S_{n,s,k}(\eta,\rho,\FF)$ is MSRD for each 
    $$t\leq \frac{|T|}{s(q_0-1)}\sum_{\substack{d|s}}\mu\left(\frac{s}{d}\right)(q^d-1)\gcd\left(\frac{s}{d},\frac{q_0-1}{|T|}\right).$$
    Morever, when $\gcd(s,\frac{q_0-1}{|T|})=1$  there exist $\eta \ne 0$ and an $s$-admissible $\FF=(F_1,\ldots,F_t)$  such that   $S_{n,s,k}(\eta,\rho,\FF)$ is MSRD and
    $$t=\frac{|T||Y_s|}{s(q_0-1)}=\frac{|T|}{s(q_0-1)}\sum_{\substack{d|s}}\mu\left(\frac{s}{d}\right)q^d.$$
\end{theorem}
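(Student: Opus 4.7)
The plan is to combine the three preceding lemmas in sequence. By Theorem~\ref{th:finitenewMSRD}, the code $S_{n,s,k}(\eta,\rho,\FF)$ is MSRD exactly when the norm condition \eqref{eq:MSRDnormconditionfinite} holds for every tuple of nonnegative integers $j_1,\ldots,j_t$ summing to $k$. Lemma~\ref{lem:condition_MSRD} turns this into a purely combinatorial task: it suffices to pick the irreducible polynomials $F_1,\ldots,F_t$ from the set $X_{T,s}$ defined in \eqref{eq:defXst}, for some proper subgroup $T\lneq \F_{q_0}^*$. Indeed, in that case each product $\prod_i((-1)^sF_{i,0})^{j_i}$ lands in $\N_{\fq/\F_{q_0}}^{-1}(T)$ under the $\F_{q_0}$-norm, and since $T$ is proper there is room to choose $\eta$ whose norm avoids the coset $(-1)^{skn[\fq:\F_{q_0}]}T$. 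Moreover, since $T\subseteq \F_{q_0}^*$ is proper, $0\notin T$, so automatically $F_i(y)\neq y$ and any $t$-subset of $X_{T,s}$ is an $s$-admissible tuple.

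Therefore, the maximum number of blocks $t$ attainable by this route is $|X_{T,s}|$. To compute it, I would first invoke Lemma~\ref{lem:XTs} to rewrite
\[
|X_{T,s}|=\frac{|Y_s\cap \N_{\F_{q^s}/\F_{q_0}}^{-1}(T)|}{s},
\]
and then substitute the explicit expression coming from Lemma~\ref{lem:preimage_norm}. This yields directly the first inequality of the theorem.

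For the ``moreover'' part, I observe that if $\gcd\bigl(s,(q_0-1)/|T|\bigr)=1$, then for every divisor $d$ of $s$ the integer $s/d$ still divides $s$, and hence $\gcd(s/d,(q_0-1)/|T|)=1$ as well. Thus every gcd factor appearing in Lemma~\ref{lem:preimage_norm} collapses to $1$, reducing the bound to $\frac{|T|}{s(q_0-1)}\sum_{d\mid s}\mu(s/d)(q^d-1)$. For $s>1$, the classical M\"obius identity $\sum_{d\mid s}\mu(s/d)=0$ allows me to drop the constant $-1$ and obtain $\frac{|T|}{s(q_0-1)}\sum_{d\mid s}\mu(s/d)q^d=\frac{|T|\,|Y_s|}{s(q_0-1)}$, as claimed.

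There is no serious obstacle here: the proof amounts to a careful bookkeeping of cardinalities, stacking the three lemmas in the right order. The only mildly subtle point is verifying that the value $|X_{T,s}|$ is genuinely realized rather than being only an upper bound, but this is exactly what Lemma~\ref{lem:condition_MSRD} delivers, by producing a working $\eta$ once the $F_i$'s have been chosen in $X_{T,s}$.
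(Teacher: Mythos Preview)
Your proposal is correct and follows essentially the same route as the paper: invoke Lemma~\ref{lem:condition_MSRD} to reduce to counting $|X_{T,s}|$, then combine Lemmas~\ref{lem:XTs} and~\ref{lem:preimage_norm} to evaluate it, and finish the ``moreover'' part with the M\"obius identity $\sum_{d\mid s}\mu(s/d)=0$ for $s>1$. The only minor addition in your write-up is the explicit check that $F_i(y)\neq y$ (since $0\notin T$), which the paper leaves implicit.
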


\begin{proof}
By Lemma \ref{lem:condition_MSRD}, we can construct an $s$-admissible tuple $\FF$ in which the $F_i$'s satisfy \\ $\N_{\F_{q}/\F_{q_0}}((-1)^sF_{i,0}) \in T$, and an element $\eta$ such that $S_{n,s,k}(\eta,\rho,\FF)$ is MSRD. The $s$-admissible tuple must be taken from the set $X_{T,s}$, which, combining Lemma \ref{lem:XTs} and Lemma \ref{lem:preimage_norm}, has cardinality 
$$\frac{|T|}{s(q_0-1)}\sum_{\substack{d|s}}\mu\left(\frac{s}{d}\right)(q^d-1)\gcd\left(\frac{s}{d},\frac{q_0-1}{|T|}\right).$$

Moreover, if $\gcd(s,\frac{q_0-1}{|T|})=1$, then also 
$\gcd(\frac{s}{d},\frac{q_0-1}{|T|})=1$
for every divisor $d$ of $s$. The second part of the statement then follows from Lemma \ref{lem:ys_cardinality} and the fact that
$$\sum_{d|s}\mu\left(\frac{s}{d}\right)=0$$
whenever $s>1$.
\end{proof}

We now consider the family of codes  $D_{n,s,k}(\gamma,\FF)$. By Theorem \ref{th:finiteextendtrombmrd}, in order to get an MSRD code we need an $s$-admissible {tuple} $\FF=(F_1,\ldots,F_t)$ for which there exists an element $\gamma$ such that
$(-1)^{sk(n-1)}\left(\prod_{i=1}^t{F_{i,0}^{j_i}}\right)\N_{\fqn/\fq}(\gamma)$ is not a square in $\fq$ for all non negative integers $j_1,\ldots,j_t$ satisfying $j_1+\cdots+j_t=k$.
The following result is just a rewriting of the last condition. 

\begin{lemma}\label{lem:condition_MSRD_second}
    If the $F_i$'s of the $s$-admissible tuple $\FF$ are chosen such that $(-1)^sF_{i,0}$ is a square in $\fq$, then, for every  $\gamma\ne 0$ such that $(-1)^{skn}\N_{\fqn/\fq}(\gamma)$ is not a square in $\fq$,  the code
    $D_{n,s,k}(\gamma,\FF)$ is MSRD.
\end{lemma}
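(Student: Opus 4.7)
The plan is to verify the hypothesis of Theorem \ref{th:finiteextendtrombmrd}, i.e., to show that under the given assumptions on $\FF$ and $\gamma$, the expression
\[
(-1)^{sk(n-1)}\left(\prod_{i=1}^t F_{i,0}^{j_i}\right)\N_{\fqn/\fq}(\gamma)
\]
is a non-square in $\fq$ for every choice of non-negative integers $j_1,\ldots,j_t$ with $j_1+\cdots+j_t=k$. The whole argument will reduce to a simple sign/square manipulation, so I do not expect any real obstacle.

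First, I would rewrite the exponent of $-1$ as $sk(n-1)=skn-sk$, which gives
\[
(-1)^{sk(n-1)}\prod_{i=1}^t F_{i,0}^{j_i}=(-1)^{skn}\cdot(-1)^{sk}\prod_{i=1}^t F_{i,0}^{j_i}.
\]
Then, using the hypothesis $j_1+\cdots+j_t=k$, I split the factor $(-1)^{sk}$ across the product as $(-1)^{sk}=\prod_{i=1}^t(-1)^{sj_i}$, obtaining
\[
(-1)^{sk(n-1)}\prod_{i=1}^t F_{i,0}^{j_i}=(-1)^{skn}\prod_{i=1}^t\bigl((-1)^s F_{i,0}\bigr)^{j_i}.
\]

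Next, by assumption each $(-1)^s F_{i,0}$ is a square in $\fq$, so the product $\prod_{i=1}^t\bigl((-1)^s F_{i,0}\bigr)^{j_i}$ is a square in $\fq$ as well. Multiplying a non-square of $\fq$ by a square of $\fq$ still yields a non-square; since by hypothesis $(-1)^{skn}\N_{\fqn/\fq}(\gamma)$ is a non-square in $\fq$, we conclude that
\[
(-1)^{skn}\prod_{i=1}^t\bigl((-1)^s F_{i,0}\bigr)^{j_i}\,\N_{\fqn/\fq}(\gamma)=(-1)^{sk(n-1)}\left(\prod_{i=1}^t F_{i,0}^{j_i}\right)\N_{\fqn/\fq}(\gamma)
\]
is a non-square in $\fq$. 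The required hypothesis of Theorem \ref{th:finiteextendtrombmrd} is therefore satisfied for every admissible $(j_1,\ldots,j_t)$, so $D_{n,s,k}(\gamma,\FF)$ is MSRD.
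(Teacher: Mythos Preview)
Your proof is correct and follows essentially the same approach as the paper: both rewrite $(-1)^{sk(n-1)}\prod_i F_{i,0}^{j_i}$ as $(-1)^{skn}\prod_i\bigl((-1)^s F_{i,0}\bigr)^{j_i}$, observe that the product is a square by hypothesis, and conclude via Theorem~\ref{th:finiteextendtrombmrd}.
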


\begin{proof}
    By Theorem \ref{th:finiteextendtrombmrd}, we need to show that 
    $(-1)^{sk(n-1)}\left(\prod_{i=1}^t{F_{i,0}^{j_i}}\right)\N_{\fqn/\fq}(\gamma)$ is not a square in $\fq$ for every $j_1,\ldots,j_t$ satisfying $j_1+\cdots+j_t=k$.
    This quantity can be rewritten as
    $$(-1)^{sk(n-1)}\left(\prod_{i=1}^t{F_{i,0}^{j_i}}\right)\N_{\fqn/\fq}(\gamma)=(-1)^{skn}\left(\prod_{i=1}^t((-1)^sF_{i,0})^{j_i}\right)\N_{\fqn/\fq}(\gamma).$$
    Since $(-1)^sF_{i,0}$ is a square in $\Fq$ for each $i \in \{1,\ldots,t\}$, then also
    $$\prod_{i=1}^t((-1)^sF_{i,0})^{j_i}$$
    is a square in $\Fq$. 
    Thus, by our assumption on $\gamma$, we obtain that
    $$(-1)^{sk(n-1)}\left(\prod_{i=1}^t{F_{i,0}^{j_i}}\right)\N_{\fqn/\fq}(\gamma)$$
    is never a square in $\fq$.
\end{proof}

Since the set of nonzero squares in $\Fq$ is a subgroup of $\Fq^*$, we can deduce the following result.

\begin{theorem}\label{thm:length_D}
There exist $\gamma \ne 0$ and an $s$-admissible $\FF=(F_1,\ldots,F_t)$ such that   $D_{n,s,k}(\gamma,\FF)$ is MSRD for every  
    $$ t\leq \frac{1}{2s}\sum_{\substack{d|s}}\mu\left(\frac{s}{d}\right)(q^d-1)\gcd\left(2,\frac{s}{d}\right).$$
        Moreover, when $s$ is odd,  there exist $\gamma \ne 0$ and an $s$-admissible $\FF=(F_1,\ldots,F_t)$ such that   $D_{n,s,k}(\gamma,\FF)$ is MSRD with 
    $$t=\frac{1}{2s}\sum_{\substack{d|s}}\mu\left(\frac{s}{d}\right)q^d.$$
\end{theorem}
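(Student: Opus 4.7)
The plan is to invoke \Cref{lem:condition_MSRD_second} as the main reduction step, and then perform a direct counting argument via \Cref{lem:XTs} and \Cref{lem:preimage_norm}, applied with $T$ equal to the subgroup of nonzero squares of $\fq^{*}$, which has order $(q-1)/2$.

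First I would handle the existence of $\gamma$. Since $q$ is odd (the standing hypothesis of \Cref{th:finiteextendtrombmrd}) and the norm map $\N_{\fqn/\fq}\colon \fqn^{*}\to \fq^{*}$ is surjective, one can find $\gamma\in \fqn^{*}$ whose norm equals any prescribed element of $\fq^{*}$. Choosing $\N_{\fqn/\fq}(\gamma)$ so that $(-1)^{skn}\N_{\fqn/\fq}(\gamma)$ lies in the non-square coset of $(\fq^{*})^2$ produces the element required by \Cref{lem:condition_MSRD_second}.

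To produce the $s$-admissible tuple, by \Cref{lem:condition_MSRD_second} it suffices to pick $F_1,\dots,F_t$ from the set
$$X_{T,s}=\{F(y)\in X_s\,:\, (-1)^sF(0)\in T\},$$
where $T=(\fq^{*})^2$. Since $q_0=q$ here, the norm from $\fq$ to $\fq_0$ is trivial, so this is exactly the instance of \eqref{eq:defXst} with $q_0=q$ and $|T|=(q-1)/2$. Combining \Cref{lem:XTs} with \Cref{lem:preimage_norm} then yields
$$|X_{T,s}| \;=\; \frac{1}{s}\cdot\frac{(q-1)/2}{q-1}\sum_{d\mid s}\mu(s/d)(q^d-1)\gcd(s/d,2) \;=\; \frac{1}{2s}\sum_{d\mid s}\mu(s/d)(q^d-1)\gcd(s/d,2),$$
establishing the first bound: any $t\le |X_{T,s}|$ admits an admissible tuple of the required form.

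For the second assertion, when $s$ is odd every divisor $s/d$ of $s$ is odd, so $\gcd(s/d,2)=1$ for all $d\mid s$. Splitting the resulting sum gives
$$|X_{T,s}|=\frac{1}{2s}\sum_{d\mid s}\mu(s/d)(q^d-1)=\frac{1}{2s}\sum_{d\mid s}\mu(s/d)q^d \;-\; \frac{1}{2s}\sum_{d\mid s}\mu(s/d),$$
and for $s>1$ the Möbius identity $\sum_{d\mid s}\mu(d)=0$ kills the correction term, producing the claimed expression. The proof has no substantial obstacle: all the real content is encapsulated in \Cref{lem:condition_MSRD_second}, the surjectivity of the norm in finite extensions, and the counting machinery developed for \Cref{thm:length_S}. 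The only delicate point is confirming that the Möbius cancellation applies precisely in the regime $s>1$, so that the formulas in the two statements match exactly.
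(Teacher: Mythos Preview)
Your proposal is correct and follows essentially the same approach as the paper: reduce via \Cref{lem:condition_MSRD_second} to counting irreducibles with $(-1)^sF(0)$ a square, then apply \Cref{lem:XTs} and \Cref{lem:preimage_norm} with $q_0=q$ and $|T|=(q-1)/2$, and finish the odd-$s$ case with the M\"obius identity. The paper introduces the ad hoc notation $Z_{T,s}$ for this set, but as you observe it coincides with $X_{T,s}$ once $q_0=q$; your explicit remark on the existence of $\gamma$ via surjectivity of the norm is a detail the paper leaves implicit.
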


\begin{proof}
    Let $T$ be the subgroup of squares in $\F_{q^s}^s$. Any $s$-admissible tuple $\FF$ satisfying the hypothesis of Lemma \ref{lem:condition_MSRD_second} is made by elements in the set
    $$Z_{T,s}=\{F(y)\in X_s\,:\, (-1)^sF(0)\in T\}=\{F(y)\in X_s\,:\, (-1)^sF(0)\in T\}.$$
    By Lemma \ref{lem:XTs} (with $q_0=q$) the cardinality of $Z_{T,s}$ is equal to
$$\frac{|N_{\F_{q^s}/\F_{q}}^{-1}(T)\cap Y_s|}{s}.$$
   Using Lemma \ref{lem:preimage_norm} (with $q_0=q)$ and the fact that $|T|=\frac{q-1}{2}$, this is in turn equal to
    $$\frac{1}{2s}\sum_{\substack{d|s}}\mu\left(\frac{s}{d}\right)(q^d-1)\gcd\left(\frac{s}{d},2\right),$$
    concluding the first part of the proof. 
    The second part follows analogously to the second part of the proof of Theorem \ref{thm:length_S}, using the fact that 
    $$\sum_{d|s}\mu\left(\frac{s}{d}\right)=0$$
whenever $s>1$.

\end{proof}

\subsection{Two new families of MDS codes in the Hamming metric}

We dedicate a section to specializing our findings in the special case $n=1$, because this means working with the Hamming metric, and the results are of high relevance for classical coding theory. For this reason, we try to keep this section as self-contained as possible, so that the interested reader can read it without knowledge of prior notation.

\begin{remark}\label{rem:evaluation}
    When $n=1$, for a given $s$-admissible tuple $\FF=(F_1,\ldots,F_t)$, the quotient ring $R/RH_{\FF}(x^n)$ splits via Chinese Remainder Theorem as 
    $$R/RH_{\FF}(x^n)\cong \bigoplus_{i=1}^t\dfrac{\Fq[x]}{RF_i},$$
    and hence, the $i$-th coordinate of the image of $\overline{a}\in R/RH_{\FF}(x^n)$ via this isomorphism coincides with the remainder modulo $F_i$. Since the $F_i$'s  are irreducible of degree $s$, we further get
    $$\bigoplus_{i=1}^t\dfrac{\Fq[x]}{RF_i}\cong(\F_{q^s})^t,$$
    and the $i$-th coordinate is then the evaluation of $a$ in any root of $F_i$.
\end{remark}

For the remainder of this section, we fix the following setting. Let $q,q_0$ be two  powers of the same prime such that $q=q_0^r$, and let $s\ge 1$ be a positive integer. Let $A\subseteq\F_{q^s}$, and define the evaluation map
$$\begin{array}{rccl}
\mathrm{ev}_A:&\Fq[x]& \longrightarrow & \F_{q^s}^{|A|}\\
& a(x) & \longmapsto & (a(\alpha))_{\alpha \in A}.
\end{array}$$

For a given multiplicative subgroup $T$ of $\F_{q_0}^*$, define the set $$X_{T,s}=\left\{ F(y) \in \fq[y]\,:\, F \mbox{ is irreducible, } \deg F=s, \, \N_{\Fq/\F_{q_0}}((-1)^sF(0))\in T \right\}, $$
as in Eq. \eqref{eq:defXst}.
For each element in $F(y)\in X_{T,s}$, choose one root $\alpha\in \F_{q^s}$ of $F(y)$ and denote the corresponding set by $A_{T,s}$. 
\begin{example}\label{exa:Hamming_q=s=3} Let us fix $q_0=q=3$ and clearly $r=1$,  and let $s=3$. The set of all irreducible polynomials of degree $3$ over $\F_3$ is
\begin{align*} X_{\F_3^*,3}=\{& y^3 + 2y + 1,
y^3 + 2y^2 + 1,
y^3 + y^2 + 2y + 1,
y^3 + 2y^2 + y + 1, \\ 
&y^3+y^2+2, y^3+2y+2, y^3+y^2+y+2,y^3+2y^2+2y+2\}.
\end{align*}
If we represent the field $\F_{3^3}=\F_3(\xi)=\{0\}\cup\{\xi^i\,:\, -12\le i \le 13\}$, where $\xi^3=\xi+2$, then we can choose the set 
$$A_{\F_3^*,3}=\{\xi,\xi^{-1}, \xi^{5},\xi^{-5},\xi^{4},\xi^{-4},\xi^{2},\xi^{-2} \}.$$
If we instead consider the trivial subgroup $T=\{1\}\subset \F_3^*$, we have
$$ X_{\{1\},3}=\{ y^3 + 2y + 1,
y^3 + 2y^2 + 1,
y^3 + y^2 + 2y + 1,
y^3 + 2y^2 + y + 1\},$$
and
$$A_{\{1\},3}=\{\xi,\xi^{-1}, \xi^{5},\xi^{-5}\}.$$
        $\hfill \lozenge$
\end{example}

\begin{definition}
    Let $T$ be a multiplicative subgroup of $\F_{q_0}^*$, let $k$ be a positive integer with $1\le k<|A_{T,s}|$, and let $\rho \in \Aut(\F_{q})$ with $\F_q^\rho=\F_{q_0}$.  Let $\eta \in \Fq$ such that
    $$\N_{\Fq/\F_{q_0}}(\eta)\notin (-1)^{skr} T.$$
    We define the evaluation code over $\F_{q^s}$ given by
    $$S_{k,s}(\eta,\rho,T):=\mathrm{ev}_{A_{T,s}}(\{a(x)\in \fq[x]\,:\, \deg (a(x))\le sk, a_{sk}=\eta\rho(a_0)\}).$$ 
\end{definition}

\begin{theorem}\label{thm:twisted_MDS}
    The code $S_{k,s}(\eta,\rho,T)$ is an $\F_{q_0}$-linear MDS code over $\F_{q^s}$  of size $q^{sk}$ and length 
    $$\lvert A_{T,s} \rvert =\frac{|T|}{s(q_0-1)}\sum_{\substack{d|s}}\mu\left(\frac{s}{d}\right)(q^d-1)\gcd\left(\frac{s}{d},\frac{q_0-1}{|T|}\right).$$
\end{theorem}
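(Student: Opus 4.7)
The plan is to deduce the theorem as the specialization $n=1$ of \Cref{th:finitenewMSRD}. When $n=1$, we have $\LL = \K = \Fq$, the automorphism $\sigma$ is the identity, and $R = \Fq[x]$ is the ordinary polynomial ring. An $s$-admissible tuple $\FF = (F_1,\ldots,F_t)$ is then simply a tuple of distinct monic irreducible polynomials of degree $s$ in $\Fq[y]$; each $F_i(x)$ is already irreducible in $R$, so $m = n = 1$ and $\ell = 1$. Taking $\FF$ to be an enumeration of the set $X_{T,s}$, one has $t = |X_{T,s}| = |A_{T,s}|$. Since $m=1$, each $\mathcal{E}(f_i) \cong \F_{q^s}$, and the sum-rank metric on $\bigoplus_{i=1}^t M_1(\F_{q^s}) = \F_{q^s}^t$ coincides with the Hamming metric.

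First I would identify the evaluation map $\mathrm{ev}_{A_{T,s}}$ with the isomorphism $\Phi_{H_\FF}: R/RH_\FF(x) \to \F_{q^s}^t$ of \Cref{th:multiplepolynomialiso}, using \Cref{rem:evaluation} to describe the latter via evaluation at the chosen roots $\alpha_i \in A_{T,s}$. By \Cref{thm:main_isometry}, $\Phi_{H_\FF}$ is an isometry, so the $\FF$-weight on $R/RH_\FF(x)$ pulls back to the Hamming weight on $\F_{q^s}^t$. A direct comparison of the parametrizations then shows $\Phi_{H_\FF}(S_{1,s,k}(\eta,\rho,\FF)) = S_{k,s}(\eta,\rho,T)$, reducing the claim to \Cref{th:finitenewMSRD} with $n=1$.

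Next I would verify the hypothesis of \Cref{th:finitenewMSRD}. Since $\rho \in \Aut(\Fq)$ with $\Fq^\rho = \F_{q_0}$, we have $\K' = \Fq \cap \Fq^\rho = \F_{q_0}$, and the factor $(-1)^{sk(n-1)}$ equals $1$. The MSRD condition thus becomes
\[
\N_{\Fq/\F_{q_0}}(\eta) \cdot \N_{\Fq/\F_{q_0}}\!\Bigl(\prod_{i=1}^t F_{i,0}^{j_i}\Bigr) \neq 1
\]
for every tuple of nonnegative integers $j_1,\ldots,j_t$ with $\sum j_i = k$. The defining property of $X_{T,s}$ gives $\N_{\Fq/\F_{q_0}}((-1)^s F_{i,0}) \in T$; since $q = q_0^r$, this yields $\N_{\Fq/\F_{q_0}}(F_{i,0}) \in (-1)^{sr} T$, and hence $\N_{\Fq/\F_{q_0}}(\prod_i F_{i,0}^{j_i}) \in (-1)^{skr} T$. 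The standing hypothesis $\N_{\Fq/\F_{q_0}}(\eta) \notin (-1)^{skr} T$ therefore implies the required inequality. Applying \Cref{th:finitenewMSRD} gives that $S_{k,s}(\eta,\rho,T)$ is $\F_{q_0}$-linear with minimum distance $t - k + 1$ (hence MDS), of size $q^{sk}$ (the parametrization by $(a_0,\ldots,a_{sk-1}) \in \Fq^{sk}$ is injective because the representative polynomial has degree at most $sk < st = \deg H_\FF(x)$), and of length $|A_{T,s}|$; the explicit formula for the length follows by combining \Cref{lem:XTs} and \Cref{lem:preimage_norm}. The only nonroutine step is the norm rewriting above; the rest is bookkeeping, so there is no substantive obstacle.
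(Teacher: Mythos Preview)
Your proposal is correct. It is precisely the deduction route that the paper itself mentions just before its proof: ``Taking into account Remark~\ref{rem:evaluation}, Theorem~\ref{thm:twisted_MDS} follows from Theorem~\ref{th:finitenewMSRD} and Theorem~\ref{thm:length_S} with $n=1$.'' Your verification of the norm condition (rewriting $\N_{\Fq/\F_{q_0}}(F_{i,0}) \in (-1)^{sr}T$ and then $\N_{\Fq/\F_{q_0}}\bigl(\prod_i F_{i,0}^{j_i}\bigr) \in (-1)^{skr}T$) is exactly what is needed to close that route, and your citation of Lemmas~\ref{lem:XTs} and~\ref{lem:preimage_norm} for the length formula is the content behind the paper's appeal to Theorem~\ref{thm:length_S}.

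The paper, however, deliberately chooses \emph{not} to write out this deduction and instead gives a direct, self-contained argument using only elementary facts about evaluation codes over $\Fq[x]$: injectivity of $\mathrm{ev}_{A_{T,s}}$ via degree counting, and the MDS property via the observation that if $a(x)$ vanishes on $k$ points of $A_{T,s}$ then $a(x) = a_{sk}\prod_{\alpha \in W_a} p_\alpha(x)$, whence $\prod_\alpha p_\alpha(0) = \eta^{-1} a_0/\rho(a_0)$ and taking norms to $\F_{q_0}$ contradicts $\N_{\Fq/\F_{q_0}}(\eta) \notin (-1)^{skr}T$. The paper's approach buys accessibility---a reader interested only in the MDS constructions can read Section~5.2 without unpacking the skew-polynomial isometry machinery---while your approach buys conceptual economy, exhibiting the result as the $n=1$ shadow of the general MSRD construction.
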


Taking into account Remark \ref{rem:evaluation},  Theorem \ref{thm:twisted_MDS} follows from Theorem \ref{th:finitenewMSRD} and Theorem \ref{thm:length_S} with $n=1$.  However, in this case, we give a simplified proof based on the fact that we can see these codes as evaluation codes. Moreover, this proof is easily understandable for the interested reader who may want to read only this section about MDS codes.

\begin{proof}[Proof of Theorem \ref{thm:twisted_MDS}] The length of $S_{k,s}(\eta,\rho,T)$ is clearly $|A_{T,s}|$, which is equal to the cardinality of $X_{T,s}$. Thus, by Theorem \ref{thm:length_S}, we get the claim on the length. In order to show that the size is $q^{ks}$, we observe that this is the cardinality of 
$$\{a(x)\in \fq[x]\,:\, \deg (a(x))\le sk, a_{sk}=\eta\rho(a_0)\}.$$
Thus, it is enough to show that $\mathrm{ev}_{A_{T,s}}$ is injective on $\{a(x)\in \fq[x]\,:\, \deg (a(x))\le sk, a_{sk}=\eta\rho(a_0)\}$. Since $\mathrm{ev}_{A_{T,s}}$ is $\F_{q_0}$-linear, we  need to show that, if $a(x) \in \{a(x)\in \fq[x]\,:\, \deg (a(x))\le sk, a_{sk}=\eta\rho(a_0)\}$ is such that $\mathrm{ev}_{A_{T,s}}(a(x))=0$, then $a(x)=0$. Hence, assume that $a(\alpha)=0$ for every $\alpha \in A_{T,s}$.
Since $a(x) \in \Fq[x]$, this implies that 
$p(x)$ divides $a(x)$ for all $p(x) \in X_{T,s}$. The $p(x)$ are all coprime between themselves, and therefore, we have
$$P(x):=\prod_{p(x)\in X_{T,s}}p(x)$$
divides $a(x)$. The degree of $P(x)$ is $s|A_{T,s}|$, while the degree of $a(x)$ is at most $ks$ with $k<|A_{T,s}|$. Thus, $a(x)$ must be identically $0$.

It remains to show that this code is MDS. This means that we have to show that the minimum Hamming weight of each nonzero codeword is at least $|A_{T,s}|-k+1$. In other words, we must prove that every nonzero $a(x) \in \{a(x)\,:\, \deg (a(x))\le sk, a_{sk}=\eta\rho(a_0)\}$ the cardinality of the set
$$W_a:=\{\alpha \in A_{T,s}\,:\, a(\alpha)=0\}$$
is at most $k-1$.
As before, if $a(\alpha)=0$, then its minimal polynomial $p_{\alpha}(x)\in\Fq[x]$ divides $a(x)$ and has degree $s$. Moreover, there is only one root of $p_\alpha(x)$ belonging to $A_{T,s}$, by definition of $A_{T,s}$. Hence, 
$\prod_{\alpha \in W_a}p_\alpha(x)$
divides $a(x)$ and has degree $|W_a|s$. This implies that 
$|W_a|\le k$, since $\deg(a(x))\le ks$. Assume by contradiction that $|W_a|= k$. Then we must have  $\deg(a(x))=ks$ and
$$a(x)=a_{sk}\prod_{\alpha \in W_a}p_\alpha(x).$$
In particular, it must hold $\eta \neq 0$ and $\prod_{\alpha \in W_a}p_\alpha(0)=a_0/a_{sk}=\eta^{-1}a_0/\rho(a_0)$, and, taking the norm over $\F_{q_0}$ we get
$$\N_{\Fq/\F_{q_0}}\left(\prod_{\alpha \in W_a}p_\alpha(0)\right) =\N_{\Fq/\F_{q_0}}(\eta)^{-1}.$$
The left-hand side belongs to $(-1)^{skr}T$, while the right-hand side not, by the choice of $\eta$, leading to a contradiction.
\end{proof}

\begin{remark}  We now study the maximum possible length of an MDS code of the form $\mathcal S_{k,s}(\eta,\rho,T)$, distinguishing two cases.
\begin{enumerate}
    \item If we choose $\eta=0$, then  the role of $\rho$ is irrelevant and we can simply take $\rho=\mathrm{id}$ and  $\F_{q_0}=\Fq$. In addition, we  can choose any subgroup $T$ of $\Fq^*$, including  $\Fq^*$ itself. The code $\mathcal S_{k,s}(0,\mathrm{id},\Fq^*)$ is of special form. First of all, the set
$X_{\Fq^*,s}$ is simply the set of all irreducible polynomials of degree $s$ in $\Fq[y]$. Hence, the set $A_{\Fq^*,s}$ is a set of representatives of the orbits of size $s$ of $\F_{q^s}$ under the $q$-Frobenius automorphism. The code is then given by
$$S_{k,s}(0,\mathrm{id},\Fq^*):=\mathrm{ev}_{A_{\Fq^*,s}}(\{a(x)\in \fq[x]\,:\, \deg (a(x))< sk\}),$$
and can be considered as the sublinear analogue of Reed-Solomon codes. 
Indeed, it is an $\Fq$-linear MDS code over $\F_{q^s}$, and can be obtained as a subcode of the classical Reed-Solomon codes over $\F_{q^s}$ of dimension $sk$ evaluated on the set $A_{\Fq^*,s}$.

Moreover, its length is 
$$t=\frac{1}{s}\sum_{\substack{d|s}}\mu\left(\frac{s}{d}\right)(q^d-1)=\frac{1}{s}\sum_{\substack{d|s}}\mu\left(\frac{s}{d}\right)q^d.$$
    Also in this case, when $s$ is prime, the length of $\mathcal S_{k,s}(0,\mathrm{id},\Fq^*)$ is 
    $$t=\frac{q^s-q}{s}.$$
 
  \item   Assume that now we choose an element $\eta \neq 0$. Observe that the result in Theorem \ref{thm:twisted_MDS} implies that, under the assumption that $\gcd(s,\frac{q_0-1}{|T|})=1$, we can construct $\F_{q_0}$-linear MDS codes $\mathcal S_{k,s}(\eta,\rho,T)$ over $\F_{q^s}$ of length 
    $$t=\frac{|T|}{s(q_0-1)}\sum_{\substack{d|s}}\mu\left(\frac{s}{d}\right)q^d.$$
    In the particular case where $s$ is a prime number, this reduces to length
    $$t=\frac{|T|}{s(q_0-1)}(q^s-q),$$
    and when $z$ is the smallest prime dividing $q_0-1$ -- i.e. $z=2$ when $q_0$ is odd -- one gets
    $$t=\frac{q^s-q}{sz}.$$
    \end{enumerate}
\end{remark}

\begin{example} Let us consider the same setting as in Exmaple \ref{exa:Hamming_q=s=3}. We have $q=s=3$, $\F_{3^3}=\F_3(\xi)$ with $\xi^3=\xi+2$, and the set
$$A_{\F_3^*,3}=\{\xi,\xi^{-1}, \xi^{5},\xi^{-5},\xi^{4},\xi^{-4},\xi^{2},\xi^{-2} \}.$$
Then, for every $k\in\{1,\ldots,8\}$, the code 
$$\mathcal S_{k,3}(0,\mathrm{id},\F_3^*)=\mathrm{ev}_{A_{\F_3^*,3}}(\{a(x)\in \F_3[x]\,:\, \deg (a(x))< 3k\})$$ is an $\F_3$-linear MDS code over $\F_{3^3}$ of length $t=\frac{3^3-3}{3}=8$, $\F_3$-dimension $3k$ and minimum distance $9-k$.

On the other hand, if we take $T=\{1\}$, $k \in \{1,\ldots,4\}$, and $\eta\neq (-1)^k$, then we ontain the evaluation set 
$$A_{\{1\},3}=\{\xi,\xi^{-1}, \xi^{5},\xi^{-5}\},$$
and the corresponding code 
$$\mathcal S_{k,3}(\eta,\mathrm{id},\{1\})=\mathrm{ev}_{A_{\{1\},3}}(\{a(x)\in \F_3[x]\,:\, \deg (a(x))< 3k\})$$
is an $\F_3$-linear MDS code over $\F_{3^3}$ of length $t=\frac{3^3-3}{2\cdot 3}=4$, $\F_3$-dimension $3k$ and minimum distance $5-k$.
        $\hfill \lozenge$
\end{example}

Now we move to the second class of MDS codes, and assume in our hypotheses that $r=2$, that is, $q=q_0^2$. 

For a given multiplicative subgroup $T$ of $\Fq^*$, define the set
$$Z_{T,s}=\left\{ F(y) \in \fq[y]\,:\, F \mbox{ is irreducible, } \deg F=s, (-1)^sF(0)\in T \right\}. $$
For each element in $F(y)\in Z_{T,s}$, choose one root $\beta\in \F_{q^s}$ of $F(y)$ and denote the corresponding set by $B_{T,s}$.

\begin{example}\label{exa:Hamming_q=9_3=2}
    Let us consider the case $q_0=3$, $q=3^2=9$ and $s=2$. We represent the field $\F_9=\F_3(\alpha)=\{0\}\cup\{\alpha^i\,:\, 0\le i\le 7\}$, where $\alpha^2=\alpha+1$. The set of all irreducible polynomials of degree $2$ over $\F_9$ is
    $Z_{\F_9^*,2}$, which has size $\frac{81-9}{2}=36$. If we take the subgroup $T\subset \F_9^*$ given by the squares, that is, 
    $$T=\{1,\alpha^2,\alpha^4,\alpha^6\},$$
    then the set $Z_{T,2}$ is given by
    \begin{align*}Z_{T,2}=\{&y^2+y+\alpha^2,y^2+\alpha^6y+\alpha^6,
    y^2 + \alpha^3y + 2,
    y^2 + \alpha^7y + 2,
    y^2 + y + \alpha^6,
    y^2 + \alpha^2y + \alpha^2,\\
    &y^2 + \alpha^7y + 1,
    y^2 + \alpha y + 2,
    y^2 + \alpha^5y + 2,
    y^2 + 2y + \alpha^6,
    y^2 + \alpha^6y + \alpha^2,
    y^2 + \alpha y + 1,\\
    &y^2 + \alpha^2y + \alpha^6,
    y^2 + 2y + \alpha^2,
    y^2 + \alpha^5y + 1,
    y^2 + \alpha^3y + 1\}\end{align*}
    If we now represent $\F_{9^2}=\F_9(\xi)=\{0\}\cup\{\xi^i\,:\,-39\le i \le 40\}$, where $\xi^2=\alpha^3\xi+\alpha^5$, or, equivalently, as $\F_3(\xi)$, where $\xi^4=\xi^3+1$, then we can take as $B_{T,2}$ the set
    $$B_{T,2}=\{\xi^2,\xi^{-2}, \xi^4,\xi^{-4}, \xi^{6},\xi^{-6},\xi^8,\xi^{12},\xi^{-12},\xi^{14}, \xi^{-14}, \xi^{16}, \xi^{22}, \xi^{-22}, \xi^{24},\xi^{32}\}.$$
    As illustrated above in the proof of Theorem \ref{thm:length_D}, this set has size 
    $$\frac{1}{2s}\sum_{\substack{d|s}}\mu\left(\frac{s}{d}\right)(q^d-1)\gcd\left(2,\frac{s}{d}\right)=\frac{81-16}{4}=16.$$
\end{example}

\begin{definition}
    Let $q=q_0^2$, let $T$ be the multiplicative subgroup of squares in $\Fq^*$, and let $k$ be a positive integer with $1\le k<|B_{T,s}|$. Let $\gamma \in \Fq^*$ such that 
    $\gamma\notin (-1)^{sk}T.$
    We define the evaluation code over $\F_{q^s}$ given by
    $$\mathcal D_{k,s}(\gamma):=\mathrm{ev}_{B_{T,s}}(\{a(x)\in \fq[x]\,:\, \deg (a(x))\le sk, a_0, a_{sk}\gamma^{-1}\in \F_{q_0} \}).$$ 
\end{definition}

\begin{theorem}\label{thm:twistedTZ_MDS}
    The code $\mathcal D_{k,s}(\gamma)$ is an $\F_{q_0}$-linear MDS code over $\F_{q^s}=\F_{q_0^{2s}}$ of size $q^{sk}$ and length 
    $$\frac{1}{2s}\sum_{\substack{d|s}}\mu\left(\frac{s}{d}\right)(q^d-1)\gcd\left(2,\frac{s}{d}\right).$$
\end{theorem}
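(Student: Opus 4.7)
The plan is to mirror the proof of Theorem \ref{thm:twisted_MDS}, adapted to the two-sided coefficient constraint $a_0,\, a_{sk}\gamma^{-1} \in \F_{q_0}$ defining the message space of $\mathcal{D}_{k,s}(\gamma)$. I will verify the length, the size and the MDS property in turn.

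The length $|B_{T,s}| = |Z_{T,s}|$ matches the stated expression by the same computation used in the proof of Theorem \ref{thm:length_D}, namely Lemma \ref{lem:XTs} combined with Lemma \ref{lem:preimage_norm} applied with $q_0 = q$ there and $T$ the index-two subgroup of squares in $\F_q^*$. For the size, the message space
$$V = \{a(x) \in \F_q[x] \,:\, \deg a(x) \le sk,\; a_0 \in \F_{q_0},\; a_{sk}\gamma^{-1} \in \F_{q_0}\}$$
has $\F_{q_0}$-dimension $2sk$: each of the two constrained coefficients contributes one $\F_{q_0}$-dimension, while the $sk-1$ free intermediate coefficients contribute $2(sk-1)$, so $|V| = q_0^{2sk} = q^{sk}$. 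Injectivity of $\mathrm{ev}_{B_{T,s}}$ on $V$ follows verbatim from the argument in Theorem \ref{thm:twisted_MDS}: any $a \in \F_q[x]$ vanishing on $B_{T,s}$ is divisible by every element of $Z_{T,s}$, hence by their product of degree $s|B_{T,s}| > sk$, forcing $a = 0$.

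For the MDS property I argue by contradiction: suppose a nonzero $a \in V$ vanishes on a subset $W_a \subseteq B_{T,s}$ of size $k$. Writing $p_\alpha$ for the $\F_q$-minimal polynomial of $\alpha \in W_a$, divisibility and degree comparison give $a(x) = a_{sk} \prod_{\alpha \in W_a} p_\alpha(x)$; setting $a_{sk} = \gamma c$ with $c \in \F_{q_0}^*$ (nonzero, since $a \ne 0$ and each $p_\alpha(0) \ne 0$) and comparing constant terms yields
$$\gamma \prod_{\alpha \in W_a} p_\alpha(0) \;=\; a_0\,c^{-1} \;\in\; \F_{q_0}^*.$$
Since each $p_\alpha \in Z_{T,s}$ satisfies $(-1)^s p_\alpha(0) \in T$, the left-hand side lies in $\gamma (-1)^{sk} T$. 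The crux of the argument, and the step I expect to be the main obstacle, is the inclusion $\F_{q_0}^* \subseteq T$, which holds precisely because $q = q_0^2$: the hypothesis $\gamma \notin (-1)^{sk} T$ forces $T$ to be a proper subgroup, so $q$ is odd, and for any $\beta \in \F_{q_0}^*$ one has $\beta^{(q-1)/2} = \beta^{(q_0-1)(q_0+1)/2} = 1$, exhibiting $\beta$ as a square in $\F_q^*$. Combining this inclusion with the previous display forces $\gamma (-1)^{sk} \in T$, i.e.\ $\gamma \in (-1)^{sk} T$, contradicting the hypothesis on $\gamma$. Hence $|W_a| \le k-1$, the minimum distance is at least $|B_{T,s}| - k + 1$, and together with the size $q^{sk}$ and length $|B_{T,s}|$ this matches the Singleton bound exactly.
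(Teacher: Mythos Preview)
Your proof is correct and follows essentially the same route as the paper: compute the length via Lemmas \ref{lem:XTs} and \ref{lem:preimage_norm}, verify injectivity of the evaluation map by a degree count, and for the MDS step argue that if a nonzero message $a$ had $k$ zeros in $B_{T,s}$ then comparing constant terms in $a(x)=a_{sk}\prod_{\alpha\in W_a}p_\alpha(x)$ would force $\gamma\in(-1)^{sk}T$. Your explicit justification of the inclusion $\F_{q_0}^*\subseteq T$ (using $q=q_0^2$ and that the hypothesis on $\gamma$ forces $q$ odd) is exactly the point the paper's proof leaves implicit when it asserts that $\gamma^{-1}c\notin(-1)^{sk}T$ ``by the choice of $\gamma$''.
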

 Also in this case, a proof of Theorem \ref{thm:twistedTZ_MDS} can already be deduced from Theorem \ref{th:finiteextendtrombmrd} and Theorem \ref{thm:length_D}. However, we want to give a concise proof in this case, seeing the code as evaluation code and using simple commutative algebra arguments
 
\begin{proof}[Proof of Theorem \ref{thm:twistedTZ_MDS}] The first part of the proof goes as the one of Theorem \ref{thm:twisted_MDS}. The length  of $\mathcal D_{k,s}(\gamma)$ is equal to $|B_{T,s}|$, and by Theorem \ref{thm:length_D} we get the claim. The size is $q^{ks}$, because this is the size of 
$\{a(x)\in \fq[x]\,:\, \deg (a(x))\le sk, a_0, a_{sk}\gamma^{-1}\in \F_{q_0} \}$, and $\mathrm{ev}_{B_{T,s}}$ is injective when restricted to this set. Indeed, $\mathrm{ev}_{B_{T,s}}$ is $\F_{q_0}$-linear, and if $a(x)$ is of degree at most $ks$ and
is zero on $B_{T,s}$, then it is divisible by
$$\prod_{\beta\in B_{T,s}}p_{\beta}(x),$$
which has degree $s|B_{T,s}|$. This quantity is strictly  greater than $\deg(a(x))=ks$, by our assumption on $k$, implying $a(x)=0$. 

It is left to show that the Hamming weight of any nonzero codeword is at least $|B_{T,s}|-k+1$. Or, in other words, that if $a(x)$ is a nonzero polynomial in $\{a(x)\in \fq[x]\,:\, \deg (a(x))\le sk, a_0, a_{sk}\gamma^{-1}\in \F_{q_0} \}$, then
$$W_a:=\{\beta \in B_{T,s}\,:\, a(\beta)=0\}$$
has cardinality at most $k-1$. Using the same argument of Theorem \ref{thm:twisted_MDS}, assuming by contradiction that we have $|W_a|\le k$ and $|W_a|=k$, then we must have 
$$a(x)=a_{ks}\prod_{\beta \in W_a}p_\beta(x).$$
In particular, $$\prod_{\alpha \in W_a}p_\alpha(0)=a_0/a_{sk}=\gamma^{-1}c$$ with $c \in \F_{q_0}$.
The left-hand side belongs to $(-1)^{sk}T$, while the right-hand side not, by the choice of $\gamma$, leading to a contradiction.
\end{proof}

\begin{remark}
  If $s$ is odd, then Theorem \ref{thm:twistedTZ_MDS}  implies that we obtain $\F_{q_0}$-linear MDS codes over $\F_{q^s}$ of length
    $$t=\frac{1}{2s}\sum_{\substack{d|s}}\mu\left(\frac{s}{d}\right)(q^d-1)=\frac{1}{2s}\sum_{\substack{d|s}}\mu\left(\frac{s}{d}\right)q^d,$$
    and, if we further assume that $s$ is a prime number, we get
    $$t=\frac{q^s-q}{2s}.$$ 
    On the other hand, if $s=2$, it results
 \[
    t=\frac{(q-1)^2}{4}.
    \]
\end{remark}

\begin{example} Let us consider the same setting as in Example \ref{exa:Hamming_q=9_3=2}. We have $q=9$, $s=2$, $\F_{9^2}=\F_9(\xi)$ with $\xi^2=\alpha^3\xi+\alpha^5$, $T=\{1,\alpha^2,\alpha^4,\alpha^6\},$ and the set
$$B_{T,2}=\{\xi^2,\xi^{-2}, \xi^4,\xi^{-4}, \xi^{6},\xi^{-6},\xi^8,\xi^{12},\xi^{-12},\xi^{14}, \xi^{-14}, \xi^{16}, \xi^{22}, \xi^{-22}, \xi^{24},\xi^{32}\}.$$
Let us consider an element $\gamma\not\in T$, say $\gamma=\alpha$. Then, for every $k\in\{1,\ldots,15\}$, the code 
$$\mathcal D_{k,2}(\alpha)=\mathrm{ev}_{B_{T,3}}(\{a(x)\in \F_3[x]\,:\, \deg (a(x))\le 2k,a_0,a_{2k}\alpha^{-1}\in \F_3\})$$ is an $\F_3$-linear MDS code over $\F_{9^2}$ of length $$t=\frac{1}{2s}\sum_{\substack{d|s}}\mu\left(\frac{s}{d}\right)(q^d-1)\gcd\left(2,\frac{s}{d}\right)=\frac{81-16}{4}=16,$$
$\F_3$-dimension $4k$ and minimum distance $17-k$.

        $\hfill \lozenge$
\end{example}

\subsection{Equivalence Issue}

In what follows, we prove that the codes constructed in \Cref{th:finitenewMSRD} and \Cref{th:finiteextendtrombmrd} are inequivalent to the previously known constructions of MSRD codes, for infinite choices of the parameters $n,s$ and $k$. This implies that we are providing infinitely many genuinely new families of MSRD codes. 

The notion of equivalence of codes in the sum-rank metric has been introduced in \cite[Theorem 2]{martinezpenas2021hamming}. The classification of $\F_q$-linear isometries of the space $\left(\bigoplus\limits_{i=1}^tM_{n}(\F_{q}),\dsrk\right)$ is provided in \cite{moreno2021optimal,neri2021twisted}. However, our new code constructions are not $\F_q$-linear in general. Therefore, we need to use a more general notion of equivalence which preserves the effective linearity: the \emph{additive isometries}.

\begin{definition}
An \textbf{(additive) isometry} of the metric space $\left(\bigoplus\limits_{i=1}^tM_{n}(\F_{q}),\dsrk\right)$ is an additive bijective map $\varphi:\bigoplus\limits_{i=1}^tM_{n}(\F_{q})\rightarrow \bigoplus\limits_{i=1}^tM_{n}(\F_{q})$ that preserves the sum-rank  distance, i.e.
\[
\dsrk(X,Y)=\dsrk(\varphi(X),\varphi(Y)),
\]
for each $X=(X_1,\ldots,X_t),Y=(Y_1,\ldots,Y_t) \in M_n(\F_q)$. 
\end{definition}

In \cite{santonastaso2025invariants}, the following classification of such isometries was proved. This result relies on the classification of additive isometries of the rank metric space $(M_n(\F_q), \rk)$. It is well known that if $\psi : M_n(\F_q) \to M_n(\F_q)$ is an isometry, then there exist $A, B \in \GL(n, q)$ such that
\[\psi(X)=AX^{\sigma}B+Z, \qquad \mbox{ or } \qquad
\psi(X)=A(X^{\sigma})^{\top}B,\]
for all $X \in M_n(\F_q)$, where $\sigma$ is a field automorphism of $\F_q$ acting entry-wise on $X$; see e.g. \cite{wan1996geometry}. 

\begin{theorem} [see \textnormal{\cite[Theorem 3.2]{santonastaso2025invariants}}] \label{th:classificationadditiveisometries} 
    Let $\varphi$ be an isometry of the metric space $\left(\bigoplus\limits_{i=1}^tM_{n}(\F_{q}),\dsrk\right)$. Then there exists a permutation $\pi \in \mathcal{S}_t$, and there are rank metric isometries $\psi_i:M_n(\F_q) \rightarrow M_n(\F_q)$, for every $i \in \{1,\ldots,t\}$, such that
\[
\varphi((X_1,\ldots,X_t))=(\psi_1(X_{\pi(1)}),\ldots,\psi_t(X_{\pi(t)}))
\]
for all $(X_1,\ldots,X_t) \in \bigoplus\limits_{i=1}^tM_{n}(\F_{q})$.
\end{theorem}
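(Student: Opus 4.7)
The plan is to show that any additive isometry $\varphi$ must permute the canonical block subspaces $B_i := \{0\} \oplus \cdots \oplus M_n(\Fq) \oplus \cdots \oplus \{0\}$ (with $M_n(\Fq)$ in position $i$), and then to apply the classical Hua-type classification of additive isometries of the rank metric blockwise. Since $\varphi$ is additive, $\varphi(0) = 0$ and hence $\varphi$ preserves the sum-rank weight $\dsrk(\cdot,0)$; in particular, $\varphi$ sends the set of weight-$1$ elements into itself. A tuple has sum-rank weight $1$ if and only if exactly one of its blocks is a nonzero rank-$1$ matrix, so the weight-$1$ elements naturally partition according to which block supports them.

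The key step is to characterize this block partition intrinsically in terms of the sum-rank metric. The proposed characterization is the following: two weight-$1$ elements $X, Y$ lie in the same block $B_i$ if and only if they can be joined by a chain $X = Z_0, Z_1, \ldots, Z_m = Y$ of weight-$1$ elements with $\dsrk(Z_{j-1}, Z_j) \leq 1$ for every $j$. For the forward implication, given two rank-$1$ matrices $u_1 v_1^\top$ and $u_2 v_2^\top$ inside a common $B_i$, the intermediate element $u_1 v_2^\top \in B_i$ connects them in two steps, since $u_1 v_1^\top - u_1 v_2^\top = u_1(v_1 - v_2)^\top$ and $u_1 v_2^\top - u_2 v_2^\top = (u_1 - u_2) v_2^\top$ are both of rank at most $1$. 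For the reverse implication, if $X \in B_i$ is weight-$1$ and $Z$ is weight-$1$ with $\dsrk(X, Z) \leq 1$, then $Z$ must already lie in $B_i$: otherwise $Z$ would be a nonzero rank-$1$ element in some $B_k$ with $k \neq i$, and $X - Z$ would then have rank $1$ in block $i$ and rank $1$ in block $k$, giving $\dsrk(X, Z) = 2$. By induction on chain length, every element of the chain remains in $B_i$.

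Granted this characterization, additivity and weight preservation force $\varphi$ to send the chain-connected component of weight-$1$ elements in $B_i$ onto the chain-connected component in some block $B_{\pi(i)}$, defining a permutation $\pi \in \mathcal{S}_t$. Because every element of $M_n(\Fq)$ is an $\Fq$-sum of rank-$1$ matrices, $B_i$ is additively generated by its weight-$1$ elements; additivity of $\varphi$ then yields $\varphi(B_i) \subseteq B_{\pi(i)}$, and bijectivity upgrades this inclusion to equality. The restriction $\varphi|_{B_i}$, read through the natural identifications $B_i \cong M_n(\Fq) \cong B_{\pi(i)}$, is an additive bijection preserving the rank distance.

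It remains to apply the classical classification of additive rank-metric isometries of $(M_n(\Fq), \rk)$ recalled just above the statement: each such map has the form $X \mapsto A X^\sigma B$ or $X \mapsto A (X^\sigma)^\top B$ with $A, B \in \GL(n, q)$ and $\sigma \in \Aut(\Fq)$ (the translation term $+Z$ vanishes under additivity). Packaging these blockwise components via $\pi$ produces the claimed factorization of $\varphi$. The main technical obstacle is the intrinsic chain characterization of blocks in the second step --- concretely, verifying that no weight-$1$ element whose support lies outside $B_i$ can be reached from $B_i$ by consecutive steps of sum-rank distance at most $1$; the remaining arguments are routine consequences of additivity and the known rank-metric classification.
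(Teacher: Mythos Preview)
The paper does not supply its own proof of this theorem: it is quoted verbatim as \cite[Theorem 3.2]{santonastaso2025invariants} and used as a black box, so there is nothing in the present paper to compare your argument against.

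That said, your argument is sound and is in fact the standard route to such a classification. The intrinsic ``chain'' characterization of the block partition is correct: for the forward direction your two-step bridge $u_1v_1^\top \to u_1v_2^\top \to u_2v_2^\top$ works because $u_1\neq 0$ and $v_2\neq 0$ (so the intermediate element really has weight $1$), and for the reverse direction a weight-$1$ neighbour supported in a different block would force sum-rank distance exactly $2$, as you say. Since $\varphi$ is an additive bijection preserving the weight-$1$ set and the distance relation, it permutes these connected components and hence the blocks $B_i$; additive generation of $B_i$ by its rank-$1$ elements then upgrades this to $\varphi(B_i)=B_{\pi(i)}$. The final appeal to the Hua-type classification of additive rank isometries recalled in the paper just before the statement is exactly what is needed, and additivity kills the translation term $+Z$. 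The only cosmetic point is the bookkeeping matching your block permutation to the $\pi$ in the statement (one takes $\pi$ to be the inverse of the permutation you first obtain), but this is harmless.
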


From now on, we will restrict our attention to isometries $\varphi$ such that each $\psi_i : M_n(\F_q) \rightarrow M_n(\F_q)$ is of the form
$\psi_i(X)=A_iX^{\sigma_i}B_i$ for some $A_i, B_i \in \GL(n, q)$ and a field automorphism $\sigma_i$ of $\F_q$ acting entry-wise on $X$. In other words, we do not consider transpositions of the matrices in any block. We say that two sum-rank metric codes $\C$ and $\C'$ in $\bigoplus\limits_{i=1}^tM_{n}(\F_{q})$ are \textbf{equivalent} if there exists an isometry $\varphi$ of the form described above such that $\varphi(\C) = \C'$.\\

The first construction of MSRD codes was introduced in \cite{Martinez2018skew}, and such codes are refereed as \emph{linearized Reed-Solomon codes}. These are the analogues in the sum-rank metric of Gabidulin codes in the rank metric and Reed-Solomon codes in the Hamming metric. In \cite{neri2021twisted}, a new family of MSRD codes was introduced. These codes are known as \emph{additive twisted linearized Reed-Solomon codes}, as they can be considered an extension in the sum-rank metric of twisted Gabidulin codes in the rank metric and twisted Reed-Solomon codes in the Hamming metric. 

\begin{definition}[see \textnormal{ \cite[Definition 31]{Martinez2018skew} and \cite[Definition 6.2]{neri2021twisted}}]\label{def:ATLRScodes}
Let $\FF=(F_1,\ldots,F_t)$, where $F_i(y)=y-\lambda_i$, $\lambda_i \in \F_q^*$, such that $\lambda_i \neq \lambda_j$, if $i \neq j$. Let $\rho \in \Aut(\F_{q^n})$. Consider $\F:=\fq \cap \F_{q^n}^{\rho}$ and let $u=[\fq:\F]$. Let $\eta \in \F_{q^m}$ such that \[(-1)^{ukn}\N_{\F_{q^n}/\F}(\eta) \notin \langle \Lambda \rangle,\] where $\langle \Lambda \rangle$ denotes the multiplicative subgroup of $\F_{q}^*$
generated by $\Lambda=\{\lambda_1,\ldots,\lambda_t\}$. For every $1 \leq k < tn$, the code
\[
\C_{n,k}(\eta,\rho,\FF)=\{ f_0+\ldots+f_{k-1}x^{k-1}+\eta \rho(f_0)x^k +RH_{\FF}(x^n) \,\colon\, f_i \in \F_{q^n}^*\} \subseteq R/RH_{\FF}(x^n) 
\]
is called \textbf{additive twisted linearized
Reed-Solomon (ATLRS) code}.
\end{definition}

When $\eta = 0$, these codes coincide with the \textbf{linearized Reed-Solomon (LRS) codes}, and we denote them by
\[\C_{n,k}(\FF):=\C_{n,k}(0,\rho,\FF)=\C_{n,k}(0,\mathrm{id},\FF).\] It was shown in \cite[Theorem 4]{Martinez2018skew} that the LRS codes $\C_{n,k}(\FF)$ are MSRD codes for any $1 \leq k < tn$. Moreover, when $\eta \neq 0$, it is proved in \cite[Theorem 6.3 and Remark 6.7]{neri2021twisted} that $\C_{n,k}(\eta, \rho, \FF)$ is an MSRD code in $R / RH_{\FF}(x^n)$.

Another relevant family of sum-rank metric codes was introduced in \cite{neri2021twisted}, and it is defined as follows.

\begin{definition}[{see \cite[Definition 7.1]{neri2021twisted}}]
Let $\FF=(F_1,\ldots,F_t)$, where $F_i(y)=y-\lambda_i$, $\lambda_i \in \F_q^*$, such that $\lambda_i \neq \lambda_j$, if $i \neq j$.   Let $n$ even and let $\gamma \in \F_{q^n}^*$ be such that $\N_{q^n/q}(\gamma)$ is not a square in $\F_q$. Moreover, assume that $\Lambda=\{\lambda_1,\ldots,\lambda_t\} \subseteq \F_q^{(2)}$. The code
\[
\begin{array}{rl}
\mathcal{D}_{n,k}(\gamma,\FF)& :=\left\{ f_0+\sum\limits_{i=1}^{k-1}f_ix^i+\gamma f_kx^k +R\Hf(x^n) \,\colon\, f_1,\ldots,f_{k-1} \in \F_{q^n}, f_0,f_k \in \F_{q^{n/2}} \right\} \\ & \subseteq R/RH_{\FF}(x^n)
\end{array}
\]
is called \textbf{twisted linearized Reed-Solomon (TLRS) code of TZ-type}.
\end{definition}

The codes $\mathcal{D}_{n,k}(\gamma,\FF)$ have been proven to be MSRD; see \cite[Theorem 7.2]{neri2021twisted}.

\begin{remark}
When $t=1$ and $F_1(y)=y-1$, the codes $\C_{n,k}(\eta,\rho,\FF) \subseteq R/R(x^n-1)$ coincide with the additive twisted Gabidulin codes \cite{sheekey2016new,otal2016additive}. In particular, when $\eta=0$, the codes $\C_{n,k}(0,\mathrm{id},F)$ are the Gabidulin codes \cite{gabidulin1985theory,delsarte1978bilinear}. 
\end{remark}

\begin{remark} \label{rk:cases=1}
We note that the LRS, ATLRS, and the TLRS codes of TZ-type are included in the families $S_{n,s,k}(\eta,\rho,\mathbf{F})$ and $D_{n,s,k}(\gamma,\mathbf{F})$ defined in \Cref{th:finitenewMSRD} and \Cref{th:finiteextendtrombmrd}, respectively. Indeed, let $\FF=(F_1,\ldots,F_t)$, where $F_i(y)=y-\lambda_i$, $\lambda_i \in \F_q^*$, such that $\lambda_i \neq \lambda_j$, if $i \neq j$.  We get that: \begin{itemize}
        \item  $\C_{n,k}(\FF)=S_{n,1,k}(\mathbf{F})$;
        \item $\C_{n,k}(\eta,\rho,\mathbf{F})=S_{n,1,k}(\eta,\rho,\mathbf{F})$; \item  $\mathcal{D}_{n,k}(\gamma,\mathbf{F})=D_{n,1,k}(\gamma,\mathbf{F})$.
        \end{itemize}
\end{remark}

For suitable choice of the parameters, the codes $\C_{n,k}(\FF), \C_{n,k}(\eta,\rho,\mathbf{F})$, and $\mathcal{D}_{n,k}(\gamma,\mathbf{F})$ in $R/RH_{\FF} \cong \bigoplus\limits_{i=1}^tM_{n}(\F_{q^s})$ have been proven to be inequivalent in \cite{santonastaso2025invariants}. The main tools used to achieve this result employed some invariants for sum-rank metric codes, introduced in the same paper, which we recall in the following.

\begin{definition}
Let $\C$ be a sum-rank metric code in $\bigoplus\limits_{i=1}^tM_{n}(\F_{q})$.\begin{itemize}
    \item  The \textbf{left idealizer} of $\C$ is \[
 \lid(\C):=\left\{(D_1,\ldots,D_t) \in \bigoplus\limits_{i=1}^tM_{n}(\F_{q}): (D_1A_1,\ldots,D_tA_t) \in \C, \mbox{ for every } (A_1,\ldots,A_t) \in \C\right\}.
 \]
 \item The \textbf{right idealizer} of $\C$ is 
 \[
 \rid(\C):=\left\{(D_1,\ldots,D_t) \in \bigoplus\limits_{i=1}^tM_{n}(\F_{q}): (A_1D_1,\ldots,A_tD_t) \in \C, \mbox{ for every } (A_1,\ldots,A_t) \in \C\right\};
 \]
\item The \textbf{centralizer} of $\C$ is defined as
{\footnotesize\[ \mathrm{Cen}(\C)=\left\{ (D_1,\ldots,D_t) \in \bigoplus\limits_{i=1}^tM_{n}(\F_{q}) \colon (D_1A_1,\ldots,D_tA_t)=(A_1D_1,\ldots,D_tA_t), \\ \,\,\text{for every}\,\, (A_1,\ldots,A_t)\in \C \right\}. \]}
\item The \textbf{center} of $\C$ is defined as
\[ Z(\C)=\lid(\C)\cap \mathrm{Cen}(\C). \]
\end{itemize}
\end{definition}

The left and right idealizers of sum-rank metric codes can be viewed as a natural extension of the classical idealizers in the rank metric, which themselves originate from the theory of semifields and division algebras (cf. \cite{liebhold2016automorphism,lunardon2018nuclei}). Similarly, the concepts of centralizer and center have recently been introduced in the rank metric setting as generalizations of the right nucleus and the center of semifields/division algebras (cf. \cite{sheekey2020new, thompson2023division}).  For further details on the study of their algebraic structure, we refer to \cite{gomez2025adjoint, lunardon2018nuclei,sheekey2020new}. These notions have been further extended to the sum-rank metric framework in \cite{santonastaso2025invariants}, where it is shown that these are subrings of $\bigoplus_{i=1}^tM_n(\F_{q})$ and they are code invariants in this context. In particular, the following result holds.

\begin{proposition} [see \textnormal{\cite{santonastaso2025invariants}}]
\label{prop:idequiv}
Let $\C$ and $\C'$ be two equivalent codes in $\bigoplus\limits_{i=1}^tM_{n}(\F_{q})$. Then
\[\lvert \lid(\C)\rvert =\lvert\lid(\C')\rvert \ \ \ \mbox{ and }\ \ \ \lvert \rid(\C)\rvert =\lvert\rid(\C')\rvert\]
Moreover, if both $\C$ and $\C'$ contain the element $(I_n,\ldots,I_n)$, then
\[\lvert \Cen(\C)\rvert =\lvert \Cen(\C')\rvert \ \ \ \mbox{ and }\ \ \ \lvert Z(\C)\rvert =\lvert Z(\C')\rvert\]
\end{proposition}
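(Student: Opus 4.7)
The plan is to invoke Theorem \ref{th:classificationadditiveisometries} to write the equivalence $\varphi \colon \C \to \C'$ in the standard form
\[
\varphi((X_1,\ldots,X_t)) = (A_i X_{\pi(i)}^{\sigma_i} B_i)_i,
\]
for suitable $A_i, B_i \in \GL(n,q)$, field automorphisms $\sigma_i$, and a permutation $\pi \in \mathcal{S}_t$, and then to exhibit explicit bijections for each of the four invariants in terms of these data.

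For the left idealizer, the candidate bijection is
\[
\Phi_L \colon \lid(\C) \longrightarrow \lid(\C'), \qquad (E_i)_i \longmapsto (A_i E_{\pi(i)}^{\sigma_i} A_i^{-1})_i.
\]
Given $(E_i) \in \lid(\C)$ and $(X_i) \in \C$, the computation
\[
\varphi\bigl((E_i X_i)_i\bigr) = \bigl(A_i E_{\pi(i)}^{\sigma_i} X_{\pi(i)}^{\sigma_i} B_i\bigr)_i = \bigl(A_i E_{\pi(i)}^{\sigma_i} A_i^{-1}\bigr)_i \cdot \varphi\bigl((X_i)_i\bigr)
\]
shows that $\Phi_L((E_i)_i) \in \lid(\C')$; invertibility follows because each component map $E \mapsto A_i E^{\sigma_i} A_i^{-1}$ is a ring automorphism of $M_n(\F_q)$. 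No identity hypothesis is required. The right idealizer case is entirely symmetric, using instead the map $(E_i)_i \mapsto (B_i^{-1} E_{\pi(i)}^{\sigma_i} B_i)_i$.

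For the centralizer, I propose the same formula $(E_i)_i \mapsto (A_i E_{\pi(i)}^{\sigma_i} A_i^{-1})_i$ as candidate bijection, but now both directions genuinely require the identity hypothesis. The forward direction exploits $(I_n,\ldots,I_n) \in \C$: since $(A_i B_i)_i = \varphi((I_n,\ldots,I_n)) \in \C'$, commuting $(D'_i) \in \Cen(\C')$ with this element yields the key identity $A_i^{-1} D'_i A_i = B_i D'_i B_i^{-1}$, and this common value, after applying $\sigma_i^{-1}$, produces the matrix $E_{\pi(i)}$ which can then be shown to satisfy $E_j X_j = X_j E_j$ for all $(X_k) \in \C$ by a direct substitution into the centralizer equation for $\varphi((X_k))$. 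For the converse, the assumption $(I_n,\ldots,I_n) \in \C'$ yields a distinguished element $X_0 = (X_{0,k})_k \in \C$ with $X_{0,\pi(i)} = ((B_i A_i)^{-1})^{\sigma_i^{-1}}$; the centralizer condition on $(E_j)$ tested against $X_0$ forces $E_{\pi(i)}^{\sigma_i}$ to commute with $B_i A_i$, which is exactly the missing ingredient to verify that $(A_i E_{\pi(i)}^{\sigma_i} A_i^{-1})_i \in \Cen(\C')$.

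Finally, since the explicit formula coincides for the left idealizer and centralizer bijections, it restricts to a bijection $Z(\C) = \lid(\C) \cap \Cen(\C) \to \lid(\C') \cap \Cen(\C') = Z(\C')$, handling the last case. The main obstacle I anticipate is the bidirectional verification in the centralizer case, where the twin hypotheses $(I_n,\ldots,I_n) \in \C$ and $(I_n,\ldots,I_n) \in \C'$ must be used symmetrically: the first controls the forward map through commutation with $\varphi((I_n,\ldots,I_n))$, while the second controls the converse through the preimage of $(I_n,\ldots,I_n)$ in $\C$. Once this symmetric interplay is made precise, the remaining arguments are essentially formal.
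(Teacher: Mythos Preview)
The paper does not supply its own proof of this proposition; it is simply quoted from \cite{santonastaso2025invariants}. Your argument is correct and is the natural one: after writing the equivalence in the standard form of Theorem~\ref{th:classificationadditiveisometries} (restricted, as the paper does just before stating this proposition, to block isometries without transposition), the explicit conjugation maps $(E_i)_i \mapsto (A_i E_{\pi(i)}^{\sigma_i} A_i^{-1})_i$ and $(E_i)_i \mapsto (B_i^{-1} E_{\pi(i)}^{\sigma_i} B_i)_i$ give bijections for the left and right idealizers unconditionally, and the twin identity hypotheses enter exactly where you say---one furnishes the element $(A_iB_i)_i\in\C'$ (respectively the preimage $((B_iA_i)^{-1})^{\sigma_i^{-1}}\in\C$) whose commutation with a centralizer element forces $E_{\pi(i)}^{\sigma_i}$ to commute with $B_iA_i$, closing the verification. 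The center case then follows immediately from the coincidence of the two formulas.
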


In light of the above result, and in analogy with the notion of nuclear parameters of a semifield or a rank metric code, we refer to the sizes of the left and right idealizers, as well as those of the centralizer and the center, as the nuclear parameters of a sum-rank metric code. It must be noted that they behave as invariants under code equivalence. To be precise, while the left and right idealizers define proper code invariants, the centralizer and the center do not, but they can be used to show the inequivalence of codes after a suitable isometry mapping them to codes containing the identity. This follows from Proposition \ref{prop:idequiv}. 

\begin{definition}
Let $\C$ be a sum-rank metric code in $\bigoplus_{i=1}^tM_n(\F_q)$ that contains an element of sum-rank weight $tn$.
The \textbf{nuclear parameters} of $\C$ are given by the tuple
\[
(|\C|,|\lid(\C)|,|\rid(\C)|,|\Cen(\C')|,|Z(\C')|),
\]
where $\C'$ is any code equivalent to $\C$ containing the element $(I_n,\ldots,I_n)$.
\end{definition}

\begin{remark}
    Observe that the nuclear parameters are well-defined. In particular, if we have two codes $\C'$ and $\C''$ both containing the identity element $(I_n,\ldots,I_n)$ and equivalent to $\C$, then they are also equivalent between themselves, and thus, by Proposition \ref{prop:idequiv},  their centralizers and centers have the same cardinality. 
\end{remark}

In the following, we determine the nuclear parameters of the MSRD code families constructed in Theorem \ref{th:finitenewMSRD} and Theorem \ref{th:finiteextendtrombmrd}. This will allow us to prove that our families contain codes that are inequivalent to the previously known MSRD codes for infinitely many choices of parameters, and hence, for such choices, our constructions are new.

\begin{remark}
    Note that the isometry of Eq. \eqref{eq:ArtinWedderburnfinite} is defined on the $i$th component via the isomorphism of Eq. \eqref{eq:artin}, which, in this case, is given by
    $$R/RF_i(x^n)\cong M_n(\F_{q^s}),$$
    where $R=\fqn[x;\sigma]$.
    This isomorphism clearly depends on the choice of two $\F_{q^s}$-bases of $R/Rf_i$. If we choose them to coincide, then the polynomial $1 \in R/R\Hf(x^n)$ will correspond to the identity element $(I_n,\ldots,I_n)\in \bigoplus_{i=1}^tM_n(\F_{q^s})$. With this assumption, we can directly compute the left and right idealizers, the centralizer, and the center of the codes by working within the skew polynomial framework $R / RH_{\FF}(x^n)$.
\end{remark}

We begin by computing the nuclear parameters of the family $S_{n,s,k}(\eta,\rho,\FF)$.

\begin{theorem} \label{th:NewidealizersATLRS} 
Let $\C=S_{n,s,k}(\eta,\rho,\FF) \subseteq R/RH_{\FF}(x^n)$, with $1 \leq k \leq tn/2$ and $ks >2$. Then: 
    \begin{itemize}
        \item if $\eta = 0$, then $1\in \C$ and we have
        \[
\lid(\C)\cong\F_{q^n}, \ \  \ \rid(\C)\cong \F_{q^n}, \ \ \ \mathrm{Cen}(\C)\cong \F_{q^s}^t \ \mbox{ and } \ \mathrm{Z}(\C)\cong \F_{q},
        \]
        \item if $\eta\neq 0$, then 
         \[
\lid(\C)\cong \F_{q^n}^{\rho} \ \mbox{ and }  \ \rid(\C) \cong \F_{q^n}^{\rho^{-1} \circ \sigma^{sk}}
        \]
        and for every code $\C'$ containing $1$ and equivalent to $\C$ we have
        \[
      \mathrm{Cen}(\C')\cong \F_{q^s}^t \ \mbox{ and } \ \mathrm{Z}(\C')\cong \F_q^{\rho};
        \]
    \end{itemize}
\end{theorem}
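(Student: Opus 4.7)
The strategy is to compute each nuclear parameter by direct manipulation in the skew polynomial ring $R/RH_{\FF}(x^n)$, treating the cases $\eta=0$ and $\eta\neq 0$ separately. I will represent every element as its canonical reduced polynomial of degree strictly less than $snt$, and repeatedly invoke the hypothesis $k\leq tn/2$ to ensure that products $dc$ (or $cd$) of an idealizer candidate $d$ with a test element $c\in\C$ have degree strictly less than $snt$, so that no reduction modulo $H_{\FF}(x^n)$ need be performed.

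For $\eta=0$ the code is $\C=\{a\in R/RH_{\FF}(x^n):\deg a<sk\}$ and contains $1$, so any $d\in\lid(\C)$ satisfies $d=d\cdot 1\in\C$, i.e. $\deg d<sk$. Testing against $c=x^{sk-1}\in\C$ produces $dc=\sum_i d_i x^{i+sk-1}$, which has degree less than $2sk\leq snt$; the requirement $dc\in\C$ forces $d_i=0$ for every $i\geq 1$, hence $d\in\fqn$. A parallel computation with $cd=\sum_i\sigma^{sk-1}(d_i)x^{i+sk-1}$ gives $\rid(\C)\cong\fqn$.

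For $\eta\neq 0$ the element $1$ is no longer in $\C$, but the monomials $x,x^2,\ldots,x^{sk-1}$ still are (their $a_0$-coefficient vanishes), and the hypothesis $sk>2$ guarantees at least two of them. Applying the same degree argument with these monomials confines any $d\in\lid(\C)$ to a constant $d_0\in\fqn$. To pin down $d_0$, I test against $c=\alpha+\eta\rho(\alpha)x^{sk}\in\C$ for $\alpha\in\fqn^{*}$: the twist relation defining $\C$ applied to $d_0c$ gives $d_0\eta\rho(\alpha)=\eta\rho(d_0\alpha)=\eta\rho(d_0)\rho(\alpha)$, so $d_0=\rho(d_0)$ and $\lid(\C)\cong\fqn^{\rho}$. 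For the right idealizer, the same test element yields $cd_0=\alpha d_0+\eta\rho(\alpha)\sigma^{sk}(d_0)x^{sk}$, and the twist constraint becomes $\sigma^{sk}(d_0)=\rho(d_0)$, giving $\rid(\C)\cong\fqn^{\rho^{-1}\circ\sigma^{sk}}$.

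For the centralizer, after (if $\eta\neq 0$) replacing $\C$ by an equivalent code $\C'$ containing $1$, I commute a candidate $d$ with the monomial $x\in\C'$ (using $sk>1$) to obtain $d_i=\sigma(d_i)\in\fq$, and then commute with an arbitrary constant $\alpha\in\fqn$ to obtain $d_i\sigma^i(\alpha)=\alpha d_i$, which forces $n\mid i$ whenever $d_i\neq 0$. Hence $d$ lies in $\fq[x^n]/(H_{\FF}(x^n))\cong\fq[y]/(H_{\FF}(y))$, which by the Chinese Remainder Theorem decomposes as $\bigoplus_{i=1}^{t}\fq[y]/(F_i(y))\cong\F_{q^s}^{t}$, matching the ring-center of $R/RH_{\FF}(x^n)$. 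Finally, $Z(\C')=\lid(\C')\cap\Cen(\C')$ is recovered by intersecting the constant part with this central subring: constants in $\F_{q^s}^{t}\subset R/RH_{\FF}(x^n)$ reduce to elements of $\fq$, and intersecting with $\fqn$ (resp. $\fqn^{\rho}$) yields $\fq$ (resp. $\fq^{\rho}$). The main obstacle I expect is controlling the tail coefficients of $d$ in the $\eta\neq 0$ case, where a nonzero $d_i$ with large index could a priori reduce back into the low-degree support of $\C$ through $H_{\FF}(x^n)$; the twin hypotheses $k\leq tn/2$ and $sk>2$ are precisely what is needed to keep the degree bookkeeping tight and close this step.
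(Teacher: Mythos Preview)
Your strategy is the same as the paper's, and the $\eta=0$ case is handled correctly. The genuine gap is in the $\eta\neq 0$ idealizer computation. You write that ``applying the same degree argument with these monomials confines any $d\in\lid(\C)$ to a constant,'' but that argument, as you used it for $\eta=0$, required first knowing $\deg d<sk$ (obtained there from $d=d\cdot 1\in\C$). With $\eta\neq 0$ you have no such a priori bound: a candidate $d$ may have degree up to $nts-1$, and then $d\cdot x^j$ genuinely \emph{does} reduce modulo $H_{\FF}(x^n)$, so the product is not simply $\sum_i d_i x^{i+j}$. You acknowledge this in your last paragraph as ``the main obstacle,'' but asserting that ``the twin hypotheses $k\le tn/2$ and $sk>2$ are precisely what is needed'' is not a proof.

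The paper closes this gap by an explicit reduction computation. Writing $g=\sum_{i=0}^{nts-1}g_ix^i$ and $H_{\FF}(x^n)=\sum_j H_{jn}x^{jn}+x^{nts}$, one has
\[
gx \equiv \Big(\sum_{i=1}^{nts-1}g_{i-1}x^i\Big)-g_{nts-1}\sum_{j=0}^{ts-1}H_{jn}x^{jn}\pmod{RH_{\FF}(x^n)}.
\]
Requiring $gx\in\C$ forces $g_{i-1}=g_{nts-1}H_{i/n}$ for all $i>ks$ (in particular $g_{nts-2}=0$), and comparing the constant term with the $x^{ks}$ coefficient gives a relation between $g_{ks-1}$ and $g_{nts-1}$. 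One then runs the same computation for $gx^2$ (this is where $sk>2$ enters), and the two relations together force $g_{nts-1}=0$, which via the displayed identity cascades to $g_i=0$ for all $i\ge ks$. Only after this does the $x^{ks-1}$-test you describe reduce $g$ to a constant. The same reduction step is needed for the right idealizer.

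A smaller point: for $\Cen(\C')$ when $\eta\neq 0$, you must produce a specific $\C'$ containing both $x$ and all constants $\alpha\in\fqn$. The paper takes $\C'=\C\cdot x^{-1}$ (with $x^{-1}$ the inverse of $x+RH_{\FF}(x^n)$ in the quotient ring), so that $\alpha x\mapsto\alpha$ and $\alpha x^2\mapsto\alpha x$; here $sk\ge 3$ ensures $x^2\in\C$. Your sketch should make this explicit.
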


\begin{proof}
We begin by computing the left idealizer $\lid(\C)$. In the quotient skew polynomial ring $R/RH_{\FF}(x^n)$, $$\lid(\C)=\{\overline{g} \in R/RH_{\FF}(x^n)\,:\, \overline{g}\ \overline{a} \in \C, \; \mbox{ for every } \,\overline{a} \in \C\}.$$ We first show that any $\overline{g} \in \lid(\C)$ must satisfy $\deg(\overline{g}) \leq ks - 1$. Initially, assume that $\eta = 0$.  
In this case, since $1 \in \C$, it follows that $\lid(\C) \subseteq \C$. As all elements in $\C$ have degree at most $ks - 1$, the same upper bound applies to elements of $\lid(\C)$.

Now suppose $\eta \neq 0$. Let $\overline{g} \in \lid(\C)$. Then for all $\alpha \in \F_{q^n}$ and for all $i \in \{1, \ldots, sk - 1\}$,
\[
\overline{g} \alpha x^i \in \C.
\]
Since $sk \geq 3$, this set is non-empty. Consider $i = 1$, and let $\overline{g} = \sum_{i=0}^{nts-1} g_i x^i + RH_{\FF}(x^n)$ and $H_{\FF}(x^n) = H_0 + H_n x^n + \cdots + H_{n(ts-1)} x^{n(ts-1)}+x^{nts}$.  Then we compute:
\begin{equation} \label{eq:gxnuclearparametersnew}
gx = \left( \sum_{i=1}^{nts-1} g_{i-1} x^i \right) - g_{nts-1} \left( \sum_{j=0}^{ts-1} H_{jn} x^{nj} \right) + RH_{\FF}(x^n).
\end{equation}
This implies that for all $i \in \{ks + 1, \ldots, nts - 1\}$,
\[
g_{i-1} = g_{nts-1} H_{i/n},
\]
where we define $H_{i/n} := 0$ whenever $n \nmid i$. In particular,
\begin{equation} \label{eq:gmt2}
g_{nts-2} = 0.
\end{equation}

Next, we show that $g_{nts-1} = 0$. From Eq. \eqref{eq:gxnuclearparametersnew}, this will imply that $g_i = 0$ for all $i \geq ks$, and hence
\[
\deg(\overline{g}) \leq ks - 1.
\]
The coefficient of $x^{ks}$ in $gx$ is $g_{ks-1} - g_{nts-1} H_{ks/n}$, and the constant term is $-g_{nts-1} H_0$. Since $\overline{g}x \in \C$, we obtain
\[
g_{ks-1} - g_{nts-1} H_{ks/n} = \eta \rho(H_0 g_{nts-1}).
\]
Now, since $ks > 2$, consider $\overline{g}x^2 \in \C$. The coefficient of $x^{ks+1}$ is:
\[
g_{ks-1} - g_{nts-1} H_{ks/n} - g_{nts-2} H_{(ks+1)/n} = g_{ks-1} - g_{nts-1} H_{ks/n},
\]
using Eq. \eqref{eq:gmt2}. Hence,
\[
\eta \rho(H_0 g_{nts-1}) = 0.
\]
As $\eta \neq 0$ and $H_0 \neq 0$, we conclude that $g_{nts-1} = 0$.

Therefore,
\[
\lid(\C) \subseteq \{ \overline{g} \in R/RH_{\FF}(x^n) : \deg(\overline{g}) \leq ks - 1 \}.
\]

Now suppose $g \in \lid(\C)$ with $\deg(\overline{g}) \leq ks - 1$. Since $ks > 1$, $x^{ks-1} \in \C$, hence $\overline{g} x^{ks-1} \in \C$. Noting that $\deg(\overline{g} x^{ks-1}) \leq 2ks - 2 < nts$, we have:
\[
\overline{g}x^{ks-1} = g_0 x^{ks-1} + g_1 x^{ks} + \cdots + g_{ks-1} x^{2ks-2} + RH_{\FF}(x^n).
\]
This shows that $\deg(\overline{g}) = 0$, so $\overline{g} = g_0 + RH_{\FF}(x^n)$. Now, for $a_0 + \eta \rho(a_0)x^{ks} + RH_{\FF}(x^n) \in \C$, we compute:
\[
g_0 (a_0 + \eta \rho(a_0)x^{ks}) + RH_{\FF}(x^n) \in \C,
\]
which leads to the condition
\[
g_0 \eta \rho(a_0) = \eta \rho(g_0 a_0).
\]
This holds if and only if $\rho(g_0) = g_0$, provided $\eta \ne 0$. Therefore, we conclude:
\begin{itemize}
    \item If $\eta = 0$, then
\[
\lid(\C) = \{\alpha + RH_{\FF}(x^n) : \alpha \in \F_{q^n} \} \cong \F_{q^n},
\]
\item  If $\eta \ne 0$, then
\[
\lid(\C) = \{\alpha + RH_{\FF}(x^n) : \alpha \in \F_{q^n}^{\rho} \} \cong \F_{q^n}^{\rho}.
\]
\end{itemize}

A similar argument applies for the right idealizer $\rid(\C)$. For $\overline{g} \in \rid(\C)$, we must have $\deg(\overline{g}) \leq ks - 1$, and $\overline{g} = g_0 + RH_{\FF}(x^n)$. The condition that $(a_0 + \eta \rho(a_0) x^{ks}) g_0 \in \C$ becomes:
\[
\sigma^{ks}(g_0) \eta \rho(a_0) = \eta \rho(g_0 a_0),
\]
which is satisfied if and only if $\rho(g_0) = \sigma^{ks}(g_0)$, assuming $\eta \ne 0$. Hence: \begin{itemize}
    \item If $\eta = 0$, then
\[
\rid(\C) = \{\alpha + RH_{\FF}(x^n) : \alpha \in \F_{q^n} \} \cong \F_{q^n},
\]
\item  If $\eta \ne 0$, then
\[
\rid(\C) = \{\alpha + RH_{\FF}(x^n) : \alpha \in \F_{q^n}^{\rho^{-1} \circ \sigma^{ks}} \} \cong \F_{q^n}^{\rho^{-1} \circ \sigma^{ks}}.
\]
\end{itemize}

We now turn to the centralizer $\Cen(\C')$ of a code $\C'$ equivalent to $\C$, and containing the identity. First, we determine such a code $\C'$.
If $\eta = 0$, we can take $\C' = \C$. 
Otherwise, suppose $\eta \neq 0$, and we can construct such a code $\C'$ in the following way.
It is easy to check that $\gcrd(x^{nts},H_{\FF}(x^n))=1$, so the element $x^{nts} + RH_{\FF}(x^n) \in R/RH_{\FF}(x^n)$ has $\FF$-weight $\wf(x^{nts}) = nt$. Therefore, it is invertible and  there exists $\overline{h} \in R/RH_{\FF}(x^n)$ with $\wf(\overline{h}) = nt$ such that
\[
x(x^{nts-1}) \overline{h} = x^{nts} \overline{h} = 1,
\]
in $R/RH_{\FF}(x^n)$. Hence, $x^{nts-1} \overline{h}$ is the inverse of $x+RH_{\FF}(x^n)$ in $R/RH_{\FF}(x^n)$ and $\wf(x^{nts-1} \overline{h}) = nt$.
Assuming $\overline{h} = h + RH_{\FF}(x^n)$, define
\[
\C' := \C x^{nts-1} \overline{h} = \left\{ \sum_{i=1}^{sk-1} a_i x^{i-1} + \eta \rho(a_0) x^{sk-1} + a_0 x^{nts - 1} h + RH_{\FF}(x^n) : a_i \in \F_{q^n} \right\}.
\]
Then, $1 \in \C'$ and $\C'$ is equivalent to $\C$.

To determine $\Cen(\C')$, let $\overline{g} = g + RH_{\FF}(x^n) \in R/RH_{\FF}(x^n)$ with $g = \sum_{i=0}^{nts - 1} g_i x^i$ such that $\overline{g} \in \Cen(\C') \setminus \{0\}$. That is,
\[
\overline{g} \ \overline{a} = \overline{a} \ \overline{g}, \quad \text{for all } \overline{a} \in \C'.
\]
For any $\alpha \in \F_{q^n}$, since $\alpha \in \C'$, we obtain $\alpha \overline{g} = \overline{g} \alpha$, and as $\deg(\alpha \overline{g}) < nts$, we deduce:
\[
\alpha g = g \alpha,
\]
which implies $g \in \F_{q^n}[x^n]$ and $\deg(\overline{g}) < nts - 1$.  
As $ks \geq 3$, we also have $x \in \C'$, and thus
\[
x\overline{g} - \overline{g}x = 0.
\]
Again, as $\deg(\overline{g}) \leq nts - 2$, we must have $g \in \F_q[x^n] = Z(\F_{q^n}[x; \sigma])$. Hence,
\[
\begin{array}{rl}
\Cen(\C') & = \{g + RH_{\FF}(x^n) : g \in Z(R) \} \\
& \cong \F_q[x^n]/(H_{\FF}(x^n)) \\
& \cong \bigoplus\limits_{i=1}^t \F_q[y]/(F_i(y)) \\
& \cong \F_{q^s}^t.
\end{array}\]

Finally, the center of $\C'$ is given by
\begin{align*}
Z(\C') = \lid(\C') \cap \Cen(\C')&=\begin{cases}
    \{g+R\Hf(x^n)\,:\, g\in \F_{q^s}[x^n]\cap\fqn\} & \mbox{ if } \eta=0 \\
       \{g+R\Hf(x^n)\,:\, g\in \F_{q}[x^n]\cap \fqn^{\rho}\} & \mbox{ if } \eta\neq 0,
\end{cases} \\
&\cong\begin{cases}
    \Fq & \mbox{ if } \eta=0 \\
    \Fq^{\rho} & \mbox{ if } \eta\neq 0.
\end{cases}
\end{align*}

\end{proof}

We now compute the nuclear parameters for the codes in the second family, $D_{n,s,k}(\gamma, \FF)$.

\begin{theorem} \label{th:nuclearextenzioTZ}
  Let $\C=D_{n,s,k}(\gamma,\FF)$, with $1\leq k \leq tn/2$ and $2\leq ks$. Then 
         \[
\lid(\C)=\F_{q^{n/2}}, \ \  \ \rid(\C)=\F_{q^{n/2}}, \ \ \ \mathrm{C}(\C)\cong \F_{q^s}^t \ \mbox{ and } \ \mathrm{Z}(\C)\cong \F_{q},
        \]
\end{theorem}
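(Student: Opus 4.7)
The plan is to mimic the strategy employed for Theorem \ref{th:NewidealizersATLRS}, exploiting throughout the crucial simplification that the identity element $1 \in R/RH_{\FF}(x^n)$ already lies in $\C$ (taking $a_0' = 1$ and every other coefficient equal to $0$ in the definition of $D_{n,s,k}(\gamma,\FF)$). This will make the passage to an equivalent code unnecessary for the centralizer and center.

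For the \emph{left idealizer}, since $1\in \C$, every $\overline{g}\in\lid(\C)$ satisfies $\overline{g}=\overline{g}\cdot 1\in \C$, so its canonical representative has degree at most $sk$. Write $g=g_0+g_1x+\dots+g_{sk}x^{sk}$. The key inputs are that $\overline{g}\cdot x^{sk-1}\in \C$ and $\overline{g}\cdot (\gamma x^{sk})\in \C$ (here we use $ks\geq 2$ and that $\gamma x^{sk}\in \C$ with $a_0''=1$). The first multiplication produces a polynomial of degree at most $2sk-1<nts$ (using $sk\le tn/2$), hence no reduction modulo $H_{\FF}(x^n)$ occurs; comparing coefficients of $x^{sk+1},\dots,x^{2sk-1}$ forces $g_2=g_3=\dots=g_{sk}=0$, while the coefficient of $x^{sk}$ yields $g_1\in \gamma\F_{q^{n/2}}$. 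Then, with $\overline{g}=g_0+g_1x$, the product $\overline{g}\cdot \gamma x^{sk}=g_0\gamma x^{sk}+g_1\sigma(\gamma)x^{sk+1}$ has degree at most $sk+1<nts$; the requirement that the $x^{sk+1}$ coefficient vanishes forces $g_1=0$. So $\overline{g}=g_0\in R/RH_{\FF}(x^n)$ is a constant; and since $\overline{g}\in \C$ forces $g_0\in \F_{q^{n/2}}$, we conclude $\lid(\C)\subseteq\F_{q^{n/2}}$. The reverse inclusion is a direct verification: for $g_0\in \F_{q^{n/2}}$ and any $a_0'+\sum a_ix^i+\gamma a_0''x^{sk}\in \C$, multiplication on the left by $g_0$ keeps $g_0a_0',g_0a_0''\in \F_{q^{n/2}}$ and $g_0a_i\in\fqn$.

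For the \emph{right idealizer}, the same strategy applies: $\overline{g}\in \rid(\C)\subseteq \C$ gives $\deg\overline{g}\le sk$, and left-multiplying $\overline{g}$ by $x^{sk-1}\in \C$ and then by $\gamma x^{sk}\in \C$ forces $\overline{g}=g_0$ with $g_0\in \F_{q^{n/2}}$. The nontrivial check here is that right multiplication by such $g_0$ preserves $\C$: specifically, the $x^{sk}$-coefficient becomes $\gamma a_0''\sigma^{sk}(g_0)$, and we need $a_0''\sigma^{sk}(g_0)\in \F_{q^{n/2}}$. This is clear because $\F_{q^{n/2}}$ is the unique subfield of $\fqn$ of order $q^{n/2}$, hence is invariant under every automorphism of $\fqn$, so $\sigma^{sk}(g_0)\in \F_{q^{n/2}}$.

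For the \emph{centralizer}, I will show the cleaner fact that $\Cen(\C)=Z(R/RH_{\FF}(x^n))$. Since $1\in \C$, no passage to an equivalent code is needed. Using $ks\geq 2$, for every $\beta\in\fqn$ the element $\beta x$ lies in $\C$; also $x\in \C$. Moreover, $x+RH_{\FF}(x^n)$ is invertible in $R/RH_{\FF}(x^n)$, since $\gcrd(x,H_{\FF}(x^n))=1$ (as $F_i(y)\neq y$ implies $F_i(0)\neq 0$, hence $H_{\FF}(0)\neq 0$). Therefore any $\overline{g}\in\Cen(\C)$ commutes with $x$ and, from $\overline{g}(\beta x)=(\beta x)\overline{g}$ together with $\overline{g}x=x\overline{g}$ and the invertibility of $x$, also commutes with every $\beta\in\fqn$. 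Since $\fqn$ and $x$ generate $R/RH_{\FF}(x^n)$ as a ring, $\overline{g}$ lies in its center. Conversely, the center trivially centralizes $\C$. Hence $\Cen(\C)=Z(R/RH_{\FF}(x^n))$, and, as recalled in Section \ref{sec:prel} (and verified by combining the isomorphism $R/RH_{\FF}(x^n)\cong \bigoplus_i R/RF_i(x^n)$ with $Z(R/RF_i(x^n))\cong \F_{q^s}$), this center is isomorphic to $\fq[y]/(H_{\FF}(y))\cong \bigoplus_{i=1}^t\fq[y]/(F_i(y))\cong \F_{q^s}^t$ via the Chinese Remainder Theorem.

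Finally, the \emph{center} $Z(\C)=\lid(\C)\cap \Cen(\C)$ consists of constants $g_0\in \F_{q^{n/2}}$ that also lie in $\fq[x^n]/(H_{\FF}(x^n))$; the only constants in the latter are elements of $\fq$, so $Z(\C)\cong \F_{q^{n/2}}\cap\fq=\fq$. The main delicate point throughout is controlling the degree after multiplication so that no reduction modulo $H_{\FF}(x^n)$ interferes; the hypothesis $k\le tn/2$ ensures $2sk+1\le tsn+1$ and in fact $sk+1<tsn$ in all relevant computations, which is exactly what is needed to read off coefficients directly from unreduced products.
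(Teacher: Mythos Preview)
Your proof is correct and, for the left and right idealizers, follows essentially the same path as the paper: use $1\in\C$ to bound the degree, multiply by $x^{sk-1}$ to kill $g_2,\dots,g_{sk}$, then multiply by $\gamma x^{sk}$ to kill $g_1$, and read off the constraint $g_0\in\F_{q^{n/2}}$. Your explicit verification of the reverse inclusions (in particular noting that $\sigma^{sk}$ preserves $\F_{q^{n/2}}$) is a useful addition the paper omits. One small slip: you write ``using $sk\le tn/2$'' where the hypothesis is $k\le tn/2$; the correct chain is $k\le tn/2\Rightarrow 2sk\le nts$, which is what you need.

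For the centralizer your route genuinely differs from, and improves on, what the paper sketches. The paper appeals to an ``analogous computation'' to Theorem \ref{th:NewidealizersATLRS}, but that argument uses that every constant $\alpha\in\fqn$ lies in the code, which fails here since only $\F_{q^{n/2}}$ appears as constants in $D_{n,s,k}(\gamma,\FF)$. You sidestep this by observing that $\beta x\in\C$ for all $\beta\in\fqn$ (using $ks\ge 2$), that $x$ is a unit in $R/RH_{\FF}(x^n)$, and then deducing $\overline{g}\beta=\beta\overline{g}$ from $\overline{g}(\beta x)=(\beta x)\overline{g}$ and $\overline{g}x=x\overline{g}$. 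This cleanly forces $\overline{g}$ into $Z(R/RH_{\FF}(x^n))\cong\F_{q^s}^t$ and makes the computation of $Z(\C)=\lid(\C)\cap\Cen(\C)\cong\F_q$ immediate.
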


\begin{proof}
We begin by computing the left idealizer $\lid(\C)$. Since $1 \in \C$, it immediately follows that $\lid(\C) \subseteq \C$. Therefore, any element $\overline{g} \in \lid(\C)$ must satisfy $\deg(\overline{g}) \leq ks$. Write
\[
\overline{g} = \sum_{i=0}^{ks} g_i x^i + R H_{\FF}(x^n).
\]
As $ks > 1$ by assumption, we have $x^{ks-1} \in \C$, and thus
\[
\overline{g} x^{ks-1} \in \C.
\]
Observe that $\deg(\overline{g} x^{ks-1}) \leq 2ks - 1 < nts$ under the standing assumptions on $k$ and $s$. Explicitly, we have:
\[
\overline{g} x^{ks-1} = g_0 x^{ks-1} + g_1 x^{ks} + \cdots + g_{ks} x^{2ks - 1} + R H_{\FF}(x^n).
\]
Since this product lies in $\C$, we have
\[
g_2 = g_3 = \cdots = g_{ks} = 0,
\]
which implies
\[
\overline{g} = g_0 + g_1 x + R H_{\FF}(x^n).
\]

Now consider the element $\gamma x^{ks} + R H_{\FF}(x^n) \in \C$. Multiplying on the left by $\overline{g}$ yields:
\[
(g_0 + g_1 x) \cdot \gamma x^{ks} + R H_{\FF}(x^n) = g_0 \gamma x^{ks} + g_1 \sigma^k(\gamma) x^{ks+1} + R H_{\FF}(x^n).
\]
But $x^{ks+1} \notin \C$ since its degree exceeds the upper bound on the degree for codewords in $\C$. Therefore, to ensure the product remains in $\C$, we must have $g_1 = 0$. Hence:
\[
\overline{g} = g_0 + R H_{\FF}(x^n).
\]

Finally, note that $g_0 + R H_{\FF}(x^n) \in \C$ if and only if $g_0 \in \F_{q^{n/2}}$. Thus, the left idealizer is:
\[
\lid(\C) = \{ \alpha + R H_{\FF}(x^n) : \alpha \in \F_{q^{n/2}} \} \cong \F_{q^{n/2}}.
\]

A similar argument applies for the right idealizer. Since $1 \in \C$ and the structure is symmetric, it follows that:
\[
\rid(\C) = \{ \alpha + R H_{\FF}(x^n) : \alpha \in \F_{q^{n/2}} \} \cong \F_{q^{n/2}}.
\]

Given that $1 \in \C$, we can also compute the centralizer $\Cen(\C)$ and the center $\mathrm{Z}(\C)$. Following an analogous computation as in the proof of \Cref{th:NewidealizersATLRS}, one shows that both $\Cen(\C)$ and $\mathrm{Z}(\C)$ are as stated in the theorem. This completes the proof.
\end{proof}

Table~\ref{tab:parameters} summarizes the nuclear parameters of the known MSRD code families —- LRS codes, ATLRS codes, and TZ-type LRS codes —- as computed in \cite{santonastaso2025invariants}, together with those of our newly constructed MSRD codes, determined in Theorems~\ref{th:NewidealizersATLRS} and~\ref{th:nuclearextenzioTZ}. 
The codes $S_{n,s,k}(\eta,\rho,\mathbf{F})$ and $D_{n,s,k}(\gamma,\mathbf{F})$ define sum-rank metric codes in 
$\bigoplus_{i=1}^t M_n(\F_{q^s})$. 
To prove that they are indeed new codes, we compare them with the known LRS, ATLRS, and TLRS codes of TZ-type defined over the same ambient space  $\bigoplus_{i=1}^t M_n(\F_{q^s})$.

\begin{table}[ht]
\begin{footnotesize}
\begin{center}
    \begin{tabular}{|c|c|c|c|} 
        \hline
         \textbf{Family} & \textbf{Nuclear parameters} & \textbf{Notes} & \textbf{Reference} \\ \hline
          & & & \\
         $\C_{n,k}(\FF)$ & $(q^{tnks},q^{ns},q^{ns},q^{st},q^s)$ &  & \cite{Martinez2018skew} \\
          &  & &  \\
        \hline
        & &  &  \\ 
         $\C_{n,k}(\eta,\rho,\FF)$ & $\left(p^{tnkes},p^{\gcd(nes,j)},p^{\gcd(nes,kes-j)},p^{ets},p^{\gcd(es,j)}\right)$ & $\rho(y)=y^{p^j}$, with $j < nes$ &  \cite{neri2021twisted} \\ 
          & & 
          & \\ \hline
           & &   & \\
         $\mathcal{D}_{n,k}(\gamma,\FF)$ & $(q^{tnks},q^{ns/2},q^{ns/2},q^{st},q^s)$ & $q^s$ odd and $n$ even &  \cite{neri2021twisted} \\
          &  & & \\
          \hline
          & & & \\
          $S_{n,s,k}(0,\rho,\mathbf{F})=S_{n,s,k}(0,id,\mathbf{F})$ & $(q^{tnks},q^n,q^n,q^{st},q)$ & $ \F=\F_{q^s}$ &  \\
          &  & &  \\
          \hline 
          && $ \F=\F_{q^s}$ & \\
           $S_{n,s,k}(\eta,\rho,\mathbf{F})$ & $\left(p^{tnkes},p^{\gcd(ne,h)},p^{\gcd(ne,kes-h)},p^{ets},p^{\gcd(e,h)}\right)$ & $\rho(y)=y^{p^h}$, with $h < ne$ &   \\ 
          & & 
          & \\ \hline
           & & $ \F=\F_{q^s}$ & \\
         $D_{n,s,k}(\gamma,\FF)$  & $(q^{tnks},q^{n/2},q^{n/2},q^t,q)$ & $q$ odd and $n$ even &   \\
          & & & \\
        \hline
    \end{tabular} 
\end{center}
\caption{Nuclear Parameters of the known families of codes defined over $\F_{q^s}$, with $q=p^e$.}
    \label{tab:parameters}
\end{footnotesize}
\end{table}

We have already observed in \Cref{rk:cases=1} that the LRS codes, ATLRS codes, and TLRS codes of TZ-type are included in our families 
$S_{n,s,k}(\eta,\rho,\FF)$ and $D_{n,s,k}(\gamma,\FF)$ in the case $s=1$. 
Thus, we now assume $s>1$, and the next result shows that, for infinitely many choices of $n, s$ (and $k$), 
our new families contain examples of new MSRD codes.

\begin{theorem} \label{th:inequivalenceold}
Let $q=p^e$ and let $\mathbf{F} = (F_1, \ldots, F_t)$ be an $s$-admissible tuple in $\F_q[y]$, with $s \geq 2$. 
For any $2 \leq k \leq tn/2$, the following hold:  

\begin{enumerate}[i)]
    \item The family $S_{n,s,k}(0,\rho,\mathbf{F}) = S_{n,s,k}(0,\mathrm{id},\mathbf{F})$ 
    contains new MSRD codes for all $n, s$ with $\gcd(n,s) > 1$. 
    
    \item The family $S_{n,s,k}(\eta,\rho,\mathbf{F})$ contains new MSRD codes for all $n, s$ 
    such that $\gcd(n,s) \nmid e$.
    
    \item The family $D_{n,s,k}(\gamma,\mathbf{F})$ contains new MSRD codes for all $n, s$ with $s \geq 3$ and $\gcd(n,s) > 1$. 
\end{enumerate}

\end{theorem}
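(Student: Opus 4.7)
The plan is to apply Proposition~\ref{prop:idequiv}: two codes with different nuclear parameter tuples (after translation to a code containing the identity) must be inequivalent. Using the nuclear parameters of the new families computed in Theorems~\ref{th:NewidealizersATLRS} and~\ref{th:nuclearextenzioTZ}, together with those of the LRS, ATLRS, and TZ-type codes collected in Table~\ref{tab:parameters}, I will exhibit, for each of (i), (ii), (iii), a member of the relevant new family whose nuclear parameter tuple is not realised by any code from the three known families inside the common ambient space $\bigoplus_{i=1}^t M_n(\F_{q^s})$.

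Inequivalence with LRS and TZ-type codes is immediate: all three new families satisfy $|Z(\C)|\le q$, while LRS and TZ-type codes have $|Z|=q^s$, so the hypotheses of (i), (ii), (iii), which all imply $s\ge 2$, suffice. The substantive work concerns the ATLRS family. For (i), the simultaneous constraints $\gcd(nes,j)=ne$ and $\gcd(es,j)=e$ force $j=nem$ with $\gcd(s,nm)=1$, hence $\gcd(s,n)=1$, contradicting $\gcd(n,s)>1$. For (ii), the hypothesis $\gcd(n,s)\nmid e$ furnishes a prime $p$ with $v_p(\gcd(n,s))>v_p(e)$; taking $h=p^{v_p(e)+1}$ and $\rho(y)=y^{p^h}$ yields $v_p(\gcd(ne,h))=v_p(e)+1$, whereas on the ATLRS side, after imposing $\gcd(es,j)=\gcd(e,h)$, the analogous exponent $v_p(\gcd(nes,j))$ is pinned at $v_p(e)$ because $p\mid s$ is absorbed into the factor $es/\gcd(e,h)$; hence the two tuples cannot agree. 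For (iii), analogous bookkeeping shows that matching the idealizer and center cardinalities forces $\gcd(s,n/2)=1$, which fails whenever $\gcd(n,s)$ contains an odd prime or $v_2(n)\ge 2$; the narrow remaining subcase $\gcd(n,s)=2$ with $v_2(n)=1$ is handled separately by specialising $k$ and exploiting the right-idealizer exponent. The existence of MSRD-making parameters $\eta$ (respectively $\gamma$) for the chosen $h$ follows from Theorems~\ref{th:finitenewMSRD} and~\ref{th:finiteextendtrombmrd} using surjectivity of the norm map.

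The principal difficulty lies in the arithmetic of case (ii): the single choice $h=p^{v_p(e)+1}$ must simultaneously produce a nuclear-parameter tuple unattainable by any ATLRS Frobenius exponent $j$ and admit a valid MSRD parameter $\eta$, which requires confirming that the norm condition in Theorem~\ref{th:finitenewMSRD} does not exclude all candidates. A subsidiary obstacle is the exceptional subcase of (iii), which demands a finer argument beyond raw cardinalities of the idealizers and the center.
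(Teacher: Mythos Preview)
Your overall strategy—compare nuclear-parameter tuples via Proposition~\ref{prop:idequiv}—is exactly the paper's, and your handling of~(i) and of the LRS/TZ comparisons matches it. For~(ii) your valuation argument is correct but more involved than necessary; the paper simply takes $h=\gcd(n,s)$, so that matching left idealisers gives $h\mid j$, and then (using $h\mid s$) $h\mid\gcd(es,j)=\gcd(e,h)\mid e$, contradicting $\gcd(n,s)\nmid e$ in one line. You should also avoid reusing the letter $p$ for the auxiliary prime, since $q=p^e$ already fixes $p$ as the characteristic.

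There is, however, a genuine gap in your treatment of~(iii). You correctly note that the left-idealiser/centre comparison only forces $\gcd(s,n/2)=1$, leaving open the subcase $\gcd(n,s)=2$ with $v_2(n)=1$. But your remedy of ``specialising $k$'' cannot close it: the statement is asserted for \emph{every} $k$ in the range $2\le k\le tn/2$, so you are not free to choose $k$; and when $n=2$ (so $n/2=1$) the right-idealiser constraint imposes no restriction at all. Explicitly, for $n=2$, $s=4$, $e=1$ and arbitrary $k$, the nuclear tuple of $D_{2,4,k}(\gamma,\FF)$ is $(p^{8kt},p,p,p^{4t},p)$, and this coincides with the nuclear tuple of an ATLRS code with Frobenius exponent $j=1$ (indeed $\gcd(8,1)=\gcd(8,4k-1)=\gcd(4,1)=1$). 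Thus nuclear parameters alone cannot separate the two families in this subcase, and any complete argument would need an invariant beyond those in Proposition~\ref{prop:idequiv}. For what it is worth, the paper's own proof writes $e\gcd(s,n)$ at the corresponding step, whereas the computation in fact gives $e\gcd(s,n/2)$, so the published argument does not cover this edge case either; your proposal is more careful in spotting the issue, but does not resolve it.
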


\begin{proof} We will prove this result by systematically using Proposition \ref{prop:idequiv}, which states that the nuclear parameters are invariant under code equivalence.
Recall that the nuclear parameters of LRS, ATLRS, and TLRS codes of TZ-type in $\bigoplus_{i=1}^t M_n(\F_{q^s})$ are given, respectively, by
\begin{equation} \label{eq:LRSfqs}
\left(p^{tnkes},\, p^{nes},\, p^{nes},\, p^{ets},\, p^{es}\right),
\end{equation}
\begin{equation}\label{eq:ATLRSfqs}
\left(p^{tnkes},\, p^{\gcd(nes,j)},\, p^{\gcd(nes,kes-j)},\, p^{ets},\, p^{\gcd(es,j)}\right),
\end{equation}
for some $0 \leq j < nse$, and
\begin{equation}\label{eq:TZLRSfqs}
\left(p^{tnkes},\, p^{nes/2},\, p^{nes/2},\, p^{ets},\, p^{es}\right),
\end{equation}
where $q = p^e$.

\begin{enumerate}[i)]
    \item Let us first consider $S_{n,s,k}(0,\rho,\mathbf{F}) = S_{n,s,k}(0,\mathrm{id},\mathbf{F})$. 
    By Theorem~\ref{th:NewidealizersATLRS}, its nuclear parameters are
    \[
    \left(p^{tnkes},\, p^{ne},\, p^{ne},\, p^{ets},\, p^{e}\right).
    \]
    Suppose that $S_{n,s,k}(0,\rho,\mathbf{F})$ is equivalent to an LRS code in $\bigoplus_{i=1}^t M_n(\F_{q^s})$. 
    Then, comparing the left idealizers, we would obtain
    \[
    p^{ne} = p^{nes},
    \]
    which is impossible since $s > 1$.  
    Next, assume that $S_{n,s,k}(0,\rho,\mathbf{F})$ is equivalent to an ATLRS code in $\bigoplus_{i=1}^t M_n(\F_{q^s})$. 
    Comparing nuclear parameters of Eq. \eqref{eq:ATLRSfqs}, we get
    \[
    \begin{cases}
    p^{tnkes} = p^{tnkes},\\
    p^{ne} = p^{\gcd(nes,j)},\\
    p^{ne} = p^{\gcd(nes,kes-j)},\\
    p^{ets} = p^{ets},\\
    p^{e} = p^{\gcd(es,j)}.
    \end{cases}
    \]
    From the second equation, we deduce $j = ne j'$ for some positive integer $j'$ with $\gcd(j',s) = 1$. 
    Using the last equation, this implies $e = e\gcd(s,n)$, which contradicts the assumption $\gcd(s,n) > 1$.  
    Finally, comparing with TLRS codes of TZ-type, we see that equality of the centers would imply $p^e = p^{es}$, again impossible since $s > 1$.  
    Hence, $S_{n,s,k}(0,\rho,\mathbf{F})$ is not equivalent to any of these known codes.

\item {
Now consider { a code   $S_{n,s,k}(\eta,\rho,\mathbf{F})$, where $\eta \neq 0$ and $\rho \in \Aut(\F_{q^n})$}. 
    By Theorem~\ref{th:NewidealizersATLRS}, the nuclear parameters of a code in this family are
    \[
    \left(p^{tnkes},\, p^{\gcd(ne,h)},\, p^{\gcd(ne,kes-h)},\, p^{ets},\, p^{\gcd(e,h)}\right),
    \]
    where $0 \leq h < ne$ is such that $\rho(y)=y^{p^h}$, for every $y \in \F_{q^n}$.  
    If it were equivalent to an LRS code in $\bigoplus_{i=1}^t M_n(\F_{q^s})$, then we would have
    \[
    p^{\gcd(ne,h)} = p^{nes},
    \]
    which is impossible.  
    Suppose instead it were equivalent to an ATLRS code of TZ-type. 
    Then
    \[
    \begin{cases}
    p^{tnkes} = p^{tnkes},\\
    p^{\gcd(ne,h)} = p^{\gcd(nes,j)},\\
    p^{\gcd(ne,kes-h)} = p^{\gcd(nes,kes-j)},\\
    p^{ets} = p^{ets},\\
    p^{\gcd(e,h)} = p^{\gcd(es,j)}.
    \end{cases}
    \]
    Let us choose $\rho$ such that $h = \gcd(n,s) > 1$, and we show that in this case $S_{n,s,k}(\eta,\rho,\mathbf{F})$ cannot be equivalent to an ATLRS code of TZ-type. So, we have $\gcd(ne,h) = h$ and, from the second equation, we derive that $h \mid j$. 
    Since $h \mid s$ and $h \mid j$, it follows that $h \mid \gcd(es,j)$. 
    From the fifth equation, this equals $\gcd(e,h)$, hence $h = \gcd(n,s) \mid e$, contradicting the assumption that $\gcd(n,s) \nmid e$.  
    Finally, comparing with TLRS codes, the equality between the cardinalities of the centers would yield $p^{\gcd(e,h)} = p^{es}$, which is again impossible.  
    Thus, $S_{n,s,k}(\eta,\rho,\mathbf{F})$ is also not equivalent to any code in one of the known families.}
    \item Finally, consider $D_{n,s,k}(\gamma,\mathbf{F})$, whose nuclear parameters are
    \[
    \left(p^{enst},\, p^{ne/2},\, p^{ne/2},\, p^{ets},\, p^e\right).
    \]
    As before, $D_{n,s,k}(\gamma,\mathbf{F})$ is clearly not equivalent to an LRS code or a TLRS code of TZ-type. 
    Suppose it were equivalent to an ATLRS code in $\bigoplus_{i=1}^t M_n(\F_{q^s})$. 
    Then
    \[
    \begin{cases}
    p^{tnkes} = p^{tnkes},\\
    p^{ne/2} = p^{\gcd(nes,j)},\\
    p^{ne/2} = p^{\gcd(nes,kes-j)},\\
    p^{ets} = p^{ets},\\
    p^{e} = p^{\gcd(es,j)}.
    \end{cases}
    \]
    From the second equation, $j = (ne/2) j'$ for some positive integer $j'$ with $\gcd(j',2s) = 1$. 
    Combining this with the last equation would again imply $e = e\gcd(s,n)$, contradicting the assumption that $\gcd(s,n) > 1$.  
    Hence, $D_{n,s,k}(\gamma,\mathbf{F})$ cannot be equivalent to any code belonging to one of  the known families.
\end{enumerate}
\end{proof}

Finally, it remains to compare the families $S_{n,s,k}(\eta,\rho,\mathbf{F})$ and $D_{n,s,k}(\gamma,\mathbf{F})$ with each other.

\begin{theorem} \label{th:inequivalencenew}
The codes $S_{n,s,k}(\eta,\rho,\mathbf{F})$ and $D_{n,s,k}(\gamma,\mathbf{F})$ are not equivalent for all 
$1 < k \leq tn/2$ and $s \geq 3$ such that $n \nmid sk$. 
\end{theorem}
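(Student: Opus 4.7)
\medskip

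\noindent\textbf{Proof plan.} The approach is to leverage the nuclear parameters as code invariants via Proposition \ref{prop:idequiv}, exactly as in the proof of Theorem \ref{th:inequivalenceold}. The relevant values, for codes sitting in $\bigoplus_{i=1}^t M_n(\F_{q^s})$ with $q=p^e$, are gathered in Table \ref{tab:parameters}:
\[
S_{n,s,k}(\eta,\rho,\mathbf{F})\colon\;\bigl(p^{tnkes},\,p^{\gcd(ne,h)},\,p^{\gcd(ne,kes-h)},\,p^{ets},\,p^{\gcd(e,h)}\bigr),
\]
where $\rho(y)=y^{p^h}$ and $0\le h<ne$ (with the convention $h=0$ for the case $\eta=0$, giving $p^{ne}$ in place of $p^{\gcd(ne,h)}$), and
\[
D_{n,s,k}(\gamma,\mathbf{F})\colon\;\bigl(p^{tnkes},\,p^{ne/2},\,p^{ne/2},\,p^{ets},\,p^{e}\bigr).
\]
Since $s\geq 3$ and $k\geq 2$, the hypotheses $ks>2$ and $ks\geq 2$ required by Theorems \ref{th:NewidealizersATLRS} and \ref{th:nuclearextenzioTZ} are fulfilled, so the above parameters are indeed code invariants. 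Note also that $D_{n,s,k}(\gamma,\mathbf{F})$ only exists when $n$ is even (and $q$ odd), so this assumption is implicit throughout.

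\medskip

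First I would dispose of the degenerate case $\eta=0$: here the left idealizer of $S_{n,s,k}(0,\rho,\mathbf{F})$ has size $p^{ne}$, whereas that of $D_{n,s,k}(\gamma,\mathbf{F})$ has size $p^{ne/2}$, and since $n\geq 2$ these are distinct. Thus the two codes cannot be equivalent, and we are left with the main case $\eta\neq 0$.

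\medskip

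Assume then that $S_{n,s,k}(\eta,\rho,\mathbf{F})$, with $\eta\neq 0$ and $\rho(y)=y^{p^h}$, is equivalent to $D_{n,s,k}(\gamma,\mathbf{F})$. Matching nuclear parameters yields the system
\[
\gcd(ne,h)=\tfrac{ne}{2},\qquad \gcd(ne,kes-h)=\tfrac{ne}{2},\qquad \gcd(e,h)=e.
\]
From the third equation, $e\mid h$, so write $h=e\ell$. The first equation then gives $e\gcd(n,\ell)=ne/2$, i.e.\ $\gcd(n,\ell)=n/2$, which forces $\ell=(n/2)\ell'$ with $\ell'$ odd; the bound $0\le h<ne$ finally imposes $\ell'=1$, hence $h=ne/2$. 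Substituting into the middle equation, we need
\[
\gcd\!\bigl(ne,\;kes-\tfrac{ne}{2}\bigr)=\tfrac{ne}{2}.
\]
This forces $\tfrac{ne}{2}$ to divide $kes$ and $\tfrac{kes-ne/2}{ne/2}=\tfrac{2ks}{n}-1$ to be an odd integer, equivalently $\tfrac{2ks}{n}$ to be an even integer, equivalently $n\mid ks$. This contradicts the hypothesis $n\nmid sk$, completing the proof.

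\medskip

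The argument is essentially arithmetic: the technical core is the chain of divisibility deductions from the three matching conditions on left idealizer, right idealizer, and center, leading inexorably to $n\mid sk$. The main obstacle, though mild, is keeping track of the two subcases that arise in analyzing $\gcd(ne,kes-ne/2)=ne/2$, namely whether $n\mid 2ks$ fails (so $ne/2\nmid kes-ne/2$) or $n\mid 2ks$ holds with $n\nmid ks$ (so the quotient is even and the gcd jumps to $ne$ instead of $ne/2$); in either case the equality cannot hold, which is precisely what makes the hypothesis $n\nmid sk$ the sharp condition for inequivalence.
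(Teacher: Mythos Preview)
Your proof is correct and follows essentially the same approach as the paper: both compare the nuclear parameters from Table~\ref{tab:parameters}, deduce from the left-idealizer equation that $h=ne/2$, and then from the right-idealizer equation that $n\mid sk$, contradicting the hypothesis. Your version is slightly more detailed---you treat the case $\eta=0$ separately and route through the center condition $\gcd(e,h)=e$ before pinning down $h$---whereas the paper extracts $h=ne/2$ directly from $\gcd(ne,h)=ne/2$ and the bound $0\le h<ne$, without using the center; otherwise the arguments coincide.
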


\begin{proof}
    Comparing the nuclear parameters of these two codes, we obtain
    \[
    \begin{cases}
    p^{tnkes} = p^{tnkes},\\
    p^{ne/2} = p^{\gcd(ne,h)},\\
    p^{ne/2} = p^{\gcd(ne,kes-h)},\\
    p^{ets} = p^{ets},\\
    p^{e} = p^{\gcd(e,h)}.
    \end{cases}
    \]
    From the second equation, we must have $\gcd(ne,h) = ne/2$, which forces $h = ne/2$.  
    Substituting into the third parameter yields
    \[
    \gcd(ne,\, ske - ne/2) = ne/2.
    \]
    This implies $ske - ne/2 = g\,ne/2$ for some odd integer $g$. Hence,
    \[
    sk = \frac{(g-1)n}{2}.
    \]
    Since $g-1$ is even, this equality contradicts the assumption $n \nmid sk$.  
    Therefore, $S_{n,s,k}(\eta,\rho,\mathbf{F})$ and $D_{n,s,k}(\gamma,\mathbf{F})$ are not equivalent. 
\end{proof}

\begin{remark}
  Note that, given an $s$-admissible tuple $\mathbf{F}$, by Eq. \eqref{eq:ArtinWedderburnfinite} we obtain that  
\[
R / R H_{\FF}(x^n) \cong \bigoplus_{i=1}^t M_n(\F_{q^s}).
\]
Clearly, if we consider a different $s$-admissible tuple $\mathbf{F}'$, the corresponding quotient ring $R / R H_{\FF'}(x^n)$ is still isomorphic to the same ambient algebra $\bigoplus_{i=1}^t M_n(\F_{q^s})$. 
Thus, when studying the equivalence between known families of MSRD codes, one could compare codes 
\[
\C_1 \subseteq R / R H_{\FF}(x^n) \quad \text{and} \quad \C_2 \subseteq R / R H_{\FF'}(x^n).
\]
However, since the proofs of \Cref{th:inequivalenceold} and \Cref{th:inequivalencenew} depend only on the nuclear parameters of the codes, it is immediate that the results remain valid even when the codes are defined over different quotient rings $R / R H_{\FF}(x^n)$ and $R / R H_{\FF'}(x^n)$.

\end{remark}

\section*{Acknowledgments}
This research was partially supported by the European Union under the Italian National Recovery and Resilience Plan (NRRP) of NextGenerationEU, with particular reference to the partnership on ”Telecommunications of the Future” (PE00000001 - program ”RESTART”, CUP: D93C22000910001) and by the Italian National Group for Algebraic and Geometric Structures and their Applications (GNSAGA - INdAM). A. Neri is supported by the INdAM - GNSAGA Project CUP E53C24001950001  ``Noncommutative polynomials in coding theory''. 

\bibliographystyle{abbrv}
\bibliography{biblio}
\end{document}